\newtheorem{thm}{Theorem}[section]
\newtheorem{prop}[thm]{Proposition}
\newtheorem{cor}[thm]{Corollary}
\newtheorem{lem}[thm]{Lemma}
\newtheorem{rem}[thm]{Remark}
\numberwithin{equation}{section}
\def\A{{\mathbb A}}
\def\C{{\mathbb C}}
\def\N{{\mathbb N}}
\renewcommand{\P}{{\mathbb P}}
\def\Q{{\mathbb Q}}
\def\Z{{\mathbb Z}}
\def\R{{\mathbb R}}
\def\cA{{\mathcal A}}
\def\cD{{\mathcal D}}
\def\cF{{\mathcal F}}
\def\cG{{\mathcal G}}
\def\cH{{\mathcal H}}
\def\cI{{\mathcal I}}
\def\cK{{\mathcal K}}
\def\cL{{\mathcal L}}
\def\cM{{\mathcal M}}
\def\cR{{\mathcal R}}
\def\cS{{\mathcal S}}
\def\cV{{\mathcal V}}
\def\cX{{\mathcal X}}
\def\cY{{\mathcal Y}}
\def\m{{\mathfrak m}}
\def\fH{{\mathfrak H}}
\def\fg{{\mathfrak g}}
\title{Algebraic renormalization and Feynman integrals in configuration spaces}
\author{\"Ozg\"ur Ceyhan and Matilde Marcolli}
\address{Facult\'e des Sciences, de la Technologie et de la Communication, University of Luxembourg, 6 rue Richard Coudenhove-Kalergi, L-1359 Luxembourg}
\email{ozgur.ceyhan@uni.lu}
\address{Division of Physics, Mathematics and Astronomy,
Mail Code 253-37, Caltech, 1200 E.~California Blvd. Pasadena, CA 91125, USA}
\email{matilde@caltech.edu} 
\date{\today}
\begin{document}
\maketitle

\begin{abstract}
This paper continues our previous study of Feynman integrals in configuration
spaces and their algebro-geometric and motivic aspects. We consider here both 
massless and massive Feynman amplitudes, from the point of view of potential
theory. We consider a variant of the wonderful compactification of configuration
spaces that works simultaneously for all graphs with a given number of vertices
and that also accounts for the external structure of Feynman graph. As in our
previous work, we consider two version of the Feynman amplitude in configuration
space, which we refer to as the real and complex versions. In the real version,
we show that we can extend to the massive case a method of evaluating
Feynman integrals, based on expansion in Gegenbauer polynomials, that
we investigated previously in the massless case. In the complex setting,
we show that we can use algebro-geometric methods to renormalize the
Feynman amplitudes, so that the renormalized values of the Feynman integrals
are given by periods of a mixed Tate motive. The regularization and renormalization
procedure is based on pulling back the form to the wonderful compactification
and replace it with a cohomologous one with logarithmic poles. A complex
of forms with logarithmic poles, endowed with an operator of pole subtraction,
determine a Rota--Baxter algebra on the wonderful compactifications.
We can then apply the renormalization procedure via Birkhoff factorization,
after interpreting the regularization as an algebra homomorphism from the 
Connes--Kreimer Hopf algebra of Feynman graphs to the Rota--Baxter algebra.
We obtain in this setting a description of the renormalization group.
We also extend the period interpretation to the case of  Dirac fermions
and gauge bosons.
\end{abstract}

\tableofcontents

\section{Introduction}

In \cite{CeyMar1} and \cite{CeyMar2} we started a program aimed at studying the
relation between Feynman integrals in configuration space and periods and motives 
of algebraic varieties. We continue here this investigation by extending the previous
results in different directions. 

In \S \ref{ConfigSec} we introduce a variant of the configuration spaces
we worked with in the past. Instead of working here with a different
configuration space ${\rm Conf}_\Gamma(X)$ for each Feynman graph $\Gamma$,
we consider a single configuration space for all graphs with a fixed number of
vertices. The configuration spaces themselves, and their wonderful compactifications
$\cX_{\cH}[n]$,
are different from the setting we previously considered also because they
incorporate some assigned data of external structure of Feynman graphs,
given by imposing the condition that the configurations of points are away from
a prescribed locus $\cH$. This type of configuration spaces, with their compactifications
were previously considered by Kim and Sato, \cite{KS} and we simply we recall their 
main properties in our setting.

In \S \ref{FeySec} we describe the Feynman rules, the resulting Euclidean Feynman
amplitudes, and their unrenormalized Feynman integrals in configuration
spaces, distinguishing between the real and the complex case and between
the massless and the massive case.

In \S \ref{PotSec} we consider the massless and massive propagators in
configuration and momentum space and their description in terms of
potential theory, using Riesz and Bessel potentials and we justify in this
way more rigorously the definition of the Feynman amplitudes given in the
previous section. We also describe explicitly expansions of the massive
Feynman amplitudes based on expansions of modified Bessel functions
of the second type. 

In \S \ref{GegenSec} we focus on the massive real case. We extend to
the massive case 
the method we developed in \cite{CeyMar2} to compute massles
Euclidean Feynman integrals in the real case, using expansion of the
Feynman amplitude in Gegenbauer polynomials and a subdivision of
the chain of integration in loci constructed using acyclic orientations
of the Feynman graph. We use the expansion of the massive
Feynman amplitude obtained in the previous section using
expansions of the Bessel function, and we show that we can
further expand the terms obtained in this way in Gegenbauer
polynomials so that the same method developed in \cite{CeyMar2}
applies.

In \S \ref{RBSec} we consider instead the complex case. We show that
this case fits into the formalism of algebraic renormalization based on
Hopf algebras of Feynman graphs and Rota--Baxter algebras of weight
$-1$ as the target of the regularized Feynman amplitudes, or algebraic
Feynman rules. We consider a regularization procedure for the complex
Feynman amplitudes, similar to the case of \cite{CeyMar2}, obtained
by pulling back the smooth closed differential form given by the Feynman
amplitude, with singularities along the diagonals, to a form on the
wonderful compactification of the configuration space, with singularities
along the boundary divisor, and by further replacing it with a cohomologous
algebraic differential form with logarithmic poles along the same normal
crossings divisor. We then show that one can define an associative
algebra of even forms with logarithmic poles on which there is a 
linear operator of ``pole subtraction" that satisfies the Rota--Baxter relation
of weight $-1$. The regularized amplitude defines a homomorphism
of commutative algebras from the Connes--Kreimer Hopf algebra of
Feynman graphs to this Rota--Baxter algebra, hence we can renormalize
it using the Birkhoff factorization determined by the Hopf coproduct
and antipode, on one side, and the Rota--Baxter operator on the other.
We show that this renormalization method also provides a beta function
for the renormalization group.

In \S \ref{nonscalarSec} we show that the setting of \S \ref{FeySec}
and \S \ref{PotSec} can be generalized to include the cases of
Dirac fermions and gauge bosons.

\section{Configuration space of points}\label{ConfigSec}

Our  study of a algebro-geometric regularization procedures for  
position space Feynman integrals and their interpretation in terms of 
periods is based on wonderful compactifications  and
deformations of configuration space of points, where the
configuration spaces considered depend on the Feynman graphs
of the quantum field theory. 

The geometry of the wonderful compactifications of these graph configuration spaces   
was examined in detail in our previous work \cite{CeyMar1} and \cite{CeyMar2}, see
also \cite{Li1}, \cite{Li2} and \cite{Martin}.

In this paper, we change this point of view slightly and we consider a fixed configuration
space that corresponds to the case of the complete graph, which provides  an ``all in one"
tool to examine Feynman integrals associated to all possible graphs. This means that,
for any given Feynman amplitude, one will typically be performing more blowups than
what would be strictly necessary, but with the advantage of having a single space that accommodates
all the amplitudes at once. This configuration
space is closely related to the original Fulton--MacPherson case \cite{FM}, which also 
was based on the case of the complete graph, but the version we consider also
contains conditions aimed at encoding the data of external structure of Feynman graphs,
expressed in the form of certain ``separation conditions", as in Kim--Sato, \cite{KS}.
We recall here the main definitions and statements that we will need in the following.
The arguments are very similar to the case previously analyzed in  \cite{CeyMar2} and
\cite{KS}, so we do not reproduce them in full.

\medskip
\subsection{Configuration spaces with separation conditions}

We consider spaces describing configurations of $n$ distinct points on a
variety that are avoid a specified subvariety. 

\smallskip

Let $\cX$ a smooth projective variety and let $\cH = \bigcup_{c} \cH_{c}$ 
be a nonsingular closed proper subvariety of $\cX$ where  
$\cH_c$  are  irreducible  components of $\cH$ and 
$c \in \underline{k}^+ := \{1,\dots,k,\infty\} $.

The {\it configuration space} $\cF(\cX \setminus \cH,n)$ is the complement 
\begin{equation} \label{Zconfig}
(\cX \setminus \cH)^n 
\setminus 
\left ( 
\bigcup_{ \{i,j\} \subset \underline{n}} \Delta_{\{i,j\}}  
\right), 
\end{equation}
of the {\it  diagonals}
\begin{eqnarray}\label{Diag}
\Delta_{\{i,j\}} = \{( z_i \mid i \in \underline{n}) \in \cX^n \mid  z_i = z_j \}, 
\end{eqnarray}
associated to the pairs $\{i,j\}$ in $\underline{n} := \{1,\dots,n\}$ 
in the cartesian product $(\cX \setminus \cH)^n$.
 
\medskip 
\subsection{Wonderful compactification of configuration spaces}
We now describe the various steps involved in the construction of
wonderful compactifications for the configuration spaces $\cF(\cX \setminus \cH,n)$.

\medskip
\subsubsection{Diagonals and subvarieties}
We start by considering the following collection of subvarieties of $\cX^n$:

\begin{itemize}
\item Given a non-empty subset $S$ of $\underline{n}$ and 
a subset $\alpha$ of  $\underline{k}^+$,  
we have  a subvariety 
\begin{equation}\label{SubspInf}
\Delta_{\alpha,S} = \{( z_i \mid i \in \underline{n}) 
\mid \forall i \in S, \, \exists c\in \alpha \, \text{ with } \, z_i \in \cH_c \}
\end{equation} 
\item  For subsets $I \subset \underline{n}$ with
$|I|>1$, we have the corresponding diagonals 
\begin{equation}\label{DiagI}
\Delta_{I} = \{( z_i \mid i \in \underline{n}) \mid z_i = z_j \ \text{for} \  i,j \in I \}. 
\end{equation}
\end{itemize}

\medskip
\subsubsection{First step: arrangements and configurations separated from $\cH$}
The subvarieties   $\Delta_{\{c\},S}$ for $c \in \underline{k}^+$ are isomorphic to 
$$
 \Delta_{\{c\},S} \cong  \cH_c^S \times \cX^{\underline{n} \setminus S} .
$$
The collection $\cS_n = \{ \cap_{\alpha ,S} \Delta_{\alpha,S} \}$ of all possible intersections
of subvarieties $\Delta_{\alpha,S}$  forms a  simple arrangement in $\cX^n$. 

 \begin{lem}[Kim \& Sato \cite{KS}, \S 2.2]
The set $\cG_n = \{ \Delta_{\{c\},S} \mid c \in \underline{k}^+ \ \& \ S \subset \underline{n} \}$
is a building set for the arrangement $\cS_n$. 
\end{lem}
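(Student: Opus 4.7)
The plan is to verify directly the defining axioms of a building set in the sense of De Concini--Procesi, as used in Kim--Sato. The main simplification comes from the hypothesis that $\cH$ is nonsingular, which forces its irreducible components $\cH_c$ to be pairwise disjoint; this lets us reduce the combinatorics to a bookkeeping of partial functions $\phi : \underline{n} \dashrightarrow \underline{k}^+$ assigning to each point of the configuration an optional component of $\cH$.

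First I would identify the irreducible elements of $\cS_n$. Each $\Delta_{\alpha,S}$ equals $\cH_\alpha^S \times \cX^{\underline{n}\setminus S}$ with $\cH_\alpha = \bigsqcup_{c \in \alpha} \cH_c$, so it decomposes as a disjoint union of pieces $\prod_{i \in S} \cH_{f(i)} \times \cX^{\underline{n}\setminus S}$ as $f$ ranges over functions $S \to \alpha$. Intersections of such subvarieties therefore break up into irreducible components of the form $\Delta_\phi := \prod_{i} F_{\phi(i)}$, where $F_{\phi(i)}$ equals $\cH_{\phi(i)}$ when $\phi(i)$ is defined and $\cX$ otherwise.

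Second, for each such $\Delta_\phi \in \cS_n$ I would determine the minimal elements of $\cG_n$ containing it. Since $\Delta_{\{c\}, S} \supseteq \Delta_\phi$ iff $\phi(i) = c$ for every $i \in S$, the maximal admissible $S$ for each $c$ in the image of $\phi$ is $S_c := \phi^{-1}(c)$, and the minimal $\cG_n$-elements containing $\Delta_\phi$ are thus $\{\Delta_{\{c\}, S_c}\}_{c \in \mathrm{Im}(\phi)}$. Their intersection is manifestly $\Delta_\phi$, and because the coordinate index sets $S_c$ are pairwise disjoint the intersection is transverse: the normal bundles decompose as a direct sum and the codimensions add correctly, namely $\sum_c |S_c|\,\mathrm{codim}\,\cH_c = \mathrm{codim}\,\Delta_\phi$. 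This establishes the transverse-intersection axiom and simultaneously the closure-under-factors axiom, since each factor $\Delta_{\{c\},S_c}$ is itself in $\cG_n$ by definition.

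The main obstacle is identifying the irreducible components of the generally reducible varieties $\Delta_{\alpha, S}$ for $|\alpha|>1$ and ensuring that arbitrary intersections in $\cS_n$ really do decompose into pieces indexed by partial functions. The nonsingularity of $\cH$, together with the disjointness of its components, makes this transparent; with the combinatorial dictionary in place, the rest follows the argument of \cite{KS}, \S 2.2, which we do not reproduce.
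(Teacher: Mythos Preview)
The paper does not prove this lemma; it is stated purely as a citation to Kim--Sato \cite{KS} with no argument supplied. Your direct verification of the building set axioms is correct and, in fact, already complete: once you have identified the irreducible elements of $\cS_n$ with the $\Delta_\phi$ indexed by partial functions $\phi:\underline{n}\dashrightarrow\underline{k}^+$, and exhibited the $\cG_n$-factors of $\Delta_\phi$ as the family $\{\Delta_{\{c\},S_c}\}_{c\in\mathrm{Im}(\phi)}$ with pairwise disjoint index sets $S_c=\phi^{-1}(c)$, the transversality and intersection conditions follow exactly as you say. Your closing deferral to \cite{KS} is therefore unnecessary---there is no ``rest'' left to reproduce, and invoking the very reference from which the lemma is quoted would in any case make the argument circular.
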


\medskip
\subsubsection{Blow-up construction of the first step in the compactification}

We introduce a partial ordering $\preceq$ on
$\cG_n$ according to the dimension of subspaces i.e., $\Delta_{\{c\},S} \preceq \Delta_{\{c'\},S'}$
when $\dim (\Delta_{\{c\},S}) \leq \dim(\Delta_{\{c'\},S'})$. 

Let $\cY_0 = \cX^n$ and let $\cY_k$ be the blow up of $\cY_{k-1}$ along  the iterated dominant
transform of those subvarieties $\Delta_{\{c\},S}$ whose dimension is $k$.  We simply
set $\cY_k = \cY_{k+1}$ if there are no subspaces  $\Delta_{\{c\},S}$  of dimension $k$. The 
resulting sequence of blow ups
\begin{equation}\label{Yks}
\cY_m \longrightarrow \cdots \longrightarrow \cY_1 \longrightarrow \cY_0 = \cX^n
\end{equation}
terminates when  it reaches the maximal dimension of the subvarieties $\Delta_{\{c\},S}$, that is,  
$$m= \dim(\cX) \cdot (n-1) + \max_c (\dim (\cH_c)).$$ Each step of this iteration procedure 
is independent  on the order in which the blowups are performed along the (iterated) 
dominant transforms of the subspaces $\Delta_{\{c\},S}$.  Thus, the  intermediate varieties 
$\cY_k$ in the sequence \eqref{Yks} are all well defined. 

The variety 
$$
 \cX_\cH^{[n]} := \cY_{m}
$$ 
obtained through this sequence of iterated blowups is  the 
wonderful compactification of the arrangement $\cS_n$. It is a
compactification of the configuration space of $n$ labelled 
(not necessarily distinct) points in $\cX$ 
that keeps the  points  away from the subvariety $\cH$.

\medskip
\subsubsection{Second step: Configurations of distinct points}
Let $\tilde{\Delta}_I \subset \cX_\cH^{[n]}$ denote the proper transform of $\Delta_I$.

\begin{lem}[Kim \& Sato \cite{KS},  Proposition 4]
The collection of all subspaces  $\{ \tilde{\Delta}_I  \mid I \subset \underline{n}, |I|>1\}$
is a building set of a subspace arrangement in  $\cX_\cH^{[n]}$. 
\end{lem}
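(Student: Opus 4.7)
The plan is to invoke the general theory of sequential building-set blowups developed by De~Concini--Procesi and extended by Li in \cite{Li1}, \cite{Li2}: if $\cG$ is a building set for an arrangement $\cS$ in a smooth ambient variety, and one blows up a subcollection $\cG' \subset \cG$ in an admissible order (for instance, by increasing dimension), then the iterated dominant transforms of the elements of $\cG \setminus \cG'$ form a building set for the dominant transform of $\cS$ in the resulting variety. The goal is to specialize this to $\cG' = \cG_n$ and $\cG \setminus \cG' = \{ \Delta_I \mid |I|>1 \}$, identifying the partial blowup with the construction of $\cX_\cH^{[n]}$ in \eqref{Yks}.

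To apply this, I would first verify that the combined collection
\[
\cG := \cG_n \cup \{\Delta_I \mid |I|>1\}
\]
is a building set for the arrangement $\cS$ it generates in $\cX^n$. Each sub-collection is already known to have this property for its own arrangement: $\cG_n$ by the previous Kim--Sato lemma, and $\{\Delta_I\}$ by the classical Fulton--MacPherson analysis \cite{FM}. What remains is to handle mixed intersections $\Delta_{\{c\},S} \cap \Delta_I$. If $I \cap S \neq \emptyset$, such an intersection equals $\Delta_{\{c\}, S \cup I} \cap \Delta_I$, whose minimal $\cG$-covers are precisely $\Delta_{\{c\}, S \cup I}$ and $\Delta_I$; if $I \cap S = \emptyset$, the two subvarieties already meet transversally in $\cX^n$ because $\cH \subset \cX$ is smooth and the diagonals are coordinate-type subvarieties of the cartesian product. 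In both cases the local product/transversality condition defining a building set is satisfied.

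With this in place, the sequential blowups \eqref{Yks} producing $\cX_\cH^{[n]}$ are exactly the partial wonderful blowup of $\cG_n \subset \cG$ in increasing-dimension order, which is an admissible order since lower-dimensional elements of $\cG_n$ are contained in higher-dimensional ones of $\cG$. Applying the partial-blowup theorem, the iterated dominant transforms $\tilde\Delta_I$ in $\cX_\cH^{[n]}$ form a building set for the dominant transform of the arrangement they generate. As no $\Delta_I$ is contained in any $\Delta_{\{c\},S}$, the dominant transform coincides with the proper transform, which yields the statement. The main obstacle is the mixed transversality check at the level of $\cX^n$: one must use the smoothness of $\cH$ and the fact that the restriction of $\cH^S$ to a diagonal $\Delta_I$ is itself a smooth subvariety of the expected codimension, in order to legitimately conclude that $\Delta_{\{c\},S}$ and $\Delta_I$ meet cleanly. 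Once this is settled, the rest is a formal consequence of the general partial-wonderful-blowup machinery.
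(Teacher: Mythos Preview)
The paper does not prove this lemma at all; it is quoted verbatim as Proposition~4 of Kim--Sato \cite{KS} and simply used as input for the subsequent blowup construction. So there is no ``paper's own proof'' to compare against; your proposal is an attempt to supply an argument where the authors chose to cite.

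That said, your proposed argument has a genuine gap. The combined collection
\[
\cG \;=\; \cG_n \cup \{\Delta_I : |I|>1\}
\]
is \emph{not} a building set in $\cX^n$, contrary to what you assert. Take $\cX=\P^2$, $\cH_c\subset\cX$ a line, $n=2$, $S=\{1\}$, $I=\{1,2\}$. The stratum $A=\Delta_{\{c\},\{1\}}\cap\Delta_{\{1,2\}}=\{(z,z):z\in\cH_c\}$ has codimension $3$ in $\cX^2$. The minimal elements of $\cG$ containing $A$ are $\Delta_{\{c\},\{1,2\}}$ (codimension $2$) and $\Delta_{\{1,2\}}$ (codimension $2$); neither is contained in the other since $\dim\cH_c>0$. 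Their intersection is exactly $A$, but the codimensions do not add: $2+2\neq 3$. So the transversality clause in Li's definition of building set (\cite{Li2}, Definition~2.2) fails, and the partial-wonderful-blowup theorem you wish to invoke does not apply to this $\cG$. In general the defect is $(|I|-1)\cdot\mathrm{codim}_\cX(\cH_c)$, which is nonzero whenever $|I|\geq 2$ and $\cH_c\subsetneq\cX$.

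The point is precisely that the diagonals and the $\cH$-incidence loci interact badly \emph{before} any blowups: once some $z_i$ with $i\in I\cap S$ is forced into $\cH_c$, the diagonal condition drags all of $I$ into $\cH_c$, and the two families of conditions become entangled rather than transversal. Kim--Sato's argument does not proceed by exhibiting a single building set in $\cX^n$; instead it verifies directly that \emph{after} the first round of blowups producing $\cX_\cH^{[n]}$, the proper transforms $\tilde\Delta_I$ have been separated from the exceptional divisors sufficiently that they now satisfy the building-set axioms in the new ambient variety. To repair your approach you would either have to follow that direct route, or else enlarge $\cG$ to include the mixed strata $\Delta_{\{c\},S}\cap\Delta_I$ themselves as building-set elements and redo the combinatorics accordingly.
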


Let $\cM_0 = \cX_\cH^{[n]}$ and let $\cM_{k+1}$ be the blow up of $\cM_{k}$ along  
the iterated dominant transform of $\tilde{\Delta}_{I}$ for $|I|-k$.  The resulting 
sequence of blow ups
\begin{equation}\label{Mks}
\cM_{n-1} \longrightarrow \cdots \longrightarrow \cM_1 \longrightarrow \cM_0 = \cX_\cH^{[n]}
\end{equation}
does not depend on the order in which the blowups are performed along the (iterated) 
dominant transforms of the subspaces $\tilde\Delta_{I}$, hence the 
intermediate varieties $\cM_k$ in the sequence \eqref{Mks} are 
all well defined.
The variety 
$$
\cX_\cH[n] : = \cM_{n-1}
$$ 
is a compactification of the configuration space of $n$ labelled 
points in $\cX$ that keeps the points
pairwise disjoint and away from the subvariety $\cH$.  Moreover, we 
have the following result.

\begin{thm}[Kim \& Sato \cite{KS},  Theorem 2] 
\label{Thm_Conf}
The wonderful compactifications obtained as above have the following properties:
\begin{enumerate}
\item The variety $\cX_\cH[n]$ is nonsingular.
\item The boundary 
$$
\cX_\cH[n] \setminus \left( (\cX \setminus \cH)^n \setminus  
( \bigcup_{ \{i,j\} \subset \underline{n}} \Delta_{\{i,j\}}) \right)
$$
is the union of exceptional divisors $D_{c,S}$ and $D_I$, corresponding to the subvarieties 
$\Delta_{c,S}$ for $c \in \underline{k}^+$, $|S|>0$, and the diagonals $\Delta_{I}$ for $|I| >1$. 
Any set of these divisors intersects transversally.
\item The intersection of boundary divisors $D_{c_1,S_1},\cdots,D_{c_a,S_a}, D_{I_1},\cdots,D_{I_b}$
is nonempty if and only if they are nested.
\end{enumerate}
\end{thm}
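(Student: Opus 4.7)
The plan is to obtain all three statements as consequences of the general De Concini--Procesi theory of wonderful models for arrangements of subvarieties, applied twice in succession, and then to argue that the two stages combine compatibly to yield a single normal crossings divisor with the claimed nesting description.

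First I would focus on the intermediate variety $\cX_\cH^{[n]}$. The preceding lemma asserts that $\cG_n = \{\Delta_{\{c\},S}\}$ is a building set for the arrangement $\cS_n$ in $\cX^n$. Applying the De Concini--Procesi blowup theorem to this building set gives at once that $\cX_\cH^{[n]}$ is smooth and that its boundary is a union of smooth exceptional divisors $D_{c,S}$, one for each element of $\cG_n$, meeting transversally, with a nonempty intersection $\bigcap_j D_{c_j,S_j}$ if and only if the collection $\{\Delta_{\{c_j\},S_j}\}$ is $\cG_n$-nested. This is the part (1)--(3) of the theorem restricted to the divisors of the first type. The order-independence of the blowups (stated in the text) is exactly the standard output of the theory, obtained by working up in dimension.

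Next I would apply the same machinery to the second stage. The second lemma states that $\{\tilde\Delta_I \mid |I|>1\}$ is a building set in $\cX_\cH^{[n]}$. Applying De Concini--Procesi once more yields that $\cX_\cH[n]=\cM_{n-1}$ is smooth, that the new exceptional divisors $D_I$ are smooth with transverse intersections, and that a collection of $D_I$'s meets nontrivially exactly when the index sets $\{I_j\}$ are nested in the classical Fulton--MacPherson sense. This gives parts (1)--(3) restricted to the second type of divisors.

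The remaining point, and the one I expect to be the real technical step, is to verify that after the second blowup the \emph{combined} family of divisors $\{D_{c,S}\}\cup\{D_I\}$ still satisfies transversality and that the nesting description extends to mixed collections. For this I would analyze the local picture: around a point of $\cX_\cH^{[n]}$, the divisors $D_{c,S}$ are cut out by monomial equations in a system of local coordinates adapted to the first building set, and the proper transforms $\tilde\Delta_I$ are smooth subvarieties transverse to the strata of the boundary (because the $\Delta_I$ are transverse to all $\Delta_{\{c\},S}$ away from deeper intersections, and the blowups are performed in order of increasing dimension, so each proper transform meets the created exceptional divisors cleanly). Blowing up such subvarieties preserves transversality with the existing boundary, so one obtains a simple normal crossings divisor on $\cX_\cH[n]$. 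For the nesting criterion, I would combine the two building-set nesting conditions by observing that $\tilde\Delta_I$ meets $D_{c,S}$ (after the first stage) precisely when $I$ and $(\{c\},S)$ interact in a controlled, nested way, and that this compatibility is preserved throughout the second sequence of blowups. The main obstacle is precisely this bookkeeping: assembling the two De Concini--Procesi pictures into one combined nesting condition, for which I would rely on the description of $\cX_\cH[n]$ as the wonderful model of the union building set $\cG_n \cup \{\tilde\Delta_I\}$, reducing the mixed case to a single application of the general theorem.
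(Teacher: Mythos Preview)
The paper does not give its own proof of this theorem: it is stated with attribution to Kim--Sato \cite{KS}, Theorem 2, and no argument is supplied beyond the citation. So there is no ``paper's proof'' to compare against in the strict sense; the relevant comparison is with the original source.

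Your strategy is the correct one and is essentially how Kim--Sato proceed. They work in Li's framework \cite{Li2} of wonderful compactifications for arrangements of subvarieties (the extension of De~Concini--Procesi from linear subspaces to smooth subvarieties), and the three assertions are exactly the standard output of that machinery once one knows that the relevant collections are building sets. Your two-stage presentation, first for $\cG_n$ in $\cX^n$ and then for $\{\tilde\Delta_I\}$ in $\cX_\cH^{[n]}$, mirrors the paper's two-step construction, and your worry about the ``mixed'' transversality and nesting is the only point with content. Your proposed resolution---view the whole thing as a single wonderful model for the union building set and invoke the general theorem once---is the cleanest way through and is what Kim--Sato effectively do: the two stages are just a convenient ordering of the blowups along a single building set, and Li's theorem guarantees that the end result, the normal crossings property, and the nesting criterion for nonempty intersections are independent of that ordering. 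One small caution: the relevant general theorem is Li's, not the original De~Concini--Procesi statement, since the $\Delta_{\{c\},S}$ and $\Delta_I$ are subvarieties rather than linear subspaces; you should cite \cite{Li2} (or the corresponding statement in \cite{KS}) rather than De~Concini--Procesi directly.
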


\section{Feynman integrals of Euclidean scalar QFTs}\label{FeySec}

In this section, we  describe Feynman amplitudes and unrenormalized Feynman integral
on the configuration space $\cF(\cX \setminus \cH,n)$.

\subsection{Feynman graphs} \label{graphSec1}

A {\em graph} $\Gamma$ is a  one-dimensional finite CW-complex. We denote 
the set of vertices of $\Gamma$ by $V_\Gamma$, the set of edges by 
$E_\Gamma$, and the boundary map by $\partial_{\Gamma}: E_{\Gamma} 
\to (V_{\Gamma})^2$. We write $\partial_\Gamma(e)=\{ s(e), t(e) \}$, the source 
and target vertices of the oriented edge $e$.  We denote the valency of a
vertex $v$ by $|v|$.

The set of {\it external vertices} $V^e_\Gamma$ is $\{v \in V_\Gamma\ \mid |v|=1\}$,
and the set of {\it internal vertices} $V^i_\Gamma$ is its complement $V_\Gamma
\setminus  V^e_\Gamma$.
The set of edges $E_\Gamma$ is the union of the set $E_\Gamma^e$ of 
{\it external edges},  that are the edges having   external vertices  as boundary vertices,   
and $E_\Gamma^i := E_\Gamma \setminus E_\Gamma^e$ is the set of 
{\it internal edges}.

A {\it  looping}  edge is an edge for which the two boundary vertices coincide. 
Since  the  position space Feynman amplitudes  would diverge everywhere in such 
cases, we exclude them and assume that  our graphs have no  looping edges.

\smallskip
\subsection{Feynman rules} \label{graphSec}

As we discussed already in \cite{CeyMar2}, we consider two different approaches to
defining the Feynman integrals in configuration space, which we refer to here as
the {\em real case} and the {\em complex case}. In the real case, one can choose
the manifold $\cX$ to be $\P^D(\C)$ and the physical $D$-dimensional spacetime 
manifold will be the real locus $\A^D(\R)$ inside an affine space 
$\A^D(\C) \subset \P^D(\C)$. In the {\em complex case} spacetime is complexified.
The manifold $\cX$ consists of a product $\P^D(\C)\times \P^D(\C)$ and the
spacetime is the middle dimensional subspace given by an affine space 
$\A^D(\C) \times \A^0$ lying in one of the two copies of $\P^D(\C)$. The two choices lead to
completely different methods of dealing with the resulting Feynman integrals,
due to the fact that, in the real case, the differential form describing the Feynman
amplitude is not a closed form, while in the complex case it is. We recall below
the two cases and the resulting Feynman amplitudes. 

\subsubsection{Feynman rules: real case} \label{SecRulesReal}
Let $\cX=\P^D(\C)$ and consider the real locus $\A^D(\R)$ in the affine part $\A^D(\C)$ 
as the physical space-time manifold. We also denote the hyperplane at infinity  
$\P^D \setminus \A^D$ by $\cH_\infty$.  
 
Given a Feynman graph $\Gamma$ with  $n$ internal and $k$ external vertices and no
looping edges, the Feynman rules assigns to $\Gamma$ a Feynman amplitude  as follows:

\begin{itemize}
\item The external vertices are labelled by fixed pairwise disjoint points 
$\cH_c = \{z_c\} \subset \cX$ with  $c \in \underline{k}$, which we refer to as
``external points";

\item  To each internal vertex $v_j$ one assigns a variable $z_j \in \cX$, so that
internal vertices are labelled by the set of points $(z_1,\dots, z_n) \in \cX^n$;

\item To each edge $e \in E_\Gamma$ with $\partial_\Gamma(e) = \{s(e),t(e)\}$, we 
assign a Euclidean {\it propagator}. This is given in the {\em massless} case by
\begin{equation}\label{RGm0}
G_{0,\R}(x_{s(e)}-x_{t(e)})= \frac{1}{\| x_{s(e)} - x_{t(e)} \|^{2\lambda}}, 
\ \ \  \text{ where } \ \ D=2\lambda+2.
\end{equation}
In the massive case the propagator has the form 
\begin{equation}\label{RGm}
G_{m,\R}(x_{s(e)}-x_{t(e)})= (2\pi)^{-(\lambda+1)} \, m^{\lambda}\, \|x_{s(e)}-x_{t(e)}\|^{-\lambda}\, 
\cK_{\lambda}(m \| x_{s(e)}-x_{t(e)}  \|),
\end{equation}
where $\cK_\nu(z)$ is the modified Bessel
function (Macdonald function, see \S 1.2 of \cite{AdHed}).
\end{itemize}

We will explain more in detail the origin of the propagators \eqref{RGm0}
and \eqref{RGm} in terms of potential theory in \S \ref{PotSec} below, using
the fact that the propagator \eqref{RGm0} is obtained by considering the
Green function of the {\em real} Laplacian on $\A^D(\R)$, which for
$D=2\lambda+2$ is given by
\begin{equation}\label{GreenReal}
  G_{\R,\Delta}(x-y) = \frac{1}{\| x-y \|^{2\lambda}}. 
\end{equation}  

\smallskip

The resulting Feynman amplitude associated to the graph $\Gamma$ in 
the massless real case is then the differential form
\begin{equation}\label{Romegam0}
\omega_{\Gamma,\R} = \prod_{e\in E_\Gamma} 
\frac{1}{\| x_{s(e)} - x_{t(e)} \|^{2\lambda}} 
\bigwedge_{v\in V_\Gamma} dx_v.
\end{equation}
This is a smooth differential form in degree equal to the middle dimension,
on the configuration space, $\omega_{\Gamma,\R} \in 
\Omega^{D n}(\cF(\cX\smallsetminus \cH,n))$, where we set 
$$
\cH  := \bigcup_{c \in \underline{k}^+} \cH_c.
$$
However, 
it is {\em not} a closed form, see Proposition 3.3 of \cite{CeyMar2}. Similarly,
the Feynman amplitude associated to the graph $\Gamma$ in 
the massive real case is
\begin{equation}\label{Romegam}
\omega_{\Gamma,\underline{m},\R} 
=(2\pi)^{-(\lambda+1)| E_\Gamma|}\, \prod_{e\in E_\Gamma} m_e^\lambda\,
 \|x_{s(e)}-x_{t(e)} \|^{-\lambda}\, \cK_{\lambda}(m_e \| x_{s(e)} - x_{t(e)} \|) \, \bigwedge_{v\in V_\Gamma} dx_v,
\end{equation}
where we allow the different edges to have different mass parameters, with
$\underline{m}=(m_e)_{e\in E_\Gamma}$.

\smallskip
\subsubsection{Chain of integration: real case}

In this real case, the chain of integration for the Feynman integral is given by
the middle dimensional cycle
\begin{equation}\label{chainR}
 \sigma_{\Gamma,\R} = \cX({\mathbb R})^{V_\Gamma^i} =\cX(\R)^n.
\end{equation}

\subsubsection{Feynman rules: complex case} \label{SecRulesComplex}

Let $\cX = \P^D(\C) \times \P^D(\C)$ with projection $p: \cX \to \P^D(\C)$,
$p: z=(x,y) \mapsto x$. The {\em complex} affine space $\A^D(\C)$ in
$p(\cX)$ is identified as the physical space-time in this case.

\begin{itemize}
\item The external vertices are labelled by fixed pairwise disjoint points 
$\cH_c = \{z_c=(x_c,y)\} \subset \cX$, with  $c \in \underline{k}$, and with $y$
a chosen base point in $\P^D(\C)$. 

\item  To each internal vertex $v_j$ one assigns a variable $x_j=p(z_j)$, with $z_j \in \cX$, so that
internal vertices are labelled by the set of points $(x_1,\dots, x_n) \in \P^D(\C)^n$;

\item To each edge $e\in E_\Gamma$, with $\partial(e)=\{ s(e), t(e) \}$, 
we consider the Euclidean {\em massless} edge propagator 
\begin{equation}\label{CGm0}
G_{0,\C} (x_{s(e)}-x_{t(e)})=\frac{- (D-2)!}{(2\pi i)^D} \frac{1}{\| x_{s(e)}- x_{t(e)} \|^{2D-2}},
\end{equation}
where $\| x_{s(e)} - x_{t(e)} \|=\| p(z)_{s(e)}- p(z)_{t(e)} \|$.
\end{itemize}

The resulting {\em massless} Feynman amplitude for a graph $\Gamma$ is then of the form
\begin{equation}\label{Comegam0}
\omega_{\Gamma,\C} = \left( \frac{- (D-2)!}{(2\pi i)^D} \right)^{E_\Gamma}
\prod_{e\in E_\Gamma} \frac{1}{\| x_{s(e)}- x_{t(e)} \|^{2D-2}} 
\bigwedge_{v \in V_\Gamma} dx_v \wedge d\bar x_v  .
\end{equation}
This is a {\em closed} differential form, of degree equal to the middle
dimension, in the configuration space
$\omega_{\Gamma,\C}\in \Omega^{2D n}(\cF(\cX \setminus \cH,n))$,
with $\cH =\cup_{c \in \underline{k}^+} \cH_c$.

\smallskip

In the {\em massive} case, with the complexified spacetime $\A^D(\C)$, the
edge propagator \eqref{CGm0} is replaced by the massive edge propagator 
\begin{equation}\label{CGm}
G_{m,\C} (x_{s(e)}-x_{t(e)})=(2\pi)^{-D}  m^{D-1}\, 
\|x_{s(e)}-x_{t(e)} \|^{-(D-1)}\, \cK_{D-1}(m \| x_{s(e)} - x_{t(e)} \|), 
\end{equation}
with $\cK_\nu(z)$ the modified Bessel function.

\smallskip

Thus, in the complexified case, the massive Feynman amplitude is given by
\begin{equation}\label{omegamZ}
\omega_{\Gamma,\underline{m},\C} =(2\pi)^{-D | E_\Gamma|}\, \prod_{e\in E_\Gamma} m_e^{D-1}\, 
 \|x_{s(e)}-x_{t(e)} \|^{-(D-1)}\, \cK_{D-1}(m \| x_{s(e)} - x_{t(e)} \|) \, 
 \bigwedge_{v\in V_\Gamma} dx_v \wedge d\bar x_v.
\end{equation}

\smallskip

The choice of propagators in this case arises from the fact that
on ${\mathbb A}^D({\mathbb C})$ the {\em complex} Laplacian 
$$ \Delta = \sum_{k=1}^D \frac{\partial^2}{\partial x_k \partial \bar{x}_k}  $$
has Green form
\begin{equation}\label{GreenComplex}
 G_{\C,\Delta}(x-y) = \frac{- (D-2)!}{(2\pi i)^D \| x-y \|^{2D-2}} .
\end{equation}
We will return to this more in detail in \S \ref{PotSec} below.

\smallskip
\subsubsection{Chain of integration: complex case}

In the complex case, the chain of integration for the Feynman integral is given by
the middle dimensional cycle
\begin{equation}\label{chainR}
 \sigma_{\Gamma,\C} = \P^D(\C)^{V_\Gamma^i} \times \{ y \}=\P^D(\C)^n\times \{ y \} ,
\end{equation} 
for a choice of a point $y= (y_v \mid v \in V_\Gamma^i)$.

\smallskip
\subsubsection{Unrenormalized Feynman integral}

The unrenormalized massless Feynman integral for a Feynman graph $\Gamma$ is then given by
\begin{equation}\label{WeightZ}
\begin{array}{ll}
\int_{\sigma_{\Gamma,\R}} \omega_{\Gamma,\R} & \text{in the real case} \\[3mm]
\int_{\sigma_{\Gamma,\C}} \omega_{\Gamma,\C} & \text{in the complex case,} 
\end{array}
\end{equation}
and similarly for the massive Feynman integrals.

\section{Potential theory and propagators in configuration and momentum space}\label{PotSec}
In this section we justify the form \eqref{RGm0} and \eqref{RGm} of the propagators 
in terms of potential theory, and similarly for their complex counterparts.
In particular, we show how the expressions \eqref{RGm0} and \eqref{RGm} 
arise naturally from the theory of Riesz and Bessel potentials, which gives 
a rigorous formulation of the Fourier transform relation between momentum and
configuration space propagators. 

\smallskip
\subsection{Massless propagator and the Riesz potential}

We first recall the relation between the massless Euclidean configuration 
space propagator 
\begin{equation}\label{G0}
G_{0,\R,D}(x_s - x_t) = || x_s - x_t ||^{2 -D} 
\end{equation}
for $D>2$ and the momentum space propagator, through
the relation of \eqref{G0} to the Riesz potential.

\begin{lem}\label{G0FourierLem}
The propagator \eqref{G0} satisfies the Fourier transform formula
\begin{equation}\label{G0Fourier}
\widehat{(G_{0,\R,D} \star \varphi)}(k) =\frac{4 \pi^{D/2}}{\Gamma(\lambda)}\, \frac{1}{\| k \|^2} \, 
\widehat{\varphi}(k),
\end{equation}
for any test function $\varphi \in \cS(\R^D)$.
\end{lem}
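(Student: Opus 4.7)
The plan is to identify $G_{0,\R,D}(x)=\|x\|^{2-D}$ as (a normalization of) the Riesz kernel of order $2$, and then apply the standard Fourier identity for Riesz potentials together with the convolution theorem.

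Recall that for $0<\alpha<D$ the Riesz kernel is $R_\alpha(x)=\gamma(\alpha)^{-1}\|x\|^{\alpha-D}$ with
\[
\gamma(\alpha)=\pi^{D/2}\,2^{\alpha}\,\frac{\Gamma(\alpha/2)}{\Gamma((D-\alpha)/2)},
\]
and, as tempered distributions on $\R^{D}$, one has $\widehat{R_{\alpha}}(k)=\|k\|^{-\alpha}$. Specializing to $\alpha=2$, and using $D=2\lambda+2$ so that $(D-2)/2=\lambda$, I would compute
\[
\gamma(2)=\pi^{D/2}\cdot 4\cdot\frac{\Gamma(1)}{\Gamma(\lambda)}=\frac{4\pi^{D/2}}{\Gamma(\lambda)},
\]
which immediately produces the constant appearing on the right-hand side of \eqref{G0Fourier}.

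First I would note that $G_{0,\R,D}=\gamma(2)\,R_{2}$ as locally integrable functions on $\R^{D}$ (for $D>2$), hence also as tempered distributions. Next, for any $\varphi\in\cS(\R^{D})$ the convolution $G_{0,\R,D}\star\varphi$ defines a tempered distribution, and the convolution theorem gives
\[
\widehat{G_{0,\R,D}\star\varphi}=\widehat{G_{0,\R,D}}\cdot\widehat{\varphi}=\gamma(2)\,\|k\|^{-2}\,\widehat{\varphi}(k),
\]
which is exactly \eqref{G0Fourier}.

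The one nontrivial ingredient is the distributional Fourier identity $\widehat{\|x\|^{\alpha-D}}=\gamma(\alpha)\,\|k\|^{-\alpha}$. My preferred route is analytic continuation: for $\mathrm{Re}(s)$ in a suitable strip, $\|x\|^{-s}$ is tempered, and the identity
\[
\int_{\R^{D}} e^{-\pi\|x\|^{2}t}\,\|x\|^{-s}\,dx\;=\;\pi^{-(D-s)/2}\,\Gamma\!\left(\tfrac{D-s}{2}\right)\,t^{(s-D)/2}
\]
together with the analogous Gaussian integral on the Fourier side and Mellin transform in $t$ yields the claim in a strip of convergence, after which both sides extend as meromorphic distributions in $s$; specializing at $s=D-2$ (which is a regular point for $D>2$) gives the required value. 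The main obstacle, and the only place where care is needed, is this distributional Fourier computation and the justification of the convolution theorem for a kernel that is only tempered (not in $L^{1}$); the analytic continuation argument, or equivalently a direct regularized computation using Bochner's formula for radial functions, handles both issues simultaneously.
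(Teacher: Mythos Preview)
Your proposal is correct and follows essentially the same route as the paper: both identify $G_{0,\R,D}=\|x\|^{2-D}$ as a constant multiple of the Riesz kernel $\cI_2$ (your $R_2$), compute the normalization constant $4\pi^{D/2}/\Gamma(\lambda)$, and then invoke the standard Fourier identity $\widehat{(\cI_\alpha\star\varphi)}(k)=\|k\|^{-\alpha}\widehat{\varphi}(k)$. The only difference is that the paper simply cites this Fourier identity from the literature (Adams--Hedberg), whereas you additionally sketch a proof of it via Gaussian integrals and analytic continuation; that extra justification is fine but not needed here.
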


\begin{proof}
The function \eqref{G0} is the Green function of the real Laplacian on $\R^D$.
It acts on test functions by convolution product
\begin{equation}\label{G0phi}
G_{0,\R,D}\star \varphi (x) = \int_{\R^D} \frac{\varphi(y)}{\| x-y \|^{D-2}} \, dy.
\end{equation}
This is a special case of the Riesz potential (see \S 1.2 of \cite{AdHed}) 
$\cI_\alpha=C_\alpha \| x \|^{\alpha-D}$ with
$$C_\alpha = \frac{\Gamma((D-\alpha)/2)}{ 2^\alpha \pi^{D/2} \Gamma(\alpha/2)},$$
in the case where $\alpha=2$,
$$ G_{0,\R,D} (x-y)  =\frac{4 \pi^{D/2}}{\Gamma(\lambda)} \,  \cI_2(x-y) $$
where $D=2\lambda+2$. The Riesz potential satisfies the Fourier transform
property
\begin{equation}\label{IalphaFourier}
\widehat{(\cI_\alpha \star \varphi)}(k) = \frac{1}{\| k \|^\alpha} \, \widehat{\varphi}(k),
\end{equation}
hence, $G_{0,\R,D}(x-y)$ satisfies the Fourier transform formula \eqref{G0Fourier}.
\end{proof}

Notice that $\| k \|^{-2}$ is the massless momentum space propagator associated to
a single edge carrying a momentum variable $k\in \R^D$.
Thus configuration space propagator \eqref{G0} can be identified with the Fourier transform of the
momentum space propagator associated to the same edge of the Feynman graph as
\begin{equation}\label{hatG0}
\widehat{G_{0,\R,D}}(k) = \frac{4 \pi^{D/2}}{\Gamma(\lambda)}\, \frac{1}{\| k \|^2}.
\end{equation}

\smallskip
\subsubsection{Complex case}

In the complex case, the propagator \eqref{CGm0} can be similarly related to
the momentum space propagator, by interpreting it as the
Green function \eqref{GreenComplex} of the complex Laplacian on $\A^D(\C)$ 
as $G_{\C,\Delta}(x,y)=(-(D-2)!/(2\pi i)^D) \, G_{0,2D}$ and using again \eqref{G0phi} to
relate it to the massless momentum space propagator, so that we obtain
$$ G_{0,\C,D}(x_{s(e)}-x_{t(e)})= \frac{- (D-2)!}{(2\pi i)^D}\frac{4\pi^D}{\Gamma(D-1)} \cI_2 (x_{s(e)}-x_{t(e)}) = (2i)^{2-D} \cI_2 (x_{s(e)}-x_{t(e)}), $$
which satisfies
$$ \widehat{G_{0,\C,D}}(k) = (2i)^{2-D} \frac{1}{\| k \|^2}. $$

\subsection{Massive propagator and Helmholtz equation}

The case of the massive propagator can be similarly related via Fourier transform
to the Bessel potential. Notice that we use here a different sign convention for 
the Laplacian with respect to \cite{AdHed}. Let $G_{m,\R, D}(x-y)$ denote
the massive propagator of \eqref{RGm} in dimension $D=2\lambda+2$.

\begin{lem} 
The massive Euclidean configuration space propagator $G_{m,\R,D}(x-y)$ satisfies 
the Fourier transform formula
\begin{equation}\label{GmFourier}
\widehat{(G_{m,\R, D}\star \varphi)}(k) = \frac{1}{(m^2 + \| k \|^2)} \, \widehat{\varphi}(k)
\end{equation}
for any test function $\varphi \in \cS(\R^D)$. In particular, $G_{m,\R,D}$ is the
Green function of the Helmholtz equation with wavenumber $m^2$.
\end{lem}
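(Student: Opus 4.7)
The plan is to mimic the massless argument of Lemma 4.1 by relating $G_{m,\R,D}$ to the Bessel potential of order $2$ rather than the Riesz potential. Recall that the Bessel potential kernel $\cG_2$ of order $2$ on $\R^D$ is defined as the Green function of the operator $I-\Delta$, and satisfies $\widehat{\cG_2}(k)=(1+\|k\|^2)^{-1}$ (see \S 1.2 of \cite{AdHed}). By scaling $x\mapsto mx$, the kernel of $(m^2-\Delta)^{-1}$ will be a rescaling of $\cG_2$, and this is what we will identify with $G_{m,\R,D}$.

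Concretely, I would first write $G_{m,\R,D}$ as a Laplace transform of the Euclidean heat kernel,
\begin{equation}\label{heatrep}
G_{m,\R,D}(x)=\int_0^\infty (4\pi t)^{-D/2}\, e^{-m^2 t - \|x\|^2/(4t)}\, dt,
\end{equation}
which is a standard subordination identity. To justify \eqref{heatrep}, one substitutes $s=m^2 t$, so that the right-hand side becomes $(4\pi)^{-D/2}m^{D-2}\int_0^\infty s^{-D/2}e^{-s-(m\|x\|)^2/(4s)}\,ds$, and applies the integral representation
$$
\cK_{\nu}(z)=\frac{1}{2}\Bigl(\frac{z}{2}\Bigr)^{\nu}\int_0^\infty t^{-\nu-1}e^{-t-z^2/(4t)}\,dt
$$
with $\nu=\lambda=D/2-1$ and $z=m\|x\|$. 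Collecting the resulting powers of $2$, $\pi$ and $m$ reproduces exactly the normalization $(2\pi)^{-(\lambda+1)} m^{\lambda}\|x\|^{-\lambda}\cK_\lambda(m\|x\|)$ of \eqref{RGm}, confirming \eqref{heatrep}.

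Once \eqref{heatrep} is in hand, the Fourier transform formula is immediate: applying Fubini (justified because the integrand is nonnegative and the Gaussian decay in $x$ is uniform in $t$ on compact subsets, while the $t$-integral converges at both endpoints as long as $m>0$ and $D\ge 3$) and using the standard Gaussian Fourier transform
$$
\int_{\R^D}e^{-\|x\|^2/(4t)}e^{-ik\cdot x}\,dx=(4\pi t)^{D/2}e^{-t\|k\|^2},
$$
one obtains
$$
\widehat{G_{m,\R,D}}(k)=\int_0^\infty e^{-t(m^2+\|k\|^2)}\,dt=\frac{1}{m^2+\|k\|^2},
$$
and the convolution theorem then yields \eqref{GmFourier} for every Schwartz function $\varphi$. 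The Helmholtz statement follows immediately: since the Fourier symbol of $-\Delta+m^2$ is $\|k\|^2+m^2$, equation \eqref{GmFourier} reads $(-\Delta+m^2)(G_{m,\R,D}\star\varphi)=\varphi$, i.e.\ $G_{m,\R,D}$ is the fundamental solution of $-\Delta+m^2$.

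The main obstacle is purely bookkeeping: matching the explicit constants in \eqref{RGm} against those coming out of the Macdonald integral representation, since any Fourier-convention or factor-of-$2$ slip would spoil the identification. The analytic content is routine, as every step (subordination, Fubini, Gaussian Fourier transform) is standard for $m>0$ and the Bessel kernel is in $L^1(\R^D)$ because $\cK_\lambda$ decays exponentially at infinity and has only an integrable singularity $\|x\|^{-(D-2)}$ at the origin.
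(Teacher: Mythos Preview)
Your argument is correct. The subordination identity \eqref{heatrep}, the Gaussian Fourier transform, and the Macdonald integral representation are exactly the analytic ingredients the paper uses as well; the constant check you carry out also goes through and reproduces the normalization of \eqref{RGm}.

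The organizational difference is worth noting. The paper's proof of this lemma is essentially a definition: it introduces the operator $(m^2 I+\Delta)^{-\alpha/2}$, declares its convolution kernel to be $G_{m,\R,D}^{(\alpha)}$ with $\widehat{G_{m,\R,D}^{(\alpha)}}(k)=(m^2+\|k\|^2)^{-\alpha/2}$ by appeal to the Bessel potential theory in \cite{AdHed}, and identifies the $\alpha=2$ case with the massive propagator. The verification that this kernel coincides with the explicit Bessel-function expression \eqref{RGm} is \emph{not} done here but deferred to the subsequent Lemmas~\ref{GmIntlem} and~\ref{GmBessellem}, which derive the heat-kernel integral \eqref{GmInt} from the Fourier side and then recognize it as a Macdonald function. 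You run the same chain in reverse---starting from \eqref{RGm}, establishing the subordination integral, and Fourier-transforming it---so your single proof covers what the paper spreads over three lemmas. This makes your version more self-contained for the lemma as stated, at the cost of anticipating material the paper develops afterwards; the paper's ordering has the advantage of isolating the integral representation \eqref{GmInt} as a standalone result it reuses later (e.g.\ in Corollaries~\ref{CorDiag} and~\ref{CorInf}).
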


\begin{proof} 
The fractional integral operator $(m^2 \cdot I+\Delta)^{-\alpha/2}$, where $m\neq 0$ is 
the mass parameter, satisfies the Fourier transform property
\begin{equation}\label{BesselFourier}
\widehat{((m^2\cdot I+\Delta)^{-\alpha/2} \varphi)}(k) = \frac{1}{(m^2 + \| k \|^2)^{\alpha/2}} \, \widehat{\varphi}(k),
\end{equation}
hence it can be represented by the convolution operator
\begin{equation}\label{BesselGm}
(m^2 \cdot I+\Delta)^{-\alpha/2} \varphi = G_{m,\R,D}^{(\alpha)} \star \varphi, \ \  \text{ with } \ \
\widehat G_{m,\R,D}^{(\alpha)}(k) = \frac{1}{(m^2 + \| k \|^2)^{\alpha/2}}.
\end{equation}
In the case $m=1$, this is the usual Bessel potential $\cG_{\alpha,\R,D}$ 
(see \S 1.2 of \cite{AdHed}) and, in particular,  
when $\alpha=2$, the Green function $G_{m,\R,D}^{(2)}$ is the fundamental solution
of the Helmholtz equation with wavenumber $m^2$, namely it satisfies
$$ 
\Delta G^{(2)}_{m,\R,D} + m^2 G^{(2)}_{m,\R,D}  = \delta.
$$
In the case $\alpha=2$, the Green function $G_{m,\R,D}^{(2)}$ gives the massive Euclidean
configuration space propagator $G_{m,D}$, since
the momentum space Euclidean massive propagator is of the form
$$ 
\frac{1}{m^2 + \| k \|^2}, 
$$
with $k\in \R^D$ the momentum variable associated to the edge.
\end{proof}

\smallskip
\subsubsection{Complex case}

In the complex case, similarly, the massive propagator is a solution of
the Helmholtz equation with the {\em complex} Laplacian and wavenumber $m^2$,
$$ \Delta_\C G_{m,\C,D} + m^2  G_{m,\C,D} =\delta. $$
It will be clear from \S \ref{BesselSec} below that this gives the form of the
propagator given in \eqref{CGm}.

\medskip
\subsection{Massive propagator and the Bessel potential}\label{BesselSec}
The massive propagator $G_{m,D}$ can then be described explicitly 
in terms of Bessel functions as follows. 

\begin{lem}\label{GmIntlem}
The massive Euclidean propagator $G_{m,\R,D}$ on $\R^D$ has the following integral representation
\begin{equation}\label{GmInt}
G_{m,\R,D}(x)= \frac{1}{(4 \pi)^\frac{D}{2}}  \int_0^\infty dt  \  \ \ t^{-\frac{D}{2}} \ e^{- t m^2 - \frac{ \| x \|^2}{4 t} }.   
\end{equation}
\end{lem}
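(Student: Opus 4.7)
My plan is to derive the integral representation by combining the Fourier-analytic characterization of $G_{m,\R,D}$ established in the previous lemma with the Schwinger parametric trick and the standard Gaussian Fourier transform.

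More precisely, the previous lemma tells us that $\widehat{G_{m,\R,D}}(k)=(m^2+\|k\|^2)^{-1}$, so the propagator may be recovered as an inverse Fourier transform,
\begin{equation*}
G_{m,\R,D}(x)=\frac{1}{(2\pi)^D}\int_{\R^D}\frac{e^{i k\cdot x}}{m^2+\|k\|^2}\,dk.
\end{equation*}
The first step is to replace the denominator by its Schwinger parametric form
\begin{equation*}
\frac{1}{m^2+\|k\|^2}=\int_0^\infty e^{-t(m^2+\|k\|^2)}\,dt,
\end{equation*}
which is valid since $m>0$ (and also for $m=0$ when $\|k\|\neq 0$, but here we exploit the mass gap). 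Inserting this in the previous display gives a double integral whose integrand $e^{-tm^2}e^{-t\|k\|^2}$ is positive and absolutely integrable on $(0,\infty)\times\R^D$ after multiplying by $|e^{ik\cdot x}|=1$, so Fubini's theorem permits exchanging the order of integration.

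The second step is to evaluate the inner integral, which is the standard Gaussian Fourier transform
\begin{equation*}
\frac{1}{(2\pi)^D}\int_{\R^D}e^{i k\cdot x-t\|k\|^2}\,dk=\frac{1}{(4\pi t)^{D/2}}\,e^{-\|x\|^2/(4t)}.
\end{equation*}
Substituting and pulling the common factor $(4\pi)^{-D/2}$ out of the integral in $t$ yields exactly
\begin{equation*}
G_{m,\R,D}(x)=\frac{1}{(4\pi)^{D/2}}\int_0^\infty t^{-D/2}\,e^{-tm^2-\|x\|^2/(4t)}\,dt,
\end{equation*}
which is the claimed formula \eqref{GmInt}.

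The computation is essentially routine, and the only genuine issue is justifying the interchange of integrals; this is harmless away from the diagonal (i.e.\ for $x\neq 0$), where the $t$-integrand decays like $e^{-\|x\|^2/(4t)}$ as $t\to 0^+$ and like $e^{-tm^2}$ as $t\to\infty$, so everything is absolutely convergent. As a sanity check, one may compare the resulting formula against the classical integral representation $\cK_\nu(z)=\tfrac12(z/2)^\nu\int_0^\infty s^{-\nu-1}e^{-s-z^2/(4s)}\,ds$ of the Macdonald function: after the substitution $s=tm^2$ with $\nu=\lambda=D/2-1$, one recovers precisely the expression $(2\pi)^{-(\lambda+1)}m^\lambda\|x\|^{-\lambda}\cK_\lambda(m\|x\|)$ of \eqref{RGm}, confirming consistency.
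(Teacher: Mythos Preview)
Your proof is correct and follows essentially the same route as the paper: both start from the inverse Fourier transform, insert the Schwinger parametric representation of $(m^2+\|k\|^2)^{-1}$, swap the order of integration, and evaluate the resulting Gaussian integral in $k$. Your version is in fact slightly cleaner, since you invoke the standard Gaussian Fourier transform directly and justify the Fubini step, whereas the paper writes out the completion of the square with the complex shift $q=k-ix/(2t)$ without comment.
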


\begin{proof}
By using the following integral form of the momentum space  propagator
$$
\frac{1}{\| k \|^2 +m^2 } = \int_0^\infty dt \ e^{ - t  \cdot (\| k \|^2 +m^2) }
$$
we can give its Fourier transfom \eqref{GmFourier} as follows
\begin{eqnarray*}
G_{m,\R,D} (x) 
& = &   \frac{1}{(2 \pi)^D} \int_{\R^D} dk  \ e^{i k \cdot x} \ \int_0^\infty dt \ e^{ - t  \cdot (\| k \|^2 +m^2) } \\
& = &   \frac{1}{(2 \pi)^D}  \int_0^\infty dt \ e^{- t m^2}   \int_{\R^D} dk  \  e^{ - t   \| k \|^2 +i k \cdot x + \frac{ \| x \|^2}{4 t}}  
\ \cdot \ e^{ - \frac{ \| x \|^2}{4 t} }   \\
& = &   \frac{1}{(2 \pi)^D}  \int_0^\infty dt \ e^{- t m^2 - \frac{ \| x \|^2}{4 t} }   \int_{\R^D} dq  \  e^{ - t   \| q \|^2 } 
\ \  \text{where} \ q = k - \frac{ i x}{2 t}  
\end{eqnarray*}
and obtain \eqref{GmInt} by substituting $\left( \frac{2 \pi }{2 t} \right)^{\frac{D}{2}} =  \int_{\R^D} dq  \  e^{ - t   \| q \|^2 }$.
\end{proof}

In particular, this implies the following behavior of divergences.

\begin{cor} \label{CorDiag}
For $x=0$, we have
$$
G_{m,\R,D} (0) = \frac{1}{(4 \pi)^\frac{D}{2}}  \int_0^\infty dt  \  \ \ t^{-\frac{D}{2}} \ e^{- t m^2 } 
= \frac{1}{(4 \pi)^\frac{D}{2}}  \  m^{D-2} \  \Gamma (1 - \frac{D}{2}),
$$
that diverges when $D= 2 \lambda + 2$ for $\lambda \in \N^+$.
\end{cor}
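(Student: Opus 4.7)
The plan is straightforward: substitute $x=0$ into the integral representation \eqref{GmInt} established in Lemma \ref{GmIntlem}, and then evaluate the resulting one-dimensional integral in terms of the Gamma function by a linear change of variable.

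First, setting $x=0$ kills the factor $e^{-\|x\|^2/(4t)}$, giving the first equality
$$G_{m,\R,D}(0) = \frac{1}{(4\pi)^{D/2}} \int_0^\infty t^{-D/2}\, e^{-tm^2}\, dt.$$
For the second equality I would perform the change of variable $u = tm^2$, under which $t^{-D/2}\,dt = m^{D-2}\, u^{-D/2}\, du$, reducing the integral to
$$m^{D-2}\int_0^\infty u^{-D/2} e^{-u}\,du \;=\; m^{D-2}\,\Gamma\!\left(1 - \tfrac{D}{2}\right),$$
by the standard Euler integral representation of $\Gamma$ with parameter $s = 1 - D/2$.

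The one subtlety, and the closest thing to an obstacle, is that the $t$-integral only converges near $t=0$ when $\mathrm{Re}(D) < 2$, so strictly speaking the equality with $m^{D-2}\Gamma(1-D/2)$ must be read as an identity of meromorphic functions in $D$, obtained first in the region of absolute convergence and then continued analytically. Granting this, the divergence assertion is immediate: the poles of $\Gamma(1 - D/2)$ are the simple poles at $D/2 \in \{1,2,3,\dots\}$, and the physical values $D = 2\lambda+2$ with $\lambda \in \N^+$ correspond to $D/2 = \lambda+1 \in \{2,3,4,\dots\}$, which are exactly pole locations. Hence $G_{m,\R,D}(0)$ diverges at every such $D$, confirming the expected ultraviolet divergence at coincident points.
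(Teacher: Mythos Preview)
Your proof is correct and is exactly the argument the paper has in mind: the corollary is stated in the paper without proof, as an immediate consequence of the integral representation in Lemma~\ref{GmIntlem}, and your substitution $u=tm^2$ together with the Euler integral for $\Gamma$ is the standard (and only natural) way to fill in the details. Your remark about analytic continuation in $D$ is a welcome clarification that the paper leaves implicit.
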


It is convenient, in order to write $G_{m,\R,D}$ in terms of Bessel functions, to first
encode the mass dependence in a scaling relation in terms of the propagator at $m=1$.

\begin{lem}\label{massGm}
The dependence of the Green function $G_{m,D,\alpha}$ 
of the operator $\cG_{m,D,\alpha}=(mI +\Delta)^{-\alpha/2}$ on
the mass parameter $m$ is given by
\begin{equation}\label{massG2}
G_{m,D,\alpha}(x)= m^{D-\alpha} \, G_{D,\alpha} (mx),   
\end{equation}
where $G_{D,\alpha}(x)=G^{(\alpha)}_{m=1}$ is the Bessel potential on $\R^D$.
\end{lem}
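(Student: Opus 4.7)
The plan is to verify the scaling relation by passing to the Fourier side, where rescaling becomes completely transparent, and then to invoke injectivity of the Fourier transform on $\cS'(\R^D)$. By the Fourier characterization established in \eqref{BesselGm}, we have
\[
\widehat{G_{m,D,\alpha}}(k) = \frac{1}{(m^2 + \| k \|^2)^{\alpha/2}}, \qquad \widehat{G_{D,\alpha}}(k) = \frac{1}{(1 + \| k \|^2)^{\alpha/2}}.
\]
The task is therefore reduced to checking that the Fourier transform of $m^{D-\alpha} G_{D,\alpha}(m\,\cdot\,)$ equals the first of these.

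First I would recall the standard dilation identity for the Fourier transform: if $F_m(x) := G_{D,\alpha}(mx)$, then $\widehat{F_m}(k) = m^{-D}\, \widehat{G_{D,\alpha}}(k/m)$. Multiplying by the prefactor $m^{D-\alpha}$ and substituting the Fourier expression for $G_{D,\alpha}$ gives
\[
\widehat{m^{D-\alpha} F_m}(k) = m^{-\alpha}\, \frac{1}{\bigl(1 + \| k/m \|^2\bigr)^{\alpha/2}} = \frac{m^{-\alpha}\, m^{\alpha}}{\bigl(m^2 + \| k \|^2\bigr)^{\alpha/2}} = \frac{1}{(m^2 + \| k \|^2)^{\alpha/2}},
\]
which matches $\widehat{G_{m,D,\alpha}}(k)$. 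Injectivity of the Fourier transform then yields \eqref{massG2}.

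As a cross-check in the physically relevant case $\alpha=2$, I would also note that the scaling can be read off directly from the integral representation of Lemma \ref{GmIntlem}: substituting $t = s/m^2$ in
\[
G_{m,\R,D}(x) = \frac{1}{(4\pi)^{D/2}} \int_0^\infty t^{-D/2}\, e^{-tm^2 - \| x \|^2/(4t)}\, dt
\]
converts the exponent into $-s - \| mx \|^2/(4s)$ and produces an overall factor $m^{D-2}$, giving $G_{m,\R,D}(x) = m^{D-2}\, G_{1,\R,D}(mx)$. This provides an independent verification in the case of direct interest for the propagator \eqref{RGm}.

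There is no real obstacle here beyond bookkeeping: the result is a routine scaling statement, and the only subtle point is tracking the exponents of $m$ coming from the Jacobian of the dilation and from the denominator $(m^2+\|k\|^2)^{\alpha/2}$, which conspire to cancel down to the stated power $m^{D-\alpha}$.
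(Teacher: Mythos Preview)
Your proof is correct. The paper's primary argument goes via the Schwinger-parameter integral representation \eqref{eqnBes1} of the Bessel potential and the substitution $t=4\pi m^2 t'$, and then notes the Fourier-side dilation argument as an alternative immediately after the proof; you have simply reversed the emphasis, giving the Fourier scaling as the main argument and the integral-representation substitution (for $\alpha=2$) as a cross-check. Both routes are elementary and equivalent, so there is nothing substantively different here.
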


\proof This follows from the integral representation \eqref{GmInt} for $G_{m,D,\alpha}(x)$
and the integral representation 
\begin{equation}\label{eqnBes1}
G_{1,D}^{(\alpha)}(x) = a_\alpha \, \int_0^\infty t^{\frac{\alpha-D}{2}}\, e^{- \frac{\pi \| x\|^2}{t} - \frac{t}{4\pi}}\, \frac{dt}{t}, 
\ \ \text{where} \ \ a_\alpha = \frac{1}{(4 \pi)^{\alpha /2} \Gamma (\alpha/2)}
\end{equation}
for the Bessel potential (see \S 1.2.4 of \cite{AdHed}). Using the change of
variables $t  =  4 \pi m^2 \cdot t' $ in \eqref{eqnBes1} gives \eqref{massG2}.
\endproof

The scaling \eqref{massG2} can also be seen in the Fourier representation
$$ G_{D,\alpha}(x) = \frac{1}{(2\pi)^D} \int_{\R^D} 
\frac{e^{i x\cdot \xi}}{ (1+ \| \xi \|^2)^{\alpha/2} } \, d\xi, $$
where, after changing variables to $k =\xi /m$, one gets
$$ G_{m,D,\alpha}(x) = \frac{1}{(2\pi)^D} \int_{\R^D} \frac{m^D\, e^{i m \, x\cdot \eta}}{m^\alpha (1+ \| \eta \|^2)^{\alpha/2} } \, d\eta, $$
which gives \eqref{massG2}. Since both $\widehat{G_{D,\alpha}}$ and $\widehat{G_{m, D,\alpha}}$ are entire analytic functions of $\alpha$ for each fixed $\xi$ or $\eta$, the corresponding
distributions $G_\alpha$ and $G_{m,\alpha}$ have the property that 
$\langle G_{D,\alpha},\varphi \rangle$ and $\langle G_{m,D,\alpha},\varphi \rangle$ 
are analytic in $\alpha$ for all test functions $\varphi\in \cS$,
and the identity \eqref{massG2} extends for all $\alpha$.

\smallskip

We then obtain the following expression for the massive propagator 
in terms of Bessel functions.

\begin{lem}\label{GmBessellem}
The massive Euclidean propagator $G_{m,D}$ on $\R^D$ satisfies
\begin{equation}\label{GmBessel}
G_{m,\R,D}(x-y)= (2 \pi)^{ -\frac{D}{2}} \ m^{D-2} \ \left(m  \|  x - y \| \right)^{ - \frac{D-2}{2}} \ \cK_{\frac{D-2}{2}} (m \| x-y \|),
\end{equation}
where $\cK_\nu(z)$ is the modified Bessel
function (Macdonald function).
\end{lem}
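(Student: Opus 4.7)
The plan is to obtain \eqref{GmBessel} directly from the integral representation \eqref{GmInt} established in Lemma \ref{GmIntlem}, by rescaling the integration variable to extract the mass dependence, and then matching the result to the standard integral representation of the Macdonald function.

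First I would rewrite \eqref{GmInt} by performing the substitution $s=tm^{2}$, which scales out the mass: this converts the $e^{-tm^{2}}$ factor into $e^{-s}$, turns $t^{-D/2}$ into $m^{D}s^{-D/2}$, and turns $\|x\|^{2}/(4t)$ into $(m\|x\|)^{2}/(4s)$, while $dt = m^{-2}ds$. Collecting the powers of $m$ yields
\begin{equation*}
G_{m,\R,D}(x) \;=\; \frac{m^{D-2}}{(4\pi)^{D/2}} \int_{0}^{\infty} s^{-D/2}\, e^{-s \,-\, \frac{(m\|x\|)^{2}}{4s}}\, ds.
\end{equation*}
At this point the mass dependence matches the scaling law of Lemma \ref{massGm}, which provides a consistency check.

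Next I would apply the classical integral representation of the modified Bessel function of the second kind,
\begin{equation*}
\cK_{\nu}(z) \;=\; \frac{1}{2}\left(\frac{z}{2}\right)^{\nu}\int_{0}^{\infty} s^{-\nu-1}\, e^{-s \,-\, \frac{z^{2}}{4s}}\, ds,
\end{equation*}
valid for $z>0$ and any real $\nu$. Setting $\nu = (D-2)/2$ so that $-\nu-1 = -D/2$, and $z = m\|x\|$, the inner integral equals $2\,(z/2)^{-\nu}\cK_{\nu}(z) = 2^{D/2}\, (m\|x\|)^{-(D-2)/2}\,\cK_{(D-2)/2}(m\|x\|)$.

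Substituting this back into the expression for $G_{m,\R,D}$ and simplifying the numerical prefactor, using $(4\pi)^{D/2} = 2^{D}\pi^{D/2}$, gives
\begin{equation*}
\frac{2^{D/2}}{(4\pi)^{D/2}} \;=\; \frac{1}{(2\pi)^{D/2}},
\end{equation*}
so that one obtains exactly \eqref{GmBessel}. There is no real obstacle in this argument; the only mildly delicate point is bookkeeping the powers of $2$ and $\pi$ correctly in the final simplification, and making sure the exponent $\nu=(D-2)/2$ matches on both sides of the Bessel identity.
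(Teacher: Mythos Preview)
Your proof is correct and follows essentially the same approach as the paper: start from the integral representation \eqref{GmInt}, rescale to extract the mass dependence, and identify the remaining integral with the standard integral representation of $\cK_\nu$. The paper routes this through the Bessel potential \eqref{eqnBes1}--\eqref{eqnBes2} and the scaling relation of Lemma~\ref{massGm}, whereas your substitution $s=tm^{2}$ and direct appeal to the $\cK_\nu$ integral formula is slightly more self-contained, but the underlying argument is the same.
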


\begin{proof} This can be seen from the integral representation of Lemma \ref{GmIntlem},
or equivalently from Lemma \ref{GmBessellem}. We use again the fact that 
the Bessel potential satisfies \eqref{eqnBes1},
which implies (\cite{Titch} p.~201 and \S 1.2 of \cite{AdHed}) that it can be expressed 
in terms of Bessel functions as
\begin{equation}\label{eqnBes2}
G_{1,D}^{(\alpha)}(x) =  c_\alpha \, \| x\|^{-\frac{D-\alpha}{2}} \, \cK_{\frac{D-\alpha}{2}}(\| x \|), 
\end{equation}
with $c_\alpha^{-1} = 2^{(\alpha-2)/2} \, (2\pi)^{D/2} \Gamma(\alpha/2)$, and 
with $\cK_\nu(z)$ the modified Bessel function. 

The statement follows by  applying again the change of variables $t  =  4 \pi m^2 \cdot t' $ 
in the integral \eqref{eqnBes1} and using the identities 
\eqref{eqnBes2} and \eqref{GmInt}  for case of $\alpha=2$. Equivalently,
\eqref{GmBessel} follows from the scaling property 
of Lemma \ref{massGm} and \eqref{eqnBes2}.
\end{proof}

\smallskip

The integral representation of Lemma \ref{GmIntlem} also gives 
an estimate on the behavior of the massive propagator at infinity.

\begin{cor} \label{CorInf}
The massive propagator $G_{m,\R,D}(x)$ is smooth on $\R^D \setminus \{0\}$. Moreover, 
$G_{m,\R,D}(x) \to 0$ as $\|x\| \to \infty$. 
\end{cor}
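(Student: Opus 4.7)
The plan is to exploit the two representations established in Lemma \ref{GmBessellem} and Lemma \ref{GmIntlem}; each one controls one of the two assertions cleanly.

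For smoothness on $\R^D \setminus \{0\}$, I would work from the Bessel representation
$$G_{m,\R,D}(x) = (2\pi)^{-D/2}\, m^{D-2}\, (m\|x\|)^{-(D-2)/2}\, \cK_{(D-2)/2}(m\|x\|).$$
Since $\|x\|$ is a smooth function on $\R^D \setminus \{0\}$, it suffices to observe that the function $z \mapsto z^{-(D-2)/2}\, \cK_{(D-2)/2}(z)$ is $C^\infty$ on $(0,\infty)$; this is immediate from the fact that $\cK_\nu$ is a smooth solution of the modified Bessel equation on $(0,\infty)$. The claim then follows by composition. Alternatively, one can differentiate under the integral sign in \eqref{GmInt}: for $\|x\|$ bounded away from $0$ in a neighborhood, each $x$-derivative brings down factors that are polynomial in $1/t$ and $x_j$, and the exponential factor $e^{-tm^2 - \|x\|^2/(4t)}$ dominates these at both endpoints of integration, justifying differentiation under the integral by dominated convergence.

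For the decay at infinity, the Bessel representation is again the most direct tool. I would invoke the standard asymptotic
$$\cK_\nu(z) \sim \sqrt{\frac{\pi}{2z}}\, e^{-z} \qquad (z \to \infty),$$
valid for any fixed order $\nu$, which gives
$$G_{m,\R,D}(x) \sim (2\pi)^{-D/2}\, m^{D-2}\, (m\|x\|)^{-(D-2)/2} \sqrt{\frac{\pi}{2 m \|x\|}}\, e^{-m\|x\|},$$
and this tends to $0$ (in fact exponentially fast) as $\|x\| \to \infty$. If one prefers to avoid appealing to the Bessel asymptotic, the integral representation \eqref{GmInt} yields the same conclusion: after the change of variable $t = \|x\| s /(2m)$ one obtains a saddle-point form whose integrand is pointwise dominated by an $\|x\|$-independent integrable function and vanishes pointwise as $\|x\| \to \infty$, so dominated convergence gives the decay.

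The only mildly subtle point is the decay statement, and the main obstacle is simply citing (or, if needed, deriving) the asymptotic behavior of $\cK_\nu$ at infinity; once that is in hand both assertions are immediate from the explicit formula in Lemma \ref{GmBessellem}. I would therefore present the proof in two short paragraphs: first smoothness via Lemma \ref{GmBessellem} and the regularity of $\cK_\nu$ on $(0,\infty)$, then decay via the exponential asymptotic of $\cK_\nu$ at infinity.
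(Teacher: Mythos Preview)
Your smoothness argument coincides with the paper's: both invoke the analyticity of $\cK_\nu$ on $\C^*$ together with the explicit formula of Lemma~\ref{GmBessellem}.

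For the decay, however, the paper takes a different route. Rather than citing the large-argument asymptotic of $\cK_\nu$, it applies Cauchy--Schwarz directly to the integral representation \eqref{GmInt}, splitting the integrand as $e^{-tm^2}\cdot t^{-D/2}e^{-\|x\|^2/(4t)}$, to obtain
\[
|G_{m,\R,D}(x)|^2 \;\leq\; \frac{2^{D-1}}{(4\pi)^D}\,\frac{(D-2)!}{2m^2}\,\|x\|^{-2D+2},
\]
and hence polynomial decay. It also records the alternative comparison $0\leq G_{1,D}^{(\alpha)}\leq I_\alpha$ with the Riesz potential. Your approach via the Bessel asymptotic is shorter and yields the sharper exponential decay $e^{-m\|x\|}$; the paper's Cauchy--Schwarz argument is more self-contained (no external asymptotic needed) but gives only the weaker polynomial bound. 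Both are correct; the asymptotic you use is in fact recorded later in the paper as \eqref{Knuexp}.
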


\begin{proof}
The smoothness of $G_{m,D}(x)$   is a direct consequence of the fact 
that the modified Bessel function $\cK_\nu(z)$ is  analytic on $\C^*$ and has an essential 
singularity at $z = \infty$. 

The Cauchy--Schwarz inequality takes the from
\begin{eqnarray*} 
| G_{m,\R,D}(x)|^2  &\leq& 
\frac{1}{(4 \pi)^D} 
\left( \int_0^\infty dt  \  \  e^{- 2 t m^2 } \right)
\cdot
\left(   \int_0^\infty dt  \  \ \ t^{-D} \ e^{- \frac{ \| x \|^2}{2 t} } \right) 
= \frac{1}{(4 \pi)^D} \frac{1}{2 m^2}  
\int_0^\infty dt  \  t^{-D} \ e^{- \frac{ \| x \|^2}{2 t} } \\
&=& \frac{2^{D-1}}{(4 \pi)^D} \frac{ 1}{2 m^2}  \|x\|^{-2D +2}
\int_0^\infty \frac{dt'}{t'}  e^{-t'} t'^{D-1} 
= \frac{2^{D-1}}{(4 \pi)^D} \frac{1}{2 m^2}  \Gamma (D-1) \|x\|^{-2D +2} \\
&= & \frac{2^{D-1}}{(4 \pi)^D} \frac{(D-2)!}{2 m^2}   \|x\|^{-2D +2}
\end{eqnarray*} 
after the coordinate transformation $t' = \frac{\| x \|^2}{ 2 t}$, and it simply
implies that $G_{m,\R,D}(x)$ in \eqref{GmInt} tends to zero at infinity for $D>1$ 
due to the term $\|x\|^{-2D +2}$. 

This can also be seen using the integral representation \eqref{eqnBes1} for $G_{1,D}^{(\alpha)}$
for $\alpha=2$ and the integral representation of the Riesz potential (see \S 1.2.4 of \cite{AdHed})
$$ I_\alpha(x) = a_\alpha \int_0^\infty t^{\frac{\alpha-D}{2}}e^{-\frac{\pi \| x \|^2}{t}} \, \frac{dt}{t}, $$
which then gives an estimate (for all $0<\alpha<D$)
$$ 0\leq G_{1,D}^{(\alpha)}(x) \leq  I_\alpha(x). $$
\end{proof}

\smallskip
\subsubsection{Complex case}

The same arguments used in Lemma \eqref{GmBessellem} and Corollary \ref{CorInf}
give an expression of the massive propagator in the {\em complex case} in terms
of modified Bessel functions.

\begin{lem}\label{CGmBessel}
The massive Euclidean propagator $G_{m,\C,D}$ satisfies
\begin{equation}\label{GmCBessel} 
G_{m,\C,D} (x-y)=(2\pi)^{-D}  m^{D-1}\, 
\|x -y \|^{-(D-1)}\, \cK_{D-1}(m \| x - y \|)
\end{equation}
In particular $G_{m,\C,D}$ is smooth away from the origin, with
$G_{m,\C,D}(x) \to 0$ as $\|x\| \to \infty$.
\end{lem}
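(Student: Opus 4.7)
The plan is to mimic the proofs of Lemma \ref{GmBessellem} and Corollary \ref{CorInf} in the complex setting, exploiting the identification $\C^D \cong \R^{2D}$ to reduce the computation of $G_{m,\C,D}$ to the Bessel-potential formula in dimension $2D$ with parameter $\alpha=2$. I start from the characterization of $G_{m,\C,D}$, obtained in the previous subsection, as the fundamental solution of the Helmholtz-type equation $\Delta_\C G_{m,\C,D} + m^2\,G_{m,\C,D}=\delta$. Viewing $\Delta_\C=\sum_k \partial^2/(\partial x_k\partial\bar x_k)$ in terms of the underlying real coordinates on $\C^D\cong\R^{2D}$, this identifies the equation, modulo the constant-normalization conventions already fixed in \S \ref{SecRulesComplex}, with a massive Helmholtz equation on $\R^{2D}$, so the analog of \eqref{GmFourier} reads
\[
\widehat{(G_{m,\C,D}\star\varphi)}(k)=\frac{1}{m^2+\|k\|^2}\,\widehat{\varphi}(k),\qquad k\in\R^{2D},
\]
and produces the analog of the integral representation \eqref{GmInt} in $2D$ real dimensions.

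Next, I would invoke the Bessel-potential identity \eqref{eqnBes2} with $D$ replaced by $2D$ and $\alpha=2$, which yields
\[
G^{(2)}_{1,2D}(x)=(2\pi)^{-D}\,\|x\|^{-(D-1)}\,\cK_{D-1}(\|x\|),
\]
together with the mass-scaling relation of Lemma \ref{massGm} applied in dimension $2D$, namely $G_{m,2D,2}(x)=m^{2D-2}\,G^{(2)}_{1,2D}(mx)$. Combining the two gives
\[
G_{m,\C,D}(x-y)=(2\pi)^{-D}\,m^{D-1}\,\|x-y\|^{-(D-1)}\,\cK_{D-1}(m\|x-y\|),
\]
which is exactly \eqref{GmCBessel}. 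The qualitative claims then follow verbatim as in Corollary \ref{CorInf}: analyticity of $\cK_\nu(z)$ on $\C^{\ast}$ with an essential singularity at infinity gives smoothness on $\R^{2D}\setminus\{0\}$, and a Cauchy--Schwarz bound applied to the $2D$-dimensional integral representation of $G_{m,\C,D}$ produces an inequality of the form $|G_{m,\C,D}(x)|^{2}\leq C_{m,D}\,\|x\|^{-(4D-2)}$, whence $G_{m,\C,D}(x)\to 0$ as $\|x\|\to\infty$ for $D\geq 1$.

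The main point of care I anticipate is constant bookkeeping between $\Delta_\C$ and the Laplacian on $\R^{2D}$, since the identity $\partial^{2}/(\partial x_{k}\partial\bar x_{k})=\tfrac{1}{4}(\partial_{u_{k}}^{2}+\partial_{v_{k}}^{2})$ introduces a factor that rescales the effective wavenumber in the Helmholtz equation. Tracking this factor consistently with the normalizations already chosen in \S \ref{SecRulesComplex} for the massless complex propagator \eqref{CGm0} (where the same identification pins down the constant in front of $\|x\|^{2-2D}$) is what selects the specific coefficients appearing in \eqref{GmCBessel}; once this is done, the rest of the argument is a transcription of the real-case proof into $2D$ real dimensions and requires no new ingredient.
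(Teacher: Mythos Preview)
Your proposal is correct and follows exactly the route the paper indicates: the paper does not give a separate proof of this lemma but simply states that ``the same arguments used in Lemma~\ref{GmBessellem} and Corollary~\ref{CorInf}'' apply, and your write-up is precisely the transcription of those arguments under the identification $\C^D\cong\R^{2D}$. Your caveat about tracking the $1/4$ factor between $\Delta_\C$ and $\Delta_{\R^{2D}}$ is a genuine point of care that the paper elides, so in that respect your version is slightly more careful than the original.
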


\subsection{Expansions of massive Feynman amplitudes}

Using the series expansions of the Bessel functions, we obtain the following formal expansion for the massive Euclidean 
propagator in terms of the distance function.

\begin{lem} \label{TaylorProp}
Let $D= 2 \lambda+2$ , then the massive Euclidean propagator has an  expansion of the form
\begin{eqnarray*}
G_{m,\R,D}(x-y) &=& \left(\frac{1}{2 \pi} \right)^{(\lambda+1)}  
\sum_{\ell =0}^{\lambda-1} \frac{(-m^2)^\ell \cdot 2^{\lambda - 2 \ell -1} \cdot (\lambda - \ell -1)!}{\ell! \   \| x- y \|^{2(\lambda -  \ell)}} \\
&+& \left(\frac{-1}{2 \pi} \right)^{(\lambda+1)}   \sum_{\ell = 0}^\infty 
\frac{ m^{2(\lambda + \ell)} \| x- y \|^{ 2 \ell}}{2^{\lambda + 2 \ell} \cdot \ell! \cdot  (\lambda + \ell)! } 
 \left( 
 \log (\frac{ m \| x- y \|}{2})) - \frac{ \psi (\ell +1)   +  
  \psi (\lambda + \ell +1)}{2} 
 \right)
\end{eqnarray*}
where $\psi(x)  = \frac{d}{dx} \ln \Gamma(x) = \frac{\Gamma'(x)}{\Gamma(x)}$ is the digamma function.
\end{lem}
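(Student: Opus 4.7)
The plan is to start from Lemma \ref{GmBessellem} and substitute the classical series expansion of the modified Bessel function $\cK_\lambda(z)$ for integer index $\lambda$. Specialising \eqref{GmBessel} to $D = 2\lambda + 2$ gives
$$ G_{m,\R,D}(x-y) = (2\pi)^{-(\lambda+1)}\, m^{\lambda}\, \|x-y\|^{-\lambda}\, \cK_\lambda(m\|x-y\|), $$
so everything reduces to expanding $\cK_\lambda(z)$ in powers of $z = m\|x-y\|$ and then redistributing the prefactor $m^\lambda \|x-y\|^{-\lambda}$ into the resulting series.

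For $\lambda \in \N$ the Macdonald function admits the well-known (Abramowitz--Stegun 9.6.11 / Watson) decomposition into a finite ``singular'' part, a logarithmic part, and a regular power series in $z^2$:
$$ \cK_\lambda(z) = \tfrac{1}{2}\bigl(\tfrac{z}{2}\bigr)^{-\lambda} \sum_{\ell=0}^{\lambda-1} \frac{(\lambda-\ell-1)!}{\ell!}\Bigl(-\tfrac{z^2}{4}\Bigr)^{\ell} + (-1)^{\lambda+1} \log(z/2)\, \cI_\lambda(z) + \tfrac{(-1)^{\lambda}}{2}\bigl(\tfrac{z}{2}\bigr)^{\lambda}\sum_{\ell=0}^\infty \frac{\psi(\ell+1)+\psi(\lambda+\ell+1)}{\ell!(\lambda+\ell)!} \Bigl(\tfrac{z^2}{4}\Bigr)^{\ell}, $$
where $\cI_\lambda(z) = \sum_{\ell \geq 0} (z/2)^{\lambda+2\ell}/(\ell!(\lambda+\ell)!)$ is the modified Bessel function of the first kind. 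I would quote this as a known identity and proceed directly to the substitution $z = m\|x-y\|$.

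For the finite part I would multiply term by term: $\frac{1}{2}(z/2)^{-\lambda}(-z^2/4)^\ell = (-1)^\ell 2^{\lambda-2\ell-1}\, z^{2\ell-\lambda}$, and then the global prefactor $(2\pi)^{-(\lambda+1)} m^\lambda \|x-y\|^{-\lambda}$ turns $z^{2\ell-\lambda} = m^{2\ell-\lambda}\|x-y\|^{2\ell-\lambda}$ into $(2\pi)^{-(\lambda+1)} (-m^2)^\ell \, 2^{\lambda-2\ell-1}(\lambda-\ell-1)!/(\ell!\,\|x-y\|^{2(\lambda-\ell)})$, which is exactly the first sum in the statement. For the logarithmic plus digamma part one combines $(-1)^{\lambda+1}\log(z/2)\cI_\lambda(z)$ with the third summand by factoring out $(z/2)^{\lambda+2\ell}/(\ell!(\lambda+\ell)!)$; writing $(z/2)^{\lambda+2\ell} = m^{\lambda+2\ell}\|x-y\|^{\lambda+2\ell}/2^{\lambda+2\ell}$ and cancelling $m^\lambda\|x-y\|^{-\lambda}$ in the prefactor yields the second sum, with the correct coefficient $(-1/(2\pi))^{\lambda+1}$ arising from $(-1)^{\lambda+1}(2\pi)^{-(\lambda+1)}$.

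The only real obstacle is the arithmetic bookkeeping of signs and powers of $2$, $\pi$, $m$ and $\|x-y\|$ across the three parts of the Bessel expansion; there is no analytic difficulty since both series converge absolutely (the first is finite, and the second is an entire function of $\|x-y\|^2$ up to the logarithmic factor). The identity is understood as an equality of distributions on $\R^D\setminus\{0\}$, where $G_{m,\R,D}$ is smooth by Corollary \ref{CorInf}, so termwise comparison of the two sides is legitimate. Alternatively, one can check the result by inserting the known series \eqref{eqnBes1} for the Bessel potential $G_{1,D}^{(2)}$ and using Lemma \ref{massGm} to recover the $m$-dependence, but the direct route via the standard formula for $\cK_\lambda$ is shorter.
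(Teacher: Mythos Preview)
Your proposal is correct and follows essentially the same route as the paper: the paper's proof simply quotes the standard series expansion of $\cK_\lambda(z)$ for integer $\lambda$ (citing Watson) and substitutes it into \eqref{GmBessel}. Your extra remarks on the arithmetic bookkeeping and on convergence are fine but go beyond what the paper records.
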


\begin{proof}
We use the following expansion of modified Bessel function $\cK_\lambda(z)$, see p. 80 in \cite{Watson},
\begin{eqnarray*}
\cK_\lambda(z) = \frac{1}{2} \sum_{\ell =0}^{\lambda-1} \frac{(-1)^\ell (\lambda - \ell -1)!}{\ell! \ (\frac{z}{2} )^{\lambda - 2 \ell}} 
 + (-1)^{\lambda+1} \sum_{\ell = 0}^\infty 
 \frac{(\frac{z}{2} )^{\lambda + 2 \ell}}{\ell! \ (\lambda + \ell)! } 
 \left( 
 \log (\frac{z}{2})) - \frac{ \psi (\ell +1)  +  
 \psi (\lambda + \ell +1)}{2} 
 \right).
\end{eqnarray*}
in \eqref{GmBessel}.
\end{proof}

The complex case is analogous. One obtains the following expression.

\begin{cor}\label{CTaylorCor}
In the complex case the massive Euclidean propagator has an expansion
\begin{eqnarray*}
G_{m,\C,D}(x-y) &=& \frac{1}{(2 \pi)^D} 
\sum_{\ell =0}^{D-2} \frac{(-m^2)^\ell \cdot 2^{D-2 - 2 \ell} \cdot (D-2 - \ell)!}{\ell! \  
 \| x- y \|^{2(D-1 -  \ell)}} \\
&+& \left(\frac{-1}{2 \pi} \right)^D   \sum_{\ell = 0}^\infty 
\frac{ m^{2(D-1 + \ell)} \| x- y \|^{ 2 \ell}}{2^{D-1 + 2 \ell} \cdot \ell! \cdot  (D-1 + \ell)! } 
 \left( 
 \log (\frac{ m \| x- y \|}{2})) - \frac{ \psi (\ell+1)   +  
 \psi (D + \ell)}{2} 
 \right)
\end{eqnarray*}
\end{cor}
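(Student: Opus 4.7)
The plan is to derive the expansion by direct substitution, mirroring the proof of Lemma \ref{TaylorProp} but with the complex propagator formula from Lemma \ref{CGmBessel}. Recall that Lemma \ref{CGmBessel} gives
$$
G_{m,\C,D}(x-y) = (2\pi)^{-D}\, m^{D-1}\, \|x-y\|^{-(D-1)}\, \cK_{D-1}(m\|x-y\|),
$$
so the task reduces to expanding $\cK_{D-1}$ at integer order $D-1$ and bookkeeping the prefactors.

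First I would invoke the same Watson expansion used in the proof of Lemma \ref{TaylorProp},
$$
\cK_\nu(z) = \tfrac{1}{2}\sum_{\ell=0}^{\nu-1}\frac{(-1)^\ell(\nu-\ell-1)!}{\ell!\,(z/2)^{\nu-2\ell}} + (-1)^{\nu+1}\sum_{\ell=0}^\infty \frac{(z/2)^{\nu+2\ell}}{\ell!\,(\nu+\ell)!}\left(\log(z/2) - \tfrac{\psi(\ell+1)+\psi(\nu+\ell+1)}{2}\right),
$$
which is valid for any positive integer order $\nu$; here one specializes to $\nu = D-1$ and $z = m\|x-y\|$. Substituting into the formula above and combining $(z/2)^{-(D-1)+2\ell}$ with the prefactor $m^{D-1}\|x-y\|^{-(D-1)}$ produces, in the finite polar sum, the factor $2^{D-2-2\ell}m^{2\ell}\|x-y\|^{-2(D-1-\ell)}$; together with the $(-1)^\ell$ from Watson's expansion this yields exactly the $(-m^2)^\ell$ weights of the first sum in the claim. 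For the logarithmic part, $(z/2)^{(D-1)+2\ell}$ combines with $m^{D-1}\|x-y\|^{-(D-1)}$ to give $m^{2(D-1)+2\ell}\|x-y\|^{2\ell}/2^{D-1+2\ell}$, and the sign $(-1)^{D}$ is absorbed into the overall prefactor $(-1/2\pi)^D$, matching the second sum with the indicated digamma terms.

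Since the index $D-1$ is a positive integer by hypothesis, no limit procedure is needed to pass from the non-integer expansion of $\cK_\nu$ to the integer case, and the expansion terminates in a finite polar part plus a convergent log-holomorphic tail. There is no substantive obstacle here: the entire argument is a rearrangement of factors after substitution, exactly parallel to the real case of Lemma \ref{TaylorProp}, with the replacement $\lambda \rightsquigarrow D-1$ and the overall normalization $(2\pi)^{-(\lambda+1)} \rightsquigarrow (2\pi)^{-D}$ dictated by the complex propagator \eqref{CGm}.
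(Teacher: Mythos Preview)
Your proposal is correct and is exactly the approach the paper takes: the corollary is stated with the remark ``The complex case is analogous,'' meaning one substitutes the Watson expansion of $\cK_\nu$ with $\nu=D-1$ into the expression for $G_{m,\C,D}$ from Lemma~\ref{CGmBessel}, precisely as in Lemma~\ref{TaylorProp} with $\lambda$ replaced by $D-1$. Your bookkeeping of the prefactors is accurate.
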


This gives a corresponding expansion for the Feynman amplitudes.

\begin{prop}\label{Taylorexpand}
The Feynman amplitude $\omega_{\Gamma, \mathbf{m}, \R} $ of \eqref{Romegam} has
an  expansion
\begin{equation}\label{tayloromega}
\omega_{\Gamma,\mathbf{m}, \R} = \sum_{\underline{\ell}=(\ell_e)\in \{-\lambda,-\lambda+1, \cdots,\infty \}^{E_\Gamma}}
\left(\frac{1}{2 \pi} \right)^{(\lambda+1) |E_\Gamma|}   \prod_{e\in E_\Gamma} 
B^\R_{\Gamma,\mathbf{m},D,\ell_e} \,\,\, \beta^\R_{\Gamma,\mathbf{m},D,\ell_e}
 \bigwedge_{v\in V_\Gamma} dx_v ,
\end{equation}
where
\begin{equation}\label{BGammaell}
B^\R_{\Gamma,\mathbf{m},D,\ell_e} := 
\left\{ 
\begin{array}{ll}
\frac{(-m_e^2)^{\lambda + \ell_e} \cdot 2^{- \lambda -  \ell_e -1} \cdot ( - \ell_e -1)!}{(\lambda + \ell_e)! }
& \ell_e \in \{-\lambda, \cdots,-1 \} \\
\frac{ (-1)^{\lambda+1} \, m_e^{2(\lambda + \ell_e)} }{2^{\lambda + 2 \ell_e} \cdot \ell! \cdot  (\lambda + \ell_e)! } 
 & \ell_e \in \{0,1,\cdots,\infty\}
 \end{array}
\right.
\end{equation}
\begin{equation}\label{formellTaylor}
\beta^\R_{\Gamma,\mathbf{m},D,\ell_e} :=  
\left\{ 
\begin{array}{ll}
\| x_{s(e)}- x_{t(e)} \|^{-2\ell_e}
& \ell_e \in \{-\lambda, \cdots,-1 \} \\
\| x_{s(e)}- x_{t(e)} \|^{ 2 \ell_e}
 \left( 
 \log (\frac{ m_e \| x_{s(e)}- x_{t(e)} \|}{2})) - \frac{ \psi (\ell_e +1)   +  
  \psi (\lambda + \ell_e +1)}{2} 
 \right) & 
  \ell_e \in \{0,1,\cdots,\infty\}.
\end{array}
\right.
\end{equation}
\end{prop}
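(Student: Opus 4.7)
The proof is essentially a bookkeeping exercise combining the definition \eqref{Romegam} with the edge-propagator expansion of Lemma \ref{TaylorProp}. The plan is as follows.

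First, I would rewrite the Feynman amplitude as $\omega_{\Gamma,\mathbf{m},\R} = \prod_{e\in E_\Gamma} G_{m_e,\R,D}(x_{s(e)}-x_{t(e)}) \, \bigwedge_{v\in V_\Gamma} dx_v$, using the fact that each edge propagator $(2\pi)^{-(\lambda+1)}m_e^\lambda\|x_{s(e)}-x_{t(e)}\|^{-\lambda}\cK_\lambda(m_e\|x_{s(e)}-x_{t(e)}\|)$ is precisely $G_{m_e,\R,D}(x_{s(e)}-x_{t(e)})$ with $D=2\lambda+2$. This reduces the task to expanding a product of single-edge propagators and regrouping.

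Second, I would apply Lemma \ref{TaylorProp} to each factor and re-index so that both the singular and regular parts of the expansion are indexed uniformly. In the lemma's first (finite, singular) sum, substitute $\ell_e = \ell - \lambda \in \{-\lambda,\ldots,-1\}$, so that $\|x_{s(e)}-x_{t(e)}\|^{-2(\lambda-\ell)}$ becomes $\|x_{s(e)}-x_{t(e)}\|^{-2\ell_e}$, and the numerical coefficient
$$\frac{(-m_e^2)^{\lambda+\ell_e}\,2^{-\lambda-\ell_e-1}\,(-\ell_e-1)!}{(\lambda+\ell_e)!}$$
is exactly $B^\R_{\Gamma,\mathbf{m},D,\ell_e}$ in that range (after collecting the powers of $2$). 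In the lemma's second sum, keep $\ell_e=\ell\geq 0$; the prefactor $(-1/2\pi)^{\lambda+1}$ supplies the sign $(-1)^{\lambda+1}$ in $B^\R_{\Gamma,\mathbf{m},D,\ell_e}$ for $\ell_e\geq 0$, while the remaining logarithmic/digamma factor is exactly $\beta^\R_{\Gamma,\mathbf{m},D,\ell_e}$. Pulling the common factor $(1/2\pi)^{\lambda+1}$ out of each propagator then gives, for each edge,
$$G_{m_e,\R,D}(x_{s(e)}-x_{t(e)}) = \Bigl(\frac{1}{2\pi}\Bigr)^{\lambda+1}\sum_{\ell_e=-\lambda}^{\infty} B^\R_{\Gamma,\mathbf{m},D,\ell_e}\,\beta^\R_{\Gamma,\mathbf{m},D,\ell_e}.$$

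Third, I would take the product over $e\in E_\Gamma$ and distribute. The prefactors combine to $(1/2\pi)^{(\lambda+1)|E_\Gamma|}$, and since each factor is an independent sum indexed by $\ell_e\in\{-\lambda,\ldots,\infty\}$, the product becomes a single sum indexed by $\underline{\ell}=(\ell_e)_{e\in E_\Gamma}\in\{-\lambda,-\lambda+1,\ldots,\infty\}^{E_\Gamma}$, with term equal to $\prod_e B^\R_{\Gamma,\mathbf{m},D,\ell_e}\,\beta^\R_{\Gamma,\mathbf{m},D,\ell_e}$. Wedging with $\bigwedge_{v\in V_\Gamma} dx_v$ yields \eqref{tayloromega}.

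The only subtle point is that the expansion in Lemma \ref{TaylorProp} is formal, being the series expansion of $\cK_\lambda$ around the origin; so the proposition is really an identity of formal series whose convergence in a neighborhood of a configuration point is inherited, edge by edge, from the convergence of the Bessel series. The main care required is therefore notational: tracking the reindexing $\ell\mapsto\ell_e=\ell-\lambda$ on the singular part, and making sure the sign $(-1)^{\lambda+1}$ is absorbed only into the $\ell_e\geq 0$ branch of $B^\R_{\Gamma,\mathbf{m},D,\ell_e}$. No further argument is needed once these substitutions are made cleanly.
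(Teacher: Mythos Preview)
Your approach is exactly that of the paper: substitute the single-edge expansion of Lemma~\ref{TaylorProp} into the product defining $\omega_{\Gamma,\mathbf{m},\R}$ and distribute over $E_\Gamma$. The paper's proof is the one-line version of what you have spelled out, so there is nothing to add---just double-check the sign in your re-indexing of the singular part, since with $\ell_e=\ell-\lambda$ one has $-2(\lambda-\ell)=2\ell_e$ rather than $-2\ell_e$.
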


\proof The statement   follows by  substituting the Laurent expansion of propagator in Lemma \ref{TaylorProp} 
in \eqref{omegamZ}.
\endproof

\begin{cor}
The Feynman amplitude $\omega_{\Gamma, \mathbf{m}, \C}$ of \eqref{omegamZ} 
has an expansion

\begin{equation}\label{tayloromegaC}
\omega_{\Gamma,\mathbf{m}, \C} = \sum_{\underline{\ell}=(\ell_e)\in \{ -D+1,-D+2,  \cdots,\infty \}^{E_\Gamma}}
\left(\frac{1}{2 \pi} \right)^{D |E_\Gamma|}   \prod_{e\in E_\Gamma} 
B^\C_{\Gamma,\mathbf{m},D,\ell_e} \,\,\, \beta^\C_{\Gamma,\mathbf{m},D,\ell_e}
 \bigwedge_{v\in V_\Gamma} dx_v \wedge d\bar x_v ,
\end{equation}
with $B^\C_{\Gamma,\mathbf{m},D,\ell_e}$ and $\beta^\C_{\Gamma,\mathbf{m},D,\ell_e}$
respectively as in \eqref{BGammaell} and \eqref{formellTaylor}, after replacing $\lambda$
with $D-1$.
\end{cor}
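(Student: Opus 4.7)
The proof follows the same pattern as that of Proposition \ref{Taylorexpand}, substituting the Laurent-type expansion of the complex massive propagator from Corollary \ref{CTaylorCor} into the complex Feynman amplitude formula \eqref{omegamZ}. The key observation is that Corollary \ref{CTaylorCor} has exactly the same structure as Lemma \ref{TaylorProp} under the substitution $\lambda \mapsto D-1$: the principal part is a finite sum running over $\ell \in \{0,\dots,D-2\}$ (versus $\{0,\dots,\lambda-1\}$), and the regular part is an infinite sum with matching form, modulo the overall prefactor $(2\pi)^{-D}$ in place of $(2\pi)^{-(\lambda+1)}$.

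First, I would factor the amplitude as
\[
\omega_{\Gamma,\mathbf{m}, \C} = (2\pi)^{-D|E_\Gamma|} \prod_{e \in E_\Gamma} \Big( m_e^{D-1} \|x_{s(e)} - x_{t(e)}\|^{-(D-1)} \cK_{D-1}(m_e \|x_{s(e)} - x_{t(e)}\|) \Big) \bigwedge_{v \in V_\Gamma} dx_v \wedge d\bar x_v,
\]
and replace each edge factor by its expansion from Corollary \ref{CTaylorCor}. For each edge $e$ separately, I would reindex the principal-part sum via $\ell_e = \ell - (D-1)$ so that $\ell \in \{0,\dots,D-2\}$ corresponds to $\ell_e \in \{-D+1,\dots,-1\}$, converting $\|x_{s(e)} - x_{t(e)}\|^{-2(D-1-\ell)}$ into $\|x_{s(e)} - x_{t(e)}\|^{-2\ell_e}$, while keeping $\ell_e = \ell \in \{0,1,\dots\}$ in the regular part.

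Next, I would distribute the product over the two index ranges to obtain a single multi-sum indexed by $\underline{\ell} = (\ell_e) \in \{-D+1,-D+2,\dots,\infty\}^{E_\Gamma}$. Collecting the numerical coefficients in each edge contribution yields precisely the quantities $B^\C_{\Gamma,\mathbf{m},D,\ell_e}$ of \eqref{BGammaell} (with $\lambda$ replaced by $D-1$), while the remaining distance/logarithmic factors are the $\beta^\C_{\Gamma,\mathbf{m},D,\ell_e}$ of \eqref{formellTaylor} under the same substitution. The global prefactor $(2\pi)^{-D|E_\Gamma|}$ is matched directly.

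The only real bookkeeping point, and thus the mild obstacle, is tracking signs and factorials in the two pieces of $B^\C$: the sign $(-1)^{\lambda+1}$ in the second branch of \eqref{BGammaell} becomes $(-1)^D$, which must be consistent with the overall sign $\left(\tfrac{-1}{2\pi}\right)^D$ appearing in the second sum of Corollary \ref{CTaylorCor}, and the prefactor $(-m_e^2)^\ell \cdot 2^{D-2-2\ell} (D-2-\ell)!/\ell!$ of the principal part must match $(-m_e^2)^{(D-1)+\ell_e} \cdot 2^{-(D-1)-\ell_e-1}(-\ell_e-1)!/((D-1)+\ell_e)!$ after the reindexing. Once this arithmetic is verified, the resulting expression is exactly \eqref{tayloromegaC}.
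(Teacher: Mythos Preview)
Your proposal is correct and is essentially the same approach as the paper's: the paper states this result as a corollary with no separate proof, the implicit argument being exactly what you describe---substitute the expansion of Corollary \ref{CTaylorCor} into \eqref{omegamZ} and repeat the proof of Proposition \ref{Taylorexpand} with $\lambda$ replaced by $D-1$. Your explicit reindexing and sign-checking simply spell out the bookkeeping the paper leaves tacit.
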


\subsubsection{Asymptotic expansions of massive propagators}

The asymptotic expansion of the Bessel functions also gives a 
formal expansion for the massive Euclidean propagator in terms of the distance
function.

\begin{lem}\label{Gmdistancelem}
The massive Euclidean propagator has an asymptotic expansion of the form
\begin{equation}\label{Gmdist}
G_{m,\R,D}(x-y) \sim \sqrt{\frac{\pi}{2}} \ \frac{e^{-m \| x-y \|}}{ ( 2 \pi)^{\lambda+1}} 
 \sum_{\ell=0}^\infty \frac{(\lambda,\ell)}{ 2^\ell m^{\ell- \lambda + 1/2}   } \| x- y \|^{-(\ell + \lambda + \frac{1}{2})},
\end{equation}
where $D=2\lambda+2$ and
\begin{equation}\label{lambdaell}
(\lambda,\ell) = \frac{\Gamma(\lambda+\ell+1/2)}{\ell! \, \Gamma(\lambda - \ell +1/2)}
\end{equation}
\end{lem}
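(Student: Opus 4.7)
The plan is to reduce the claim to the standard large-argument asymptotic expansion of the Macdonald function $\cK_\nu$, applied to the closed-form expression of $G_{m,\R,D}$ already established in Lemma \ref{GmBessellem}. Concretely, I start from
$$ G_{m,\R,D}(x-y) = (2\pi)^{-D/2}\, m^{D-2}\,(m\|x-y\|)^{-(D-2)/2}\,\cK_{(D-2)/2}(m\|x-y\|), $$
substitute $D=2\lambda+2$ (so that the order of the Bessel function is exactly $\nu=\lambda$, the exponent on $m$ is $2\lambda$, and the overall prefactor is $(2\pi)^{-(\lambda+1)}$), and simplify to
$$ G_{m,\R,D}(x-y) = (2\pi)^{-(\lambda+1)}\, m^{\lambda}\,\|x-y\|^{-\lambda}\,\cK_{\lambda}(m\|x-y\|). $$

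Next I invoke the classical asymptotic expansion of the Macdonald function (Watson, \emph{Treatise on Bessel Functions}, \S 7.23, or Abramowitz--Stegun 9.7.2), namely
$$ \cK_\nu(z) \sim \sqrt{\frac{\pi}{2z}}\, e^{-z}\,\sum_{\ell=0}^{\infty} \frac{(\nu,\ell)}{(2z)^\ell} \qquad (z\to\infty,\ |\arg z|<3\pi/2), $$
where $(\nu,\ell)=\Gamma(\nu+\ell+1/2)/(\ell!\,\Gamma(\nu-\ell+1/2))$ is the Hankel symbol, matching \eqref{lambdaell}. Setting $\nu=\lambda$ and $z=m\|x-y\|$, and plugging into the closed-form expression above, I collect the powers of $m$ and $\|x-y\|$: the prefactor $\sqrt{\pi/(2z)}$ contributes $\sqrt{\pi/2}\, m^{-1/2}\|x-y\|^{-1/2}$, the $m^\lambda\|x-y\|^{-\lambda}$ factor stays outside, and each term in the sum contributes an additional $(2m\|x-y\|)^{-\ell}$. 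Recombining yields exactly
$$ \sqrt{\frac{\pi}{2}}\,\frac{e^{-m\|x-y\|}}{(2\pi)^{\lambda+1}}\sum_{\ell=0}^{\infty}\frac{(\lambda,\ell)}{2^{\ell}\, m^{\ell-\lambda+1/2}}\,\|x-y\|^{-(\ell+\lambda+1/2)}, $$
which is \eqref{Gmdist}.

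There is no real obstacle here: the only delicate point is recognising that the expansion of $\cK_\nu$ is only \emph{asymptotic} rather than convergent (the Hankel symbols $(\lambda,\ell)$ eventually grow faster than geometrically in $\ell$), so the equality in \eqref{Gmdist} must be read in the Poincar\'e asymptotic sense as $m\|x-y\|\to\infty$, with a remainder of the order of the first omitted term. This is consistent with the use of $\sim$ in the statement, and no uniformity in $m$ or in $\|x-y\|$ separately is asserted --- only in the product $m\|x-y\|$. Once this interpretive point is noted, the argument is just bookkeeping of constants and powers.
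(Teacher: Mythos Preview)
Your proof is correct and follows essentially the same route as the paper: both arguments simply substitute the classical large-argument asymptotic expansion of $\cK_\nu$ (cited from Watson) into the Bessel-function representation of $G_{m,\R,D}$ from Lemma~\ref{GmBessellem}. Your write-up is more explicit about the bookkeeping of powers of $m$ and $\|x-y\|$ and about the asymptotic (rather than convergent) nature of the expansion, but the underlying argument is identical.
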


\proof The modified Bessel function $\cK_\nu(z)$ has  asymptotic expansion, see p. 202 in \cite{Watson}
\begin{equation}\label{Knuexp}
 \cK_\nu(z) \sim \left(\frac{\pi}{2 z} \right)^{1/2}\, e^{-z} \, \sum_{\ell=0}^\infty \frac{(\nu,\ell)}{(2z)^\ell}, 
\end{equation} 
with
\begin{equation}\label{nuell}
 (\nu,\ell) = \frac{\Gamma(\nu+\ell+1/2)}{\ell! \, \Gamma(\nu - \ell +1/2)}. 
\end{equation} 
\endproof

\subsubsection{Asymptotic expansions of Feynman amplitudes}
We obtain a corresponding formal expansion for the Feynman amplitude.

\begin{prop}\label{Greenexpand}
The Feynman amplitude $\omega_{\Gamma,\mathbf{m}, \R}$ of \eqref{Romegam} has
asymptotic expansion
\begin{equation}\label{expandomega}
\omega_{\Gamma,\mathbf{m}, \R} \sim \sum_{\underline{\ell}=(\ell_e)\in \N^{E_\Gamma}}
C^\R_{\Gamma,\mathbf{m},D,\underline{\ell}} \,\,\, \omega_{\Gamma,\mathbf{m},D,\underline{\ell},\R},
\end{equation}
where
\begin{equation}\label{CGammaell}
C^\R_{\Gamma,\mathbf{m},D,\underline{\ell}} := 2^{- \frac{|E_\Gamma|}{2}} \  \pi^{-(\lambda - \frac{1}{2})|E_\Gamma|}
 \, \prod_e \frac{(\lambda,\ell_e) }{ 2^{\ell_e + \lambda +1} m_e^{\ell_e - \lambda + 1/2} }
\end{equation}
\begin{equation}\label{formellm}
\omega_{\Gamma,\mathbf{m},D,\underline{\ell},\R} :=  
\prod_{e\in E_\Gamma} 
\frac{e^{-m_e \sum_e \| x_{s(e)}-x_{t(e)} \|}} {\| x_{s(e)}- x_{t(e)} \|^{(\ell_e +\lambda + \frac{1}{2})}}  \bigwedge_{v\in V_\Gamma} dx_v .
\end{equation}
\end{prop}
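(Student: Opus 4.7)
The strategy is a direct substitution-and-reorganization argument, parallel to the proof of Proposition \ref{Taylorexpand} but starting from the asymptotic expansion of Lemma \ref{Gmdistancelem} instead of the Laurent expansion of Lemma \ref{TaylorProp}. Using \eqref{Romegam} together with \eqref{GmBessel}, one first rewrites the Feynman amplitude in the manifestly edge-factored form
\[
\omega_{\Gamma,\mathbf{m},\R}
 = \prod_{e\in E_\Gamma} G_{m_e,\R,D}(x_{s(e)}-x_{t(e)}) \, \bigwedge_{v\in V_\Gamma} dx_v.
\]
The plan is then to substitute the asymptotic expansion of Lemma \ref{Gmdistancelem} for each edge propagator and collect the resulting multi-series.

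Concretely, setting $r_e := \|x_{s(e)}-x_{t(e)}\|$, Lemma \ref{Gmdistancelem} yields, for each $e \in E_\Gamma$,
\[
G_{m_e,\R,D}(x_{s(e)}-x_{t(e)})
 \sim \sqrt{\frac{\pi}{2}}\,\frac{e^{-m_e r_e}}{(2\pi)^{\lambda+1}}
 \sum_{\ell_e=0}^\infty \frac{(\lambda,\ell_e)}{2^{\ell_e}\, m_e^{\ell_e-\lambda+1/2}} \, r_e^{-(\ell_e+\lambda+\frac{1}{2})}.
\]
The exponential prefactors combine into the single factor $e^{-\sum_{e} m_e r_e}$ visible in \eqref{formellm}. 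Distributing the product over the $|E_\Gamma|$ independent asymptotic sums, one obtains a single sum indexed by multi-indices $\underline{\ell} = (\ell_e)_{e\in E_\Gamma} \in \N^{E_\Gamma}$. The summand is a product, over $e$, of the edge-local factors $(\lambda,\ell_e)/(2^{\ell_e} m_e^{\ell_e-\lambda+1/2})$ times $r_e^{-(\ell_e+\lambda+1/2)}$. Pulling the edge-independent constants $(\sqrt{\pi/2})^{|E_\Gamma|}(2\pi)^{-(\lambda+1)|E_\Gamma|}$ out of the sum and combining them with the powers of $2$ and the $m_e^{-(\ell_e-\lambda+1/2)}$ factors produces exactly $C^\R_{\Gamma,\mathbf{m},D,\underline{\ell}}$ as in \eqref{CGammaell}, while the remaining $r_e$-dependent pieces and the common exponential assemble into $\omega_{\Gamma,\mathbf{m},D,\underline{\ell},\R}$ of \eqref{formellm}. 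This yields \eqref{expandomega}.

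The only conceptual step is the justification that the product of the per-edge Poincar\'e-asymptotic expansions (valid in the regime $r_e \to \infty$) is itself an asymptotic expansion of the product in the appropriate sense. This is standard: each edge factor is a smooth function $e^{-m_e r_e}$ times a genuine Poincar\'e asymptotic series in $r_e^{-1}$, and the termwise product of finitely many such series remains an asymptotic expansion of the product on the common regime $\min_e r_e \to \infty$. This is the only step that is not purely formal, and it is the main thing one would need to verify carefully when turning the formal manipulation above into a rigorous argument.
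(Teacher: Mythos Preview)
Your proof is correct and follows exactly the same approach as the paper, which simply states that the result follows by substituting the asymptotic expansion \eqref{Gmdist} of the propagator into \eqref{Romegam}. Your version is in fact more detailed than the paper's one-line proof, including the explicit bookkeeping of constants and the remark on multiplying finitely many Poincar\'e asymptotic series.
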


\proof The statement   follows by  substituting the asymptotic expansion of propagator \eqref{Gmdist} 
in \eqref{Romegam}.
\endproof

\subsubsection{Complex case}

Similarly, we obtain an asymptotic expansion for the massive
propagator in the complex case and a corresponding expansion
of the Feynman amplitude.

\begin{cor}\label{GreenexpandC}
The massive Euclidean propagator in the complex case has an asymptotic expansion of the form
\begin{equation}\label{Gmdist}
G_{m,\C,D}(x-y) \sim 
\sqrt{\frac{\pi}{2}} \ \frac{e^{-m \| x-y \|}}{  (2\pi)^D   } 
 \sum_{\ell=0}^\infty \frac{(D-1,\ell)}{ 2^\ell m^{\ell- D+ 3/2}   } \| x- y \|^{-(\ell + D - \frac{1}{2})}.
\end{equation}
The Feynman amplitude $\omega_{\Gamma,\mathbf{m}, \C}$ of \eqref{omegamZ} has
asymptotic expansion
\begin{equation}\label{expandomega}
\omega_{\Gamma,\mathbf{m}, \C} \sim \sum_{\underline{\ell}=(\ell_e)\in \N^{E_\Gamma}}
C^\C_{\Gamma,\mathbf{m},D,\underline{\ell}} \,\,\, \omega_{\Gamma,\mathbf{m},D,\underline{\ell},\C},
\end{equation}
where the coefficients $C^\C_{\Gamma,\mathbf{m},D,\underline{\ell}}$ are as
in \eqref{CGammaell}, with $\lambda$ replaced by $D-1$, and the forms
are given by
 \begin{equation}\label{formellmC}
\omega_{\Gamma,\mathbf{m},D,\underline{\ell},\C} :=  
\prod_{e\in E_\Gamma} 
\frac{e^{-m_e \sum_e \| x_{s(e)}-x_{t(e)} \|}} {\| x_{s(e)}- x_{t(e)} \|^{(\ell_e +D- \frac{1}{2})}}  
\bigwedge_{v\in V_\Gamma} dx_v \wedge d\bar x_v.
\end{equation}
\end{cor}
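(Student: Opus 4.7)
The plan is to obtain the corollary by direct substitution, paralleling the real-case proof of Proposition \ref{Greenexpand}, with the Bessel parameter $\lambda$ replaced by $D-1$ to reflect the change in the propagator from \eqref{RGm} to \eqref{CGm}. First, I start from the Bessel-function representation of the complex massive propagator given in Lemma \ref{CGmBessel}, namely
$$G_{m,\C,D}(x-y)=(2\pi)^{-D}\,m^{D-1}\,\|x-y\|^{-(D-1)}\,\cK_{D-1}(m\|x-y\|),$$
and then plug in the standard asymptotic expansion \eqref{Knuexp} of the Macdonald function, specialized at $\nu=D-1$ and $z=m\|x-y\|$. Collecting the resulting prefactors $m^{D-1}(2m)^{-1/2}m^{-\ell}=m^{D-3/2-\ell}$ and the combined powers $\|x-y\|^{-(D-1)-1/2-\ell}=\|x-y\|^{-(\ell+D-1/2)}$, together with the scalar $(2\pi)^{-D}\sqrt{\pi/2}$, yields precisely the expansion of $G_{m,\C,D}(x-y)$ stated in the corollary.

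For the Feynman amplitude, I substitute this asymptotic expansion of each edge propagator $G_{m_e,\C,D}(x_{s(e)}-x_{t(e)})$ into the defining formula \eqref{omegamZ} and expand the product over $E_\Gamma$ as a sum indexed by multi-indices $\underline{\ell}=(\ell_e)_{e\in E_\Gamma}\in \N^{E_\Gamma}$, exactly as in the proof of Proposition \ref{Greenexpand}. The exponentials combine to give $\exp\bigl(-\sum_e m_e\|x_{s(e)}-x_{t(e)}\|\bigr)$, the powers of $\|x_{s(e)}-x_{t(e)}\|$ give the denominator $\|x_{s(e)}-x_{t(e)}\|^{\ell_e+D-1/2}$, and the volume factors $\bigwedge_v dx_v\wedge d\bar x_v$ ride along inertly. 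This produces the form $\omega_{\Gamma,\mathbf{m},D,\underline{\ell},\C}$ of \eqref{formellmC}, while the remaining scalar prefactors assemble into $C^\C_{\Gamma,\mathbf{m},D,\underline{\ell}}$ given by \eqref{CGammaell} with $\lambda$ replaced by $D-1$.

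The only bookkeeping required is to confirm the constants: the global factor $\bigl(\sqrt{\pi/2}\bigr)^{|E_\Gamma|}(2\pi)^{-D|E_\Gamma|}=2^{-|E_\Gamma|/2}\pi^{-(D-1/2)|E_\Gamma|}2^{-D|E_\Gamma|}$, together with the per-edge scalars $(D-1,\ell_e)/(2^{\ell_e}m_e^{\ell_e-D+3/2})$, matches the formula \eqref{CGammaell} after the substitution $\lambda\mapsto D-1$. No substantive obstacle arises; the argument is purely formal, since \eqref{Knuexp} is only an asymptotic expansion and no convergence claim is involved. The statement is essentially forced once the index $\nu=D-1$ of the Bessel function appearing in \eqref{GmCBessel} is correctly tracked through the computation.
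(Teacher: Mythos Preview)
Your proposal is correct and follows precisely the paper's approach: the corollary is stated without a separate proof because it is obtained from Lemma~\ref{Gmdistancelem} and Proposition~\ref{Greenexpand} by the substitution $\lambda\mapsto D-1$, which is exactly the computation you carry out starting from Lemma~\ref{CGmBessel} and the asymptotic expansion~\eqref{Knuexp}. Your level of detail in fact exceeds what the paper provides for the real-case proof of Proposition~\ref{Greenexpand}.
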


\smallskip

\begin{rem}\label{Stokesrem} {\rm
The asymptotic expansion of the Bessel functions exhibits an interesting
Stokes phenomenon, see \cite{Boyd}. It would be interesting to interpret
this in the setting of Feynman integrals, perhaps along the lines of the
role of monodromies in the massive momentum space integrals in \cite{BK}.
}\end{rem}

\section{Real case: $x$-space method and Gegenbauer polynomials}\label{GegenSec}

We explain in this section the reason why it is preferable to pass from
the {\em real} to the {\em complex} formulation of Feynman amplitudes.
As we already showed in \cite{CeyMar2}, since in the real case the 
form is not closed, one cannot appeal to cohomological arguments to
reformulate the Feynman integral computation, and the only 
available method is a direct expansion of the Feynman amplitude in
orthogonal polynomials and integration on certain chains in the
configuration space that lead to nested sums. We have developed the
corresponding computation in detail in the massless case in \S 4 of
\cite{CeyMar2}, so we will focus here on how one may be able to
set up a similar computational machinery for the massive case. As one
already sees in the massless case, the computations quickly become 
intractable as the graph grows in complexity. By contrast, as we will 
see in the following section, \S \ref{RBSec}, the {\em complex} case leads directly
to a much more elegant algebro-geometric formalism for the
regularization and renormalization of Feynman amplitudes.

\medskip
\subsection{Gegenbauer polynomials and spherical harmonics}

We recall briefly some well known facts about Gegenbauer polynomials
and their relation to spherical harmonics, which we need in the following.
We refer the reader to \cite{SW} for a more detailed treatment.

\smallskip
\subsubsection{Generating function an orthogonality}

The Gegenbauer polynomials $C^{(\lambda)}_n(x)$ are determined
by the generating function, for $|t|<1$,
\begin{equation}\label{GegenGen}
\frac{1}{(1-2tx+t^2)^\lambda} = \sum_{n=0}^\infty C^{(\lambda)}_n(x)\, t^n .
\end{equation}
They satisfy the orthogonality relation, for $\lambda > 1/2$,
\begin{equation}\label{Geg2}
 \int_{-1}^1 C_n^{(\lambda)}(x) C_m^{(\lambda)}(x) \, (1-x^2)^{\lambda -1/2}  dx =\delta_{n,m}
\frac{\pi 2^{1-2\lambda} \Gamma(n+2\lambda)}{n! (n+\lambda) \Gamma(\lambda)^2}. 
\end{equation}

\smallskip
\subsubsection{Zonal and spherical harmonics}

Let $\{ Y_j \}$ be an orthonormal basis  of the Hilbert space $\cH_n(S^{D-1})$ 
of spherical harmonics on $S^{D-1}$ of degree $n$. 
The  {\em zonal spherical harmonics} $Z_{\omega_1}^{(n)}(\omega_2)$
are given by
\begin{equation}\label{zonalS2}
 Z_{\omega_1}^{(n)}(\omega_2) = \sum_{j=1}^{\dim \cH_n(S^{D-1})} Y_j(\omega_1) \overline{Y_j(\omega_2)}. 
\end{equation}

The Gegenbauer polynomials are related to 
the zonal spherical harmonics by 
\begin{equation}\label{zonalS}
C^{(\lambda)}_n(\omega_1\cdot \omega_2) = c_{D,n}\, Z_{\omega_1}^{(n)}(\omega_2),
\end{equation}
for $D=2\lambda+2$, with $\omega_1,\omega_2\in S^{D-1}$, 
with the coefficient $c_{D,n}$ given by
\begin{equation}\label{cDn}
 c_{D,n} = \frac{Vol(S^{D-1})\, (D-2)}{2n +D-2}.
\end{equation} 

\smallskip
\subsubsection{Expansions in Gegenbauer polynomials} 

A general procedure for obtaining expansions in Gegenbauer polynomials
is described in \cite{BDS}. For our purposes, it suffices to consider 
the particular case of the expansion
\begin{equation}\label{tmGegen}
x^m = 2^{-m} \Gamma(\lambda) m! \sum_{k=0}^{[m/2]} \frac{(\lambda +m - 2k)}
{k! \,\Gamma(\lambda + m +1 -k)} \, C^{(\lambda)}_{m-2k}(x).
\end{equation}
for a non-negative integer $m$ and for $\lambda \geq 1/2$.

\smallskip
\subsubsection{Chebyshev polynomials of the first kind}

The Chebyshev polynomials of the first kind have generating function 
\begin{equation}\label{ChebGenerate}
- \frac{1}{2} \log ( 1-2tx+t^2  ) = \sum_{n=1}^\infty T_n(x) \frac{t^n}{n} .
\end{equation}
They are expressible as a limit of Gegenbauer polynomials,
\begin{equation}\label{ChebGeglim}
T_n(x) =\frac{n}{2} \lim_{\lambda \to 0} \frac{C^{(\lambda)}_n(x)}{\lambda} .
\end{equation}
They can also be written as combinations of Gegenbauer polynomials of a fixed
weight $\lambda$: the expression as a function of Gegenbauer polynomials 
can be obtained inductively, using the recurrence relation $T_0(x)=1$,
$T_1(x)=x$, and $T_{n+1}(x)=2 x T_n(x)-T_{n-1}(x)$ for the 
Chebyshev polynomials, together with \eqref{tmGegen}. We denote by
$c^\lambda_{n,m}$ the resulting coefficients,
\begin{equation}\label{clamnm}
T_n(x)  =\sum_{m=0}^n  c^\lambda_{n,m} \, C^{(\lambda)}_m(x).
\end{equation}

\medskip
\subsection{Expansion of Feynman amplitudes in Gegenbauer polynomials}

We described in \S 4  of \cite{CeyMar2} how to expand the real Euclidean
massless Feynman amplitudes in Gegenbauer polynomials. Here we extend
the procedure to the massive case. We use the expansion of the
massive Feynman amplitudes given in Proposition \ref{Taylorexpand}. 

\begin{thm}\label{GegenFeyn}
The forms $(2\pi)^{-(\lambda+1)} B^\R_{\Gamma,\mathbf{m},D,\ell_e} \,
\beta^\R_{\Gamma,\mathbf{m},D,\ell_e}$ in the expansion
\eqref{tayloromega} of the real massive Euclidean Feynman amplitude 
have a series expansion in Gegenbauer polynomials, either of the form
$$ \frac{B^\R_{\Gamma,\mathbf{m},D,\ell_e}}{(2\pi)^{(\lambda+1)}}\,
\beta^\R_{\Gamma,\mathbf{m},D,\ell_e} = \rho_e^{-2\ell_e}
\sum_{n_e=0}^\infty (\frac{r_e}{\rho_e})^{n_e} \sum_{k_e=0}^{[n_e/2]}
f^\lambda_{n_e,k_e} \sum_{j_e=0}^{[(n_e-2k_e)/2]} d^\lambda_{n_e,k_e,j_e} \,
C^{(\lambda)}_{n_e-2(k_e+j_e)}( \omega_{s(e)}\cdot \omega_{t(e)}) $$
or of the form
$$ \frac{B^\R_{\Gamma,\mathbf{m},D,\ell_e}}{(2\pi)^{(\lambda+1)}}\,
\beta^\R_{\Gamma,\mathbf{m},D,\ell_e} =
\rho_e^{2\ell_e} ( \log(\frac{m_e}{2} \rho_e) \sum_{n_e=0}^{\ell_e}
a^\lambda_{n_e} (\frac{r_e}{\rho_e})^{n_e} \sum_{k_e=0}^{[n_e/2]}
b^\lambda_{n_e,k_e} \, C^{(\lambda)}_{n_e-2k_e}( \omega_{s(e)}\cdot \omega_{t(e)}) $$
$$ + \rho_e^{2\ell_e}  \sum_{n_e=0}^{\ell_e}
a^\lambda_{n_e}\sum_{k_e=0}^{[n_e/2]}
b^\lambda_{n_e,k_e} \sum_{p_e=0}^\infty \frac{1}{p_e} (\frac{r_e}{\rho_e})^{n_e+p_e}
\sum_{s_e=0}^{p_e} c^\lambda_{p_e,s_e} \sum_{r_e=0}^{[(p_e+s_e)/2]}
\alpha^\lambda_{r_e,p_e,m_e} \, \cdot $$
$$ \cdot \, \sum_{j_e=0}^{[(p_e+s_e-2r_e)/2]} \beta^\lambda_{r_e,j_e,p_e,s_e}
C^{(\lambda)}_{p_e+s_e-2(r_e+j_e)}(\omega_{s(e)}\cdot \omega_{t(e)}). $$
In the case where $D=2\lambda+2$ with $\lambda$ an integer, the 
expansion has coefficients $a^\lambda$, $b^\lambda$, $c^\lambda$,
$d^\lambda$,  $f^\lambda$,  $\alpha^\lambda$, and $\beta^\lambda$ 
in 
$$\Q[m_e, \log(m_e), \pi^{-1}, \gamma, \log(2)], 
$$
where $\gamma$ is the Euler-Mascheroni constant. 
In the case with $D=2\lambda+2$ with $\lambda$ a half-integer, the expansion
has coefficients in $$\Q[m_e, \log(m_e), \sqrt{\pi}^{\pm 1}, \gamma, \log(2)].$$
\end{thm}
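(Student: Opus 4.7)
The plan is to reduce each form $\beta^\R_{\Gamma,\mathbf{m},D,\ell_e}$ appearing in Proposition \ref{Taylorexpand} to a Gegenbauer series of weight $\lambda=(D-2)/2$ by composing the generating functions \eqref{GegenGen} and \eqref{ChebGenerate} with the monomial-to-Gegenbauer formula \eqref{tmGegen} and the Chebyshev-to-Gegenbauer formula \eqref{clamnm}. On each edge $e$ I would introduce radial coordinates $r_e\le\rho_e$ and unit vectors $\omega_{s(e)},\omega_{t(e)}\in S^{D-1}$ so that, with $t_e=r_e/\rho_e<1$ and $u_e=\omega_{s(e)}\cdot\omega_{t(e)}$,
\[
\|x_{s(e)}-x_{t(e)}\|^{2} = \rho_e^{2}\bigl(1-2t_e u_e+t_e^{2}\bigr),
\]
so the expansion proceeds edge by edge in the variable $u_e$.

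For $\ell_e\in\{-\lambda,\ldots,-1\}$ the form is a power of $(1-2t_e u_e+t_e^{2})$ up to an overall $\rho_e$-power. Applying \eqref{GegenGen} with parameter $\mu=|\ell_e|$ produces $\sum_{n_e}C^{(|\ell_e|)}_{n_e}(u_e)\,t_e^{n_e}$. Writing each polynomial $C^{(|\ell_e|)}_{n_e}(u_e)$ explicitly as a polynomial in $u_e$ (with rational coefficients) gives the inner sum over $k_e$ with weights $f^\lambda_{n_e,k_e}$, and reducing each resulting monomial $u_e^{n_e-2k_e}$ by \eqref{tmGegen} gives the second inner sum over $j_e$ with weights $d^\lambda_{n_e,k_e,j_e}$ and Gegenbauer polynomials $C^{(\lambda)}_{n_e-2(k_e+j_e)}(u_e)$ of the target weight.

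For $\ell_e\ge 0$ I would split $\log(m_e\|x-y\|/2)=\log(m_e\rho_e/2)+\tfrac{1}{2}\log(1-2t_e u_e+t_e^{2})$. The factor $\|x-y\|^{2\ell_e}=\rho_e^{2\ell_e}(1-2t_e u_e+t_e^{2})^{\ell_e}$ is a finite polynomial in $t_e$ and $u_e$; expanded binomially in $t_e$ and then converted to weight-$\lambda$ Gegenbauers via \eqref{tmGegen}, it supplies the coefficients $a^\lambda_{n_e},b^\lambda_{n_e,k_e}$ of the first line of the theorem, multiplied by the scalar $\log(m_e\rho_e/2)$. For the nontrivial log term, \eqref{ChebGenerate} gives $-\sum_{p_e\ge 1}T_{p_e}(u_e)\,t_e^{p_e}/p_e$, and \eqref{clamnm} converts each $T_{p_e}$ to a weight-$\lambda$ Gegenbauer sum (contributing $c^\lambda_{p_e,s_e}$). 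Multiplying this series by the already-expanded $\|x-y\|^{2\ell_e}$ and linearizing each product of two Gegenbauer polynomials (by rewriting each factor as a polynomial in $u_e$ and reapplying \eqref{tmGegen}) yields the remaining nested sum, with $\alpha^\lambda_{r_e,p_e,\cdot}$ and $\beta^\lambda_{r_e,j_e,p_e,s_e}$ encoding the two successive linearization steps.

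Tracking the coefficient field: the constants produced by \eqref{GegenGen}, \eqref{tmGegen}, \eqref{ChebGenerate}, \eqref{clamnm} are rational functions of $\lambda$, hence lie in $\Q$ for both integer and half-integer $\lambda$. The scalar prefactors $B^\R_{\Gamma,\mathbf{m},D,\ell_e}$ of \eqref{BGammaell} lie in $\Q[m_e^{2}]$, and the separated $\log(m_e\rho_e/2)$ contributes $\log m_e$ and $\log 2$. The digamma constant $(\psi(\ell_e+1)+\psi(\lambda+\ell_e+1))/2$ is evaluated at positive integers when $\lambda\in\Z$, giving a value in $-\gamma+\Q$; for half-integer $\lambda$ one argument becomes a positive half-integer and the identity $\psi(1/2)=-\gamma-2\log 2$ forces an additional $\log 2$. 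The global factor $(2\pi)^{-(\lambda+1)}$ contributes $\pi^{-1}$ in the integer case and $\sqrt{\pi}^{\pm 1}$ in the half-integer case. Collecting these gives $\Q[m_e,\log m_e,\pi^{-1},\gamma,\log 2]$ and $\Q[m_e,\log m_e,\sqrt{\pi}^{\pm 1},\gamma,\log 2]$ respectively. The main obstacle is combinatorial bookkeeping in the second case, where the product of a finite polynomial in $(t_e,u_e)$ with the infinite Chebyshev-logarithm series must be reorganized into the precise triply-nested Gegenbauer sum stated; the analytic ingredients are elementary, but matching indices cleanly requires careful use of the product linearization for $C^{(\lambda)}_n(u_e)$.
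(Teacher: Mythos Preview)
Your proposal is correct and follows essentially the same route as the paper: expand negative powers via the Gegenbauer generating function \eqref{GegenGen} at the ``wrong'' weight and convert to weight $\lambda$ through \eqref{Gegenexpl} and \eqref{tmGegen}; expand positive powers as a finite polynomial and the logarithm via the Chebyshev generating function \eqref{ChebGenerate} and \eqref{clamnm}; then linearize the resulting Gegenbauer products by passing through monomials and back. The only minor difference is in the coefficient-field bookkeeping---you note that all the conversion coefficients are rational functions of $\lambda$ and locate the $\sqrt{\pi}^{\pm 1}$ in the global factor $(2\pi)^{-(\lambda+1)}$, whereas the paper instead tracks explicit values $\Gamma(\lambda+n)$ and $\psi(\lambda+n)$ at half-integers; both analyses land in the same ring.
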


\proof Consider the forms $\beta^\R_{\Gamma,\mathbf{m},D,\ell_e}$,
given as in \eqref{formellTaylor}. In the case 
$$ \beta^\R_{\Gamma,\mathbf{m},D,\ell_e} = \| x_{s(e)}-x_{t(e)} \|^{-2\ell} $$
we have an expansion
$$ \| x_{s(e)}-x_{t(e)} \|^{-2\ell} = \rho_e^{-2\ell_e} (1+(\frac{r_e}{\rho_e})^2 - 
\frac{r_e}{\rho_e} \omega_{s(e)}\cdot \omega_{t(e)} )^{-\ell_e} = $$
$$ \rho_e^{-2\ell_e} \sum_{n_e=0}^\infty C^{(\ell_e)}_{n_e}(\omega_{s(e)}\cdot \omega_{t(e)})
\, (\frac{r_e}{\rho_e})^{n_e}. $$
This is already an expansion in Gegenbauer polynomials, but with varying weight $\ell_e$.
We need instead an expansion in the Gegenbauer polynomials $C^{(\lambda)}_n$ with
the fixed $\lambda$ related to the spacetime dimension by $D=2\lambda +2$, which
are the polynomials related to the spherical harmonics on $S^{D-1}$.

We can turn the expansion above into an expansion in Gegenbauer polynomials
$C^{(\lambda)}_n$, using the explicit expression for the Gegenbauer
polynomials
\begin{equation}\label{Gegenexpl}
C^{(\lambda)}_n(x)=\sum_{k=0}^{[n/2]} (-1)^k \frac{\Gamma(\lambda + n -k)}{k! (n-2k)! 
\Gamma(\lambda)} \, (2 x)^{n-2k},
\end{equation}
together with \eqref{tmGegen}, which gives
\begin{equation}\label{GegenGegen}
C^{(\ell_e)}_{n_e}(x)=\sum_{k=0}^{[n_e/2]} (-1)^k \frac{\Gamma(\ell_e + n_e -k) \Gamma(\lambda)}
{k! \,  \Gamma(\ell_e)} \,  
\sum_{j=0}^{[(n_e-2k)/2]} \frac{(\lambda +n_e-2(k +j))}
{j! \,\Gamma(\lambda + n_e-2k +1 -j)} \, C^{(\lambda)}_{n_e-2(k+j)}(x).
\end{equation}
Thus, we obtain
$$  \| x_{s(e)}-x_{t(e)} \|^{-2\ell} = $$
$$ \rho_e^{-2\ell_e} \sum_{n_e=0}^\infty (\frac{r_e}{\rho_e})^{n_e} 
\sum_{k=0}^{[n_e/2]} (-1)^k \frac{\Gamma(\ell_e + n_e -k) \Gamma(\lambda)}
{k! \,  \Gamma(\ell_e)} \,  
\sum_{j=0}^{[(n_e-2k)/2]} \frac{(\lambda +n_e-2(k +j))}
{j! \,\Gamma(\lambda + n_e-2k +1 -j)} \, C^{(\lambda)}_{n_e-2(k+j)}(\omega_{s(e)}\cdot \omega_{t(e)}). 
$$

\smallskip

In the case where
$$ \beta^\R_{\Gamma,\mathbf{m},D,\ell_e} = \| x_{s(e)}-x_{t(e)} \|^{2\ell} 
( \log( \frac{m_e  \| x_{s(e)}-x_{t(e)} \|}{2} ) -\frac{\psi(\ell_e+1)+\psi(\lambda +\ell_e+1)}{2}) $$
we take the Taylor expansion
$$ \rho_e^{2\ell_e} 
(1-2 \frac{r_e}{\rho_e}  x + (\frac{r_e}{\rho_e})^2)^{\ell_e} = $$
$$ \rho_e^{2\ell_e}  \sum_{n=0}^{\ell_e} \frac{\Gamma(\ell_e+1)}{\Gamma(\ell_e-n)} (-1)^n
(2 \frac{r_e}{\rho_e})^n (1+ (\frac{r_e}{\rho_e})^2)^{\ell_e-n}\, x^n, $$
which, combined with \eqref{tmGegen} gives
$$ \| x_{s(e)}-x_{t(e)} \|^{2\ell}  = \rho_e^{2\ell_e} 
(1-2 \frac{r_e}{\rho_e} \omega_{s(e)}\cdot \omega_{t(e)}  + (\frac{r_e}{\rho_e})^2)^{\ell_e} = $$
$$ \rho_e^{2\ell_e} \sum_{n=0}^{\ell_e} 
\frac{\Gamma(\ell_e+1)\Gamma(\lambda) n! }{\Gamma(\ell_e-n)} (-1)^n
(\frac{r_e}{\rho_e})^n (1+ (\frac{r_e}{\rho_e})^2)^{\ell_e-n}\, 
 \sum_{k=0}^{[n/2]} \frac{(\lambda +n - 2k)}
{k! \,\Gamma(\lambda + n +1 -k)} \, C^{(\lambda)}_{n-2k}(\omega_{s(e)}\cdot \omega_{t(e)}). $$
We also use the expansion \eqref{ChebGenerate} into Chebyshev polynomials and
the expression of Chebyshev polynomials in terms of Gegenbauer polynomials \eqref{clamnm}.
This gives 
$$  \log( \frac{m_e  \| x_{s(e)}-x_{t(e)} \|}{2} )= \log(m_e)-\log(2) +\log( \| x_{s(e)}-x_{t(e)} \|) $$
where the first terms contribute coefficients in $\Q[\log(m_e), \log(2)]$ to the expansion
and the last term is expanded as
$$ \log( \| x_{s(e)}-x_{t(e)} \| )=\log(\rho_e) + \log((1+(\frac{r_e}{\rho_e})^2 -2 \frac{r_e}{\rho_e}
\omega_{s(e)}\cdot \omega_{t(e)})^{1/2}) = $$
$$ \log(\rho_e) - \sum_{n=1}^\infty T_n(\omega_{s(e)}\cdot \omega_{t(e)}) 
\frac{1}{n} (\frac{r_e}{\rho_e})^n = $$
$$  \log(\rho_e) - \sum_{n=1}^\infty  \frac{1}{n} (\frac{r_e}{\rho_e})^n
\sum_{m=0}^n  c^\lambda_{n,m} \, C^{(\lambda)}_m
(\omega_{s(e)}\cdot \omega_{t(e)}) . $$
As one can see from \eqref{tmGegen}, the coefficients $c^\lambda_{n,m}$ are in $\Q$
when $\lambda$ is an integer. When $\lambda$ is a 
half-integer, the coefficients contain values of the Gamma function
at half integers. By the duplication formula $\Gamma(z)\Gamma(z+\frac{1}{2}) = 2^{1-2z} \sqrt{\pi} \Gamma(2z)$, the values of $\Gamma(z)$ at half integers are in $\Q[\sqrt{\pi}]$, hence the
$c^\lambda_{n,m}$ are in $\Q[\sqrt{\pi}^{\pm 1}]$.

\smallskip

Finally, observe that the products of Gegenbauer polynomials that we obtain when
we combine the expansion of $ \| x_{s(e)}-x_{t(e)} \|^{2\ell}$ with that of
$\log( \| x_{s(e)}-x_{t(e)} \| )$ in the product term, can in turn be expressed in terms
of a sum of Gegenbauer polynomials. In fact, using \eqref{Gegenexpl} we have
$$ C^{(\lambda)}_n(x) \, C^{(\lambda)}_m(x) =
\sum_{k=0}^{[n/2]} \sum_{j=0}^{[m/2]} (-1)^{j+k} 
 \frac{\Gamma(\lambda + n -k) \Gamma(\lambda + m -j)}
 {k! j! (n-2k)! (m-2j)! \Gamma(\lambda)^2} \, (2 x)^{n+m-2(k+j)}
$$
Setting
$$ \alpha_{r,n,m}^\lambda = \frac{(-1)^r 2^{m+n-2r}}{\Gamma(\lambda)^2} 
2^{-(n+m-2r)}  \Gamma(\lambda) (n+m-2r)! 
\sum_{j+k=r} \frac{\Gamma(\lambda + n -k) \Gamma(\lambda + m -j)}
 {k! j! (n-2k)! (m-2j)! } $$
 $$ \beta_{r,k,n,m}^\lambda =\frac{(\lambda +n+m-2(r+k))}
{k! \,\Gamma(\lambda + n+m-2r +1 -k)} $$
we obtain
$$  C^{(\lambda)}_n(x) \, C^{(\lambda)}_m(x) =\sum_{r=0}^{[(m+n)/2]} \alpha_{r,n,m}^\lambda 
\sum_{k=0}^{[(n+m-2r)/2]} 
 \beta_{r,k,n,m}^\lambda
 \, C^{(\lambda)}_{n+m-2(r+k)}(x)
$$ 

\smallskip

When $\lambda$ is an integer,  the values $\Gamma(\lambda+n)$, $n\in \N$ of the 
Gamma function are integers, while the values of the Digamma function
$\psi(\lambda +n)$, $n\in \N$ are of the form $H_{\lambda+n-1} - \gamma$ when $\lambda$
is an integer, where $H_n$ is the $n$-th harmonic number and $\gamma$ is the  
Euler--Mascheroni constant. 
Thus, the coefficients of the expansion of the Feynman amplitude
obtained in this way are in $\Q[m_e, \log(m_e), (2\pi)^{-1}, \gamma]$. When $\lambda$ is a 
half-integer, we also have coefficients that have, in the numerator or denominator,
values at half-integers of the Gamma function, and which are therefore in $\Q[\sqrt{\pi}^{\pm 1}]$.
Moreover, when $\lambda$ is a half-integer, there are also
coefficients in the expansion that contain values of the Digamma function 
at half-integers and these are of the
form $\psi(n+\frac{1}{2})=-\gamma - 2 \log(2) + \sum_{k=1}^n \frac{2}{k-1}$, 
in $\Q[\gamma, \log(2)]$.  Therefore, the  coefficients must be in 
$\Q[m_e, \log(m_e), \sqrt{\pi}^{\pm 1},\gamma, \log(2)]$. 
\endproof

\medskip

We can then apply the same computational method described in \S 4 of \cite{CeyMar2}
to the terms of this expansion, using the decomposition of the chain of integration 
$\sigma_{\Gamma,\R}$ into the chain described in \cite{CeyMar2} associated to acyclic
orientations of the graph. That leads to subdividing the integrals into an angular part,
that can be expressed in terms of integrals of products of Gegenbauer polynomials
at the star of half-edges around each vertex of the graph,
followed by an integrals in the radial coordinates of the resulting function
depending on the matching pairs of half edges. The method
follows the same patter described in \S 4 of \cite{CeyMar2} and we do not discuss it
detail here.  Clearly, although in principle explicitly computable by this
method, computations of Feynman integrals performed with this $x$-space method
become very quickly intractable as the complexity of the graph increases. Thus, we turn
instead to a different method, based on using the complexified form of the Feynman
amplitude \eqref{omegamZ} and algebro-geometric techniques using the wonderful
compactification spaces $\cX_{\cH}[n]$.

\section{Complex case: forms with logarithmic poles and Rota--Baxter renormalization}\label{RBSec}

In this section we consider the complex case of Feynman amplitudes, as
introduced in \S \ref{SecRulesComplex}. We show that we can fit this case
in the general framework of {\em algebraic renormalization}, based on
the Connes-Kreimer Hopf algebra of Feynman graphs and a Rota--Baxter
algebra, which we construct using complexes of differential forms
with logarithmic poles on the wonderful compactifications $\cX_{\cH}[n]$.
The renormalization procedure for Feynman amplitudes in this setting
consists of the following steps:
\begin{enumerate}
\item {\em Pullback} of the form $\omega_{\Gamma,\C}$ to $\cX_{\cH}[n]$,
under the map $\pi_n: \cX_{\cH}[n]\to \cX^n$ of the iterated blowup
construction.
\item {\em Regularization} of $\pi^*_n(\omega_{\Gamma,\C})$ by replacing
this closed form with a cohomologous form $\eta_\Gamma$ on $\cX_{\cH}[n]$
with logarithmic poles along the boundary divisor $\cD_n$ of the compactification
$\cX_{\cH}[n]$.
\item Construction of a {\em Rota--Baxter algebra} $\cR$ of forms on $\cX_{\cH}[n]$
with logarithmic poles along $\cD_n$ with a linear operator of ``polar subtraction"
defined in terms of Poincar\'e residues.
\item An {\em algebraic Feynman rule} given by a commutative algebra
homomorphism $\phi: \cH \to \cR$ from the Connes-Kreimer Hopf algebra of Feynman graphs
$\fH$ to the Rota--Baxter algebra, given by $\phi(\Gamma)=\eta_\Gamma$.
\item {\em Renormalization} of the Feynman amplitude by Birkhoff factorization of
this algebraic Feynman rule, using the coproduct and antipode of $\fH$ and
the Rota--Baxter operator of $\cR$.
\item {\em Evaluation} of the renormalized Feynman amplitude on the chain
$\tilde\sigma_{\Gamma,\C}=\pi_n^{-1}(\sigma_{\Gamma,\C})$.
\end{enumerate}
We also show that, in this setting we obtain the beta function of
the renormalization group as an element in the Lie algebra $\fg(\cR)$
of the group $G(\cR)={\rm Hom}_{\rm Alg}(\fH,\cR)$.

\medskip
\subsection{Algebraic renormalization}

A general algebraic framework for renormalization was formulated in
terms of Rota--Baxter algebras and Hopf algebras in \cite{EGK},
following the analysis of the BPHZ renormalization procedure
carried out in \cite{CK}. We recall here the setting, as given in \cite{EGK}.

\smallskip
\subsubsection{Rota--Baxter algebras}
A Rota--Baxter algebra of weight $\lambda=-1$ is
a commutative unital algebra $\cR$ with a linear operator
$T:\cR \to \cR$ satisfying the Rota--Baxter relation
\begin{equation}\label{RBop}
 T(x) T(y) = T(x T(y))+ T(T(x)y) +\lambda T(xy) .
\end{equation}

The main example that corresponds to the use of dimensional
regularization of Feynman integrals in \cite{CK} is the
algebra of Laurent series with $T$ given by the projection onto
the polar part.
 
The Rota--Baxter operator $T$ determines a splitting of $\cR$ into
a piece $\cR_+=(1-T)\cR$ and a piece $\cR_-$ which is the 
unitization of $T\cR$. Both the $\cR_\pm$ are algebras, due to
the Rota--Baxter relation \eqref{RBop}.

\smallskip
\subsubsection{Connes--Kreimer Hopf algebra of Feynman graphs}

We consider here the usual Connes--Kreimer Hopf algebra $\fH$ 
of Feynman graphs \cite{CK}. This is a free commutative algebra 
with generators the 1PI Feynman graphs $\Gamma$ of the theory,
graded by the number of internal lines, with
$$ \deg(\Gamma_1\cdots\Gamma_n)=\sum_i \deg(\Gamma_i), 
\ \ \ \deg(1)=0, $$
and with coproduct
$$ \Delta(\Gamma)=\Gamma\otimes 1 + 1 \otimes \Gamma + 
\sum_{\gamma \in \cV(\Gamma)}  \gamma \otimes \Gamma/\gamma $$
where the sum is over all subgraphs $\gamma$ such that the
quotient $\Gamma/\gamma$ is still a 1PI Feynman graph of the
same theory. The antipode is defined inductively by
$$ S(X)=-X -\sum S(X') X'' $$
for $\Delta(X)=X\otimes 1 + 1 \otimes X + \sum X' \otimes X''$, with
the $X'$ and $X''$ of lower degrees. 

\smallskip
\subsubsection{Algebraic Feynman rules}

In this algebraic setting a Feynman rule is simply a morphism
{\em of commutative algebras} from a commutative Hopf algebra $\fH$ 
to a Rota--Baxter algebra $\cR$ of weight $-1$,
\begin{equation}\label{phiFeyn}
\phi \in {\rm Hom}_{{\rm Alg}}(\fH, \cR).
\end{equation}

Notice that the morphism $\phi$ only preserves the commutative
algebra structure and has no required compatibility with the 
Hopf and the Rota--Baxter structures.

\smallskip
\subsubsection{Birkhoff factorization}

As was originally shown by Connes and Kreimer in \cite{CK} and
generalized to the Rota--Baxter setting in \cite{EGK}, given an
algebraic Feynman rule $\phi \in {\rm Hom}_{{\rm Alg}}(\fH, \cR)$
as above, the Hopf structure on $\fH$ together with the Rota--Baxter
structure on $\cR$ determine a Birkhoff factorization 
\begin{equation}\label{Birkhoff}
 \phi = (\phi_-\circ S)\star \phi_+ 
\end{equation}
into {\em commutative algebra homomorphisms} 
$$\phi_\pm \in  {\rm Hom}_{\rm Alg}(\fH,\cR_\pm).$$
The product in \eqref{Birkhoff} is the product in the affine
group scheme dual to the commutative Hopf algebra,
defined by $\phi_1\star \phi_2(X) = \langle \phi_1\otimes \phi_2, \Delta(X)\rangle$.
The Birkhoff factorization is constructed inductively via the Connes--Kreimer
formula
$$ \phi_-(X)=-T (\phi(X) +\sum \phi_-(X') \phi(X'')) $$
$$ \phi_+(X)=(1-T)(\phi(X) +\sum \phi_-(X') \phi(X'')) $$
where $\Delta(X)=1\otimes X + X \otimes 1 + \sum X'\otimes X''$.

\medskip
\subsection{Forms with logarithmic poles and Rota--Baxter structure}

We describe here how to obtain a Rota--Baxter structure for renormalization
in configuration spaces, based on forms with logarithmic poles on the
wonderful compactifications $\cX_\cH[n]$ and iterated Poincar\'e residues.

\smallskip
\subsubsection{Iterated Poincar\'e residues}

Let $\cD$ be a normal crossings divisor in a smooth variety $\cY$
and let $\Omega^\bullet_\cY(\log \cD)$ denote the sheaf of differential
forms on $\cY$ with logarithmic poles along $\cD$, as in \cite{Del}.
Let $\{ D_i \}_{i\in I}$ denote the components of $\cD$ and, for $J\subseteq I$,
$J=\{ j_1,\ldots, j_r \}$, let $\cD_J = D_{j_1}\cap \cdots \cap D_{j_r}$ denote
the intersection. 

Given a form $\eta \in \Omega^k_\cY(\log \cD)$ and a 
nonempty intersection $\cD_J$, there is an iterated Poincar\'e residue
${\rm Res}_{\cD_J}(\eta)$, which is a form in $\Omega^{k-r}_{\cD_J}$.
The pairing of ${\rm Res}_{\cD_J}(\eta)$ with a $(k-r)$-cycle $\Sigma$ in
$\cD_J$ is given by
\begin{equation}\label{LerayPair}
\int_\Sigma {\rm Res}_{\cD_J}(\eta) = \frac{1}{(2\pi i)^r} \int_{\cL_{\cD_J}(\Sigma)} \eta, 
\end{equation}
where $\cL_{\cD_J}(\Sigma)$ is the iterated Leray coboundary of $\Sigma$, which is 
a $k$-cycle in $\cY$ given by a $T^r$-torus bundle over $\Sigma$ obtained as 
composition of the Leray coboundary maps 
\begin{equation}\label{LerayComp}
\cL_{\cD_J}(\Sigma) =\cL_{D_{j_1}}\circ \cdots \circ \cL_{D_{j_k}} (\Sigma)
\end{equation}
of the components of 
$\cD_J$, see Proposition 5.14 of \cite{CeyMar2}. 

\smallskip

\begin{lem}\label{residueless}
Let $f_1, \cdots, f_n$ be local parameters such that the components $D_j$ of the
divisor $\cD_n$ are defined by $f_j=0$. 
Given $\eta\in \Omega^m_\cY(\log \cD)$, set
\begin{equation}\label{etaJ}
T(\eta) = \sum_{j=1}^n \frac{df_j}{f_j} \wedge {\rm Res}_{D_j}(\eta).
\end{equation}
Then, for any $\cD_J=D_{j_1}\cup\cdots \cup D_{j_r}$ for a subset $J\subset I$
of components of $\cD_n$, the form $T(\eta)$ satisfies 
\begin{equation}\label{TetaRes}
{\rm Res}_{\cD_J}(T(\eta)) = {\rm Res}_{\cD_J}(\eta).
\end{equation}
\end{lem}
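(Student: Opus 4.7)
\smallskip

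The plan is to prove the lemma by a direct local computation. Fix a point $p\in\cY$ and work in coordinates $(f_1,\ldots,f_n,y_1,\ldots,y_\ell)$ around $p$ adapted to the normal crossings divisor, so that each component $D_j$ through $p$ is cut out by $f_j=0$. The key tool is the canonical local expansion of logarithmic forms, as in Deligne \cite{Del}: any $\eta\in\Omega^m_\cY(\log\cD)$ decomposes uniquely as
$$\eta = \sum_{S\subseteq\{1,\ldots,n\}} \omega_S\wedge\alpha_S,$$
where $\omega_S=\bigwedge_{j\in S}\frac{df_j}{f_j}$ is taken with a fixed ordering and $\alpha_S$ is a smooth form involving neither $\frac{df_j}{f_j}$ nor $df_j$ for $j\in S$. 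By linearity it suffices to verify the identity on a single summand $\omega_S\wedge\alpha_S$.

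\smallskip

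First I would record the Poincar\'e residue formulas in this local frame: $\Res_{D_i}(\omega_S\wedge\alpha_S) = \varepsilon_{i,S}\,\omega_{S\setminus\{i\}}\wedge\alpha_S|_{f_i=0}$ when $i\in S$, with sign $\varepsilon_{i,S}$ produced by moving $\frac{df_i}{f_i}$ to the front of $\omega_S$, and $0$ otherwise; iterating, $\Res_{\cD_J}(\omega_S\wedge\alpha_S) = \varepsilon_{J,S}\,\omega_{S\setminus J}\wedge\alpha_S|_{\cD_J}$ for $J\subseteq S$ and zero otherwise. Next I would compute $T(\omega_S\wedge\alpha_S)$: in local coordinates the residues $\Res_{D_j}(\omega_S\wedge\alpha_S)$ lift canonically back to $\cY$ by treating $\alpha_S|_{f_j=0}$ as independent of $f_j$, and the signs from repositioning $\frac{df_j}{f_j}$ to the front cancel exactly those produced by the residue, so that $T(\omega_S\wedge\alpha_S)$ reduces to $\omega_S\wedge(\text{combination of restrictions of }\alpha_S)$ supported on the same polar type $S$.

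\smallskip

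Finally I would apply $\Res_{\cD_J}$ to this expression and match it with $\Res_{\cD_J}(\omega_S\wedge\alpha_S)=\varepsilon_{J,S}\,\omega_{S\setminus J}\wedge\alpha_S|_{\cD_J}$. The crucial observation is that on restricting to $\cD_J$, the contributions $\alpha_S|_{f_i=0}$ for indices $i\in J$ automatically coincide with $\alpha_S|_{\cD_J}$, since $\cD_J\subseteq\{f_i=0\}$ already; the remaining contributions for $i\in S\setminus J$ restrict further to the deeper strata $\cD_{J\cup\{i\}}$. The hard part will be controlling these deeper-stratum terms together with the apparent overcounting from the $i\in J$ contributions, and isolating the single copy of $\alpha_S|_{\cD_J}$ that matches $\Res_{\cD_J}(\eta)$. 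I expect this bookkeeping to be the technical heart of the argument; it is most naturally handled by induction on $|J|$, combined with the functoriality of Poincar\'e residues along the stratification of the normal crossings boundary and the anti-commutation of $\Res_{D_i}$ and $\Res_{D_j}$ for distinct components. Once this cancellation is established, the identity $\Res_{\cD_J}(T(\eta))=\Res_{\cD_J}(\eta)$ holds term by term in the local decomposition, and gluing local charts yields the global statement.
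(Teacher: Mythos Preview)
Your approach is genuinely different from the paper's, and considerably more laborious. The paper's proof is only a few lines: it first asserts that $\Res_{D_j}(T(\eta)) = \Res_{D_j}(\eta)$ for each single component $D_j$ ``by construction'', and then obtains the iterated case at once from the duality \eqref{LerayPair} between Poincar\'e residues and Leray coboundaries together with the factorization \eqref{LerayComp} of the iterated Leray map. Concretely, pairing against a test cycle $\Sigma$ in $\cD_J$ and peeling off one $\cL_{D_{j_1}}$ reduces $\langle \Res_{\cD_J}(T(\eta)),\Sigma\rangle$ to $\langle \Res_{D_{j_1}}(T(\eta)),\, \cL_{D_{j_2}}\!\cdots\cL_{D_{j_r}}(\Sigma)\rangle$, which by the single-component identity equals the same expression with $\eta$ in place of $T(\eta)$. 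No local expansion in the basis $\{\omega_S\wedge\alpha_S\}$ and no induction on $|J|$ are used.

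By contrast, your plan works directly in that local basis and runs head-on into the overcounting you flag: $T(\omega_S\wedge\alpha_S)$ is a sum of $|S|$ terms of the same polar type $S$, one for each $i\in S$, and your outline does not supply the mechanism that collapses these $|S|$ contributions back to a single copy upon applying $\Res_{\cD_J}$. You name this as the ``technical heart'' and propose to handle it by induction and anti-commutation of residues, but you do not actually carry it out, so the proposal as written has a gap at precisely the point you yourself identify. The paper's Leray-coboundary device is exactly the idea that sidesteps this computation: once the identity holds for $|J|=1$, applying any further residue operators to both sides of $\Res_{D_{j_1}}(T(\eta)) = \Res_{D_{j_1}}(\eta)$ yields the iterated statement for free, so the ``deeper-stratum bookkeeping'' never needs to be performed.
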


\proof First observe that for the ordinary Poincar\'e residue for a single component
$D_j$ we have ${\rm Res}_j(T(\eta))={\rm Res}_j(\eta)$ by construction. Then, using
\eqref{LerayPair} and \eqref{LerayComp} we see that the iterated residues also agree,  
$$ \langle {\rm Res}_{\cD_J}(T(\eta)), \Sigma \rangle =
\langle T(\eta), \cL_{\cD_J}(\Sigma) \rangle $$ $$ =\langle {\rm Res}_{j_1} (T(\eta)),
\cL_{j_2}\cdots \cL_{j_k}(\Sigma) \rangle $$ $$ = \langle {\rm Res}_{j_1} (\eta),
\cL_{j_2}\cdots \cL_{j_k}(\Sigma) \rangle = \langle  {\rm Res}_{\cD_J}(\eta), \Sigma \rangle. $$
\endproof

Given a form $\eta \in \Omega^\bullet_\cY(\log \cD)$, we define the {\em polar subtraction}
$\eta_{\cD}$ to be
\begin{equation}\label{prepform}
\eta_{\cD} := \eta - T(\eta).
\end{equation}
We will show below that this operation of polar subtraction has the same
formal properties of other polar subtraction methods used in quantum
field theory (such as DimReg with MS), and can be treated similarly to
obtain a good notion of {\em renormalized} Feynman integrals in
configuration spaces. The result of the polar subtraction is a form $\eta_{\cD}$
for which all the iterated Poincar\'e residues vanish.

\begin{cor}\label{nores}
The form $\eta_\cD$ has vanishing Poincar\'e residues, ${\rm Res}_{\cD_J}(\eta_\cD)=0$ for all $J\subset I$.
\end{cor}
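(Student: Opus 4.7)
The proof plan is very short because the previous lemma does essentially all the work. My plan is to derive the corollary directly from the identity $\mathrm{Res}_{\cD_J}(T(\eta)) = \mathrm{Res}_{\cD_J}(\eta)$ established in Lemma \ref{residueless}, using only linearity of the iterated Poincar\'e residue.

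First I would recall that, by construction, $\eta_\cD = \eta - T(\eta)$ with $T(\eta) = \sum_{j=1}^n \frac{df_j}{f_j}\wedge \mathrm{Res}_{D_j}(\eta)$. The iterated Poincar\'e residue $\mathrm{Res}_{\cD_J}$ is a $\mathbb{C}$-linear map from $\Omega^k_\cY(\log \cD)$ to $\Omega^{k-r}_{\cD_J}$, as can be seen from its definition via the pairing \eqref{LerayPair} with cycles on $\cD_J$, where $\cL_{\cD_J}(\Sigma)$ is a fixed $T^r$-bundle over $\Sigma$ and integration against it is linear in the form.

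Then, for any nonempty $J \subset I$, linearity gives
\begin{equation*}
\mathrm{Res}_{\cD_J}(\eta_\cD) \;=\; \mathrm{Res}_{\cD_J}(\eta) \;-\; \mathrm{Res}_{\cD_J}(T(\eta)),
\end{equation*}
and the right-hand side vanishes by \eqref{TetaRes} in Lemma \ref{residueless}. Thus $\mathrm{Res}_{\cD_J}(\eta_\cD) = 0$ for every $J \subset I$, which is the claim.

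There is no real obstacle here; the entire content sits in Lemma \ref{residueless}, whose proof reduced the iterated case to the single-divisor case via the composition property \eqref{LerayComp} of Leray coboundaries. The only point one should state carefully is linearity of $\mathrm{Res}_{\cD_J}$, which follows immediately from \eqref{LerayPair}, and the fact that the argument applies uniformly to all nonempty $J$ (the empty-$J$ case being vacuous).
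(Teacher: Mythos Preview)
Your proof is correct and follows exactly the paper's approach: the paper simply states that the corollary is an immediate consequence of Lemma~\ref{residueless}, and you have spelled out precisely that immediate consequence (linearity of ${\rm Res}_{\cD_J}$ applied to $\eta_\cD=\eta-T(\eta)$ together with \eqref{TetaRes}).
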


\proof This is an immediate consequence of Lemma \ref{residueless}.
\endproof

\smallskip
\subsubsection{Rota--Baxter algebra of the wonderful compactifications}

We now consider the wonderful compactifications $\cX_\cH[n]$ described
in Section \ref{ConfigSec} above, with the boundary divisor
\begin{equation}\label{Divisor}
\cD_n= \cX_{\cH}[n] \smallsetminus \left( (\cX \setminus \cH)^n \setminus  
( \bigcup_{ \{i,j\} \subset \underline{n}} \Delta_{\{i,j\}}) \right)
\end{equation}

Let $\Omega^{\rm even}_\cY(\log \cD)$ be the {\em commutative} algebra of
even differential forms on a variety $\cY$ with logarithmic poles along  a
normal crossings divisor $\cD$. 

\medskip

Define the linear operator $T: \Omega^{\rm even}_\cY(\log \cD) \to 
\Omega^{\rm even}_\cY(\log \cD)$ given as in \eqref{etaJ} by
\begin{equation}\label{RBlogpoles}
T : \eta \mapsto T(\eta)=\sum_{j=1}^n \frac{df_j}{f_j} \wedge {\rm Res}_{D_j}(\eta). 
\end{equation}

\medskip

\begin{thm}\label{RBthm}
The pair $(\cR, T)$, with $\cR=\Omega^{\rm even}_\cY(\log \cD)$ and with 
$T$ as in \eqref{RBlogpoles}, is a Rota--Baxter algebra of weight $-1$.
\end{thm}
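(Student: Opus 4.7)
The plan is to reduce the weight $-1$ Rota--Baxter identity
\[ T(\eta)\wedge T(\omega) = T(\eta\wedge T(\omega)) + T(T(\eta)\wedge \omega) - T(\eta\wedge\omega) \]
to showing that $T$ is an idempotent whose image and kernel are both subalgebras of $\cR$. Indeed, this is a standard piece of Rota--Baxter algebra bookkeeping: if $T^2=T$ and $\cR=\cR_+\oplus\cR_-$ with $\cR_+=\ker T$, $\cR_-={\rm im}\,T$ both closed under wedge, then decomposing $\eta=\eta_++\eta_-$, $\omega=\omega_++\omega_-$ and using $T|_{\cR_+}=0$, $T|_{\cR_-}={\rm id}$ makes both sides of the identity collapse to $\eta_-\wedge\omega_-$.

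The first step, idempotency of $T$, follows at once from Lemma~\ref{residueless}: $T^2\eta=\sum_j\frac{df_j}{f_j}\wedge\Res_{D_j}(T\eta)=\sum_j\frac{df_j}{f_j}\wedge\Res_{D_j}(\eta)=T\eta$. The kernel is then identified by Corollary~\ref{nores}: $T\eta=0$ forces $\Res_{D_J}(\eta)=0$ for every $J$, and by the Deligne residue exact sequence this means $\eta$ has no $\tfrac{df_j}{f_j}$ component along any $D_j$, i.e.\ $\eta\in\Omega^{\rm even}_\cY\subset\Omega^{\rm even}_\cY(\log\cD)$ is holomorphic. Closure of $\cR_+$ under wedge is then immediate, since a wedge of holomorphic forms is holomorphic and hence has vanishing residues.

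The substantive step is closure of $\cR_-={\rm im}\,T$ under wedge, which is where the parity hypothesis enters crucially. Writing $T\eta=\sum_j\tfrac{df_j}{f_j}\wedge\alpha_j$ and $T\omega=\sum_k\tfrac{df_k}{f_k}\wedge\beta_k$ with $\alpha_j=\Res_{D_j}(\eta)$, $\beta_k=\Res_{D_k}(\omega)$, the diagonal terms $j=k$ vanish: since $\eta$ is even, each $\alpha_j$ is odd, so commuting $\tfrac{df_k}{f_k}$ across it gives
\[ \tfrac{df_j}{f_j}\wedge\alpha_j\wedge\tfrac{df_j}{f_j}\wedge\beta_j = -\tfrac{df_j}{f_j}\wedge\tfrac{df_j}{f_j}\wedge\alpha_j\wedge\beta_j = 0. \]
For $j\neq k$, I would use the graded Leibniz rule for Poincar\'e residues along $D_l$ on log forms, combined again with Lemma~\ref{residueless}, to verify that every iterated residue of $T(\eta)\wedge T(\omega)$ agrees with the iterated residue of $\sum_{j\neq k}\tfrac{df_j\wedge df_k}{f_jf_k}\wedge\alpha_j\wedge\beta_k$, whence applying $T$ again reproduces this form and shows it lies in $\cR_-$.

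The main obstacle is precisely this last verification: one must control how $\Res_{D_l}$ interacts with wedges of two log forms (neither of which is holomorphic along $D_l$), so that the formal computation with the ``extensions'' of $\alpha_j,\beta_k$ back to $\cY$ is independent of choices made modulo $\ker T$. Once that is in place, the Rota--Baxter relation follows by the subalgebra-decomposition argument sketched above and the weight is visibly $-1$ because the cross-term $T(\eta\wedge\omega)$ appears exactly once, with opposite sign, relative to $T(\eta T\omega)+T(T\eta\cdot\omega)$.
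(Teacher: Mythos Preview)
Your strategy---reduce the weight $-1$ Rota--Baxter identity to the standard criterion ``$T$ idempotent with $\ker T$ and ${\rm im}\,T$ both subalgebras''---is valid and is a genuinely different organization from the paper's. The paper proceeds by a direct computation: it first establishes the stronger identity $T(\eta\wedge\xi)=T(\eta)\wedge T(\xi)+T(\eta)\wedge(\xi-T(\xi))+(\eta-T(\eta))\wedge T(\xi)$ by expanding $\eta=T(\eta)+(\eta-T(\eta))$ and $\xi$ likewise inside $T(\eta\wedge\xi)$, and then specializes this to $(\eta,T(\xi))$ and $(T(\eta),\xi)$ to read off the Rota--Baxter relation. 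Your abstract route is cleaner conceptually; the paper's route has the advantage that the one nontrivial step---your image-closure claim $T(T(\eta)\wedge T(\xi))=T(\eta)\wedge T(\xi)$---is made explicit: the paper writes $T(\eta)\wedge T(\xi)$ as $\sum_{j<k}\frac{df_j}{f_j}\wedge\frac{df_k}{f_k}\wedge(\Res_{D_k}\eta\wedge\Res_{D_j}\xi-\Res_{D_j}\eta\wedge\Res_{D_k}\xi)$ and observes that applying $T$ to such an expression returns it. So the obstacle you flag is real, but it is exactly the step the paper carries out by hand; filling it would complete your argument.

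Two small corrections. First, you misplace the role of the parity hypothesis: the diagonal terms $\frac{df_j}{f_j}\wedge\alpha_j\wedge\frac{df_j}{f_j}\wedge\beta_j$ vanish for free because $\frac{df_j}{f_j}\wedge\frac{df_j}{f_j}=0$, regardless of the degree of $\alpha_j$. In the paper the evenness of $\eta,\xi$ is used so that commuting the \emph{holomorphic} factors $\eta-T(\eta)$, $\xi-T(\xi)$ past $\frac{df_j}{f_j}$ produces no sign; in your language, it makes $\cR$ commutative rather than merely graded-commutative. Second, there is no ``graded Leibniz rule'' for $\Res_{D_l}$ on a product of two forms both with poles along $D_l$: writing each locally as $\frac{df_l}{f_l}\wedge\alpha+\beta$ shows that $\Res_{D_l}(\omega_1\wedge\omega_2)$ picks up only the cross terms $\alpha_1\wedge\beta_2$ and $\beta_1\wedge\alpha_2$. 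That elementary expansion, not a derivation property, is what you need for the verification you leave open, and it is how the paper arrives at the displayed formula for $T(\eta)\wedge T(\xi)$.
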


\proof First observe that the operator $T$ is idempotent: in fact, by Lemma \ref{residueless},
we have 
\begin{equation}\label{TTeta}
T(T(\eta))=\sum_j \frac{df_j}{f_j} \wedge {\rm Res}_{D_j}(T(\eta))=T(\eta).
\end{equation}
We need to check that the Rota--Baxter relation \eqref{RBop} is
satisfied. We have
$$ T(\eta)\wedge T(\xi) = \sum_j \frac{df_j}{f_j} \wedge {\rm Res}_{D_j} (\eta)\wedge
\sum_k \frac{df_k}{f_k} \wedge {\rm Res}_{D_k}(\xi), $$
which can equivalently be written as
\begin{equation}\label{TxTy}
T(\eta)\wedge T(\xi) = \sum_{j<k} \frac{df_j}{f_j} \wedge \frac{df_k}{f_k} \wedge ({\rm Res}_{D_k}(\eta) \wedge {\rm Res}_{D_j}(\xi) - {\rm Res}_{D_j}(\eta)\wedge {\rm Res}_{D_k}(\xi)).
\end{equation}
For the term $T(\eta \wedge \xi)$ we obtain
$$ T(\eta \wedge \xi) = \sum_j \frac{df_j}{f_j} \wedge {\rm Res}_{D_j} (\eta\wedge \xi)  $$
$$ = \sum_j \frac{df_j}{f_j} \wedge {\rm Res}_{D_j} ((T(\eta)+(\eta-T(\eta)))\wedge 
(T(\xi)+(\xi-T(\xi)))) $$
$$ = \sum_j \frac{df_j}{f_j} \wedge {\rm Res}_{D_j} (T(\eta)\wedge T(\xi)) $$
$$ + \sum_j \frac{df_j}{f_j} \wedge {\rm Res}_{D_j} (T(\eta)\wedge (\xi-T(\xi))) $$
$$ + \sum_j \frac{df_j}{f_j} \wedge {\rm Res}_{D_j} ((\eta-T(\eta))\wedge T(\xi)) $$
$$ + \sum_j \frac{df_j}{f_j} \wedge {\rm Res}_{D_j} ((\eta-T(\eta))\wedge (\xi-T(\xi))). $$
The last term has a wedge of two forms with no residues on all the $D_j$; in the
two intermediate terms the forms $\eta_\cD=\eta-T(\eta)$ and $\xi_\cD=\xi - T(\xi)$
have no residue, hence using \eqref{TTeta} we obtain, respectively,
$T(\eta)\wedge (\xi-T(\xi)$ and $(\eta-T(\eta))\wedge T(\xi)$. Since the forms $\eta$
and $\xi$ are even, commuting them with a 1-form $df/f$ does not give rise to 
a non-trivial sign. Finally, the first term can be computed using the explicit form
of $T(\eta)$ and $T(\xi)$ and gives again the expression \eqref{TxTy}.
Thus, we have
\begin{equation}\label{Txyside}
T(\eta \wedge \xi) =T(\eta)T(\xi)+ T(\eta)\wedge (\xi-T(\xi) + (\eta-T(\eta))\wedge T(\xi).
\end{equation} 
We then compute the remaining terms $T(\eta \wedge T(\xi))$ and $T(T(\eta)\wedge \xi)$.
Applying \eqref{Txyside} to these cases and using \eqref{TTeta} we obtain
$$ T(\eta \wedge T(\xi)) = T(\eta)\wedge T(\xi) + (\eta-T(\eta)) \wedge T(\xi) $$
$$ T(T(\eta)\wedge \xi) = T(\eta)\wedge T(\xi) +T(\eta)\wedge (\xi-T(\xi)), $$
so that we obtain the Rota--Baxter relation of weight $-1$ as stated,
$$  T(\eta \wedge T(\xi)) + T(T(\eta)\wedge \xi) - T(\eta)\wedge T(\xi) = T(\eta\wedge \xi) . $$
\endproof

\smallskip

To apply this construction to our renormalization problem, we still need to take
into account the fact that Feynman graphs with different number of edges 
$n=|E_\Gamma|$ correspond to forms $\eta_\Gamma$ with logarithmic poles 
in different varieties $\cX_{\cH}[n]$ with the divisors $\cD_n$ of  \eqref{Divisor}. 
Thus, we need to combine
the Rota--Baxter algebras $(\Omega^{{\rm even}}_{\cX_{\cH}[n]}(\log \cD_n), T_n)$
constructed as in Theorem \ref{RBthm} into a single Rota--Baxter algebra where all
values of $n$ can occur. 

\smallskip

Let $\bigwedge_n \Omega^{\rm even}_{\cX_\cH[n]}(\log \cD_n)$ denote the
algebra generated by elements of the form $\eta_{n_1}\wedge\cdots \wedge \eta_{n_k}$
for arbitrary $n_1,\ldots, n_k$.
We then define the Rota--Baxter algebra of weight $-1$ for the renormalization
problem in the following way.

\begin{thm}\label{RBalgalln}
Consider the commutative algebra
\begin{equation}\label{Ralln}
\cR= \bigwedge_n \Omega^{\rm even}_{\cX_\cH[n]}(\log \cD_n),
\end{equation}
as above. Let $T: \cR \to \cR$ be the linear operator defined by setting
\begin{equation}\label{Tprod}
T(\eta_1\wedge \eta_2)= T_{n_1}(\eta_1)\wedge \eta_2 + \eta_1 \wedge T_{n_2}(\eta_2)
- T_{n_1}(\eta_1)\wedge T_{n_2}(\eta_2),
\end{equation}
for $\eta_i \in \Omega^{\rm even}_{\cX_\cH[n_i]}(\log \cD_{n_i})$.
Then $(\cR,T)$ is a Rota--Baxter algebra of weight $-1$. 
\end{thm}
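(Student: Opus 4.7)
The plan is to pass to the complementary projector $P := \mathrm{id} - T$ and to observe that the prescription \eqref{Tprod} is precisely the statement that, on a two-factor product,
\[
P(\eta_1 \wedge \eta_2) \;=\; P_{n_1}(\eta_1) \wedge P_{n_2}(\eta_2),
\]
with $P_{n_i} := \mathrm{id} - T_{n_i}$. I would therefore define $P$ on all of $\cR$ as the unique commutative-algebra homomorphism whose restriction to each generating factor $\Omega^{\rm even}_{\cX_\cH[n]}(\log \cD_n)$ equals $P_n$, and set $T = \mathrm{id} - P$. Formula \eqref{Tprod} is then recovered on two-factor products, and the theorem reduces to verifying the Rota--Baxter identity for an idempotent $T$ complementary to a multiplicative idempotent $P$ on a commutative algebra.

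Idempotence of $P$ is immediate: each $P_n$ is idempotent since $T_n$ is, by the identity \eqref{TTeta} established in the proof of Theorem \ref{RBthm}, so multiplicativity upgrades this to $P^2 = P$ on all of $\cR$; hence $T^2 = T$ and $PT = P - P^2 = 0$. The last identity, together with multiplicativity of $P$, yields
\[
P(\eta \wedge T\xi) \;=\; P(\eta)\wedge P(T\xi) \;=\; 0,
\]
so $T(\eta \wedge T\xi) = \eta \wedge T\xi$ and, symmetrically, $T(T\eta \wedge \xi) = T\eta \wedge \xi$ for any $\eta, \xi \in \cR$.

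With these identities in hand, the Rota--Baxter relation is a short calculation: expanding $T(\eta) \wedge T(\xi) = (\eta - P\eta)\wedge(\xi - P\xi)$ and collapsing the cross term $P\eta \wedge P\xi = P(\eta \wedge \xi)$ by multiplicativity, one obtains
\[
T(\eta \wedge \xi) + T(\eta) \wedge T(\xi) \;=\; 2\,\eta \wedge \xi - \eta \wedge P\xi - P\eta \wedge \xi,
\]
and the previous paragraph identifies the right-hand side with $T(\eta \wedge T\xi) + T(T\eta \wedge \xi)$, which is exactly the weight-$(-1)$ Rota--Baxter relation \eqref{RBop}. The main obstacle is not algebraic but administrative: one must confirm that $P$ extends unambiguously from its factorwise prescription to the free commutative algebra $\bigwedge_n \Omega^{\rm even}_{\cX_\cH[n]}(\log \cD_n)$. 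This reduces to the observation that, for distinct $n$, the generating factors live on distinct wonderful compactifications and commute tensorially inside $\cR$, so the $P_n$ act on disjoint tensor factors and assemble canonically into a single algebra homomorphism; once this setup is fixed, the sign and grading bookkeeping is harmless because the algebra consists of forms of even total degree.
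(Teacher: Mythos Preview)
Your argument is correct and takes a cleaner route than the paper's. The paper verifies the Rota--Baxter identity directly on two-factor products by plugging into \eqref{Tprod} to obtain $T(T_{n_1}\eta_1\wedge\eta_2)=T_{n_1}\eta_1\wedge\eta_2$ and its mirror, observes that \eqref{Tprod} itself then closes the identity, and finally extends $T$ to $(k+1)$-fold products by the same inclusion--exclusion formula, repeating the check inductively. You instead recognize that \eqref{Tprod} is precisely the statement that $P=\id-T$ is multiplicative, define $P$ globally as the algebra homomorphism extending the $P_n$, and derive the Rota--Baxter relation once and for all from multiplicativity together with $P^2=P$. This buys you a uniform argument with no induction on the number of factors and explains \emph{why} \eqref{Tprod} is the natural formula; the paper's approach is more computational and never isolates the multiplicativity of $P$, though it is implicit in \eqref{Txyside}.

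One point you gloss over: for $P$ to extend to an algebra homomorphism on $\cR$ you need each $P_n=\id-T_n$ to already be multiplicative on its own factor $\Omega^{\rm even}_{\cX_\cH[n]}(\log\cD_n)$, since $\cR$ retains the internal wedge product there; your final paragraph addresses only the interaction between distinct $n$. This multiplicativity does hold---rewriting \eqref{Txyside} from the proof of Theorem~\ref{RBthm} gives exactly $(1-T_n)(\eta\wedge\xi)=(1-T_n)\eta\wedge(1-T_n)\xi$---so the fix is just to cite it.
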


\proof We first check that \eqref{Tprod} determines a Rota--Baxter structure of
weight $-1$ on $$\Omega^{\rm even}_{\cX_\cH[n_1]}(\log \cD_{n_1})\wedge \Omega^{\rm even}_{\cX_\cH[n_2]}(\log \cD_{n_2})$$ and then inductively that it determines a linear operator $T$
of $\cR$ that also satisfies the Rota--Baxter relation of weight $-1$.
By \eqref{Tprod} we have
$$ T(T_{n_1}(\eta_1)\wedge \eta_2)= T_{n_1}(\eta_1)\wedge \eta_2 +
T_{n_1}(\eta_1)\wedge T_{n_2}(\eta_2) - T_{n_1}(\eta_1)\wedge T_{n_2}(\eta_2) = 
T_{n_1}(\eta_1)\wedge \eta_2 $$
$$ T(\eta_1\wedge T_{n_2}(\eta_2))= T_{n_1}(\eta_1)\wedge T_{n_2}(\eta_2)
+ \eta_1 \wedge T_{n_2}(\eta_2)   - T_{n_1}(\eta_1)\wedge T_{n_2}(\eta_2) = 
\eta_1 \wedge T_{n_2}(\eta_2), $$
hence the Rota--Baxter relation is satisfied. One easily checks also that
the operator $T$ defined in this way still satsifies $T^2=T$.
Now assume inductively that, on all $k$-fold wedge products
$$ \Omega^{\rm even}_{\cX_\cH[n_1]}(\log \cD_{n_1})\wedge \cdots \wedge \Omega^{\rm even}_{\cX_\cH[n_k]}(\log \cD_{n_k}) $$ 
we have constructed $T$ satisfying the Rota--Baxter relation of weight $-1$, with $T^2=T$.
Then we can extend it to a $(k+1)$-fold wedge product by setting 
$$ T(\eta_{n_1}\wedge\cdots \wedge \eta_{n_k}\wedge \eta_{n_{k+1}}) =
T(\eta_{n_1}\wedge\cdots \wedge \eta_{n_k}) \wedge \eta_{n_{k+1}} $$ $$
+ \eta_{n_1}\wedge\cdots \wedge \eta_{n_k} \wedge T_{n_{k+1}}(\eta_{n_k}) 
- T(\eta_{n_1}\wedge\cdots \wedge \eta_{n_k}) \wedge T_{n_{k+1}}(\eta_{n_k}) . $$
The sam argument above shows that it still satisfied the Rota--Baxter relation and
one sees in the same way that $T^2=T$.
\endproof

We refer to the Rota--Baxter algebra obtained as in Theorem \ref{RBalgalln} as
the {\em Rota--Baxter algebra of configuration spaces}. 

\medskip
\subsection{Algebraic renormalization of Feynman integrals in configuration space}

\smallskip
\subsubsection{The algebraic Feynman rule}

Let $\omega_{\Gamma,\C}$ be the {\em massless} Feynman amplitude 
for a Feynman graph $\Gamma$ in the complex case, defined as in \eqref{Comegam0},
as a smooth closed form in $\Omega^{2 Dn}(\cF(\cX\smallsetminus \cH, n))$. As in
\cite{CeyMar2}, we then consider the pullback $\pi_n^* (\omega_{\Gamma,\C})$ under
the iterated blowup map $\pi_n: \cX_\cH[n] \to \cX^n$. The pullback form is a smooth
closed form in $\Omega^{2 Dn}(\cX_\cH[n]\smallsetminus \cD_n)$, where $\cD_n$
is the boundary divisor \eqref{Divisor}. 

\smallskip

It is well known that the de Rham cohomology of a smooth quasi-projective 
varieties computed using algebraic differential forms, \cite{Groth}. Moreover,
if the variety is the complement of a normal crossings divisor one can use algebraic
differential forms with logarithmic poles along the divisors, \cite{Del}, so that we have
$$ H^*(\cX_\cH[n]\smallsetminus \cD_n) \simeq {\mathbb H}^*(\cX_\cH[n], 
\Omega_{\cX_\cH[n]}^*(\log(\cD_n))). $$
Thus, the smooth closed differential form $\pi_n^* (\omega_{\Gamma,\C})$
is cohomologous to an algebraic differential form $\eta_\Gamma$ with
logarithmic poles along $\cD_n$,
$$ [ \pi_n^* (\omega_{\Gamma,\C}) ] = [\eta_\Gamma] \in H^{2Dn}(\cX_\cH[n]\smallsetminus \cD_n). $$

\smallskip

Let $\cR$ be the Rota--Baxter algebra of configuration spaces as in Definition \ref{RBalgalln}.
We define the algebraic Feynman rule $\phi \in {\rm Hom}_{\rm Alg}(\fH,\cR)$, 
for the algebraic renormalization of Feynman integrals in configuration spaces, by setting
\begin{equation}\label{philog}
\phi(\Gamma_1 \cdot \cdots \cdot 
\Gamma_k) = \eta_{\Gamma_1} \wedge \cdots \wedge \eta_{\Gamma_k},
\end{equation}
where for Feynman graph $\Gamma$ the form 
$\eta_\Gamma  \in \Omega^{2D n}(\cX_{\cH}[n], \cD_n)$, with $n=|E_\Gamma|$,  
is the algebraic differential form with logarithmic poles along the divisor 
$\cD_n$ of \eqref{Divisor}, obtained as above.

\smallskip
\subsubsection{The Birkhoff factorization}

The resulting Birkhoff factorization of the Feynman rule \eqref{philog} determines
$\phi_-$ and $\phi_+$ inductively as
$$ \phi_-(\Gamma)=-T (\eta_\Gamma +\sum_{\gamma\subset \Gamma} 
\phi_-(\gamma) \wedge \eta_{\Gamma/\gamma} ) $$
$$ \phi_+(\Gamma)=(1-T)(\eta_\Gamma +\sum_{\gamma\subset \Gamma} 
\phi_-(\gamma) \wedge \eta_{\Gamma/\gamma} ) 
 = \eta_{\Gamma,\cD} + \sum_{\gamma\subset \Gamma} 
(\phi_-(\gamma) \wedge \eta_{\Gamma/\gamma} )_{\cD}, $$
with $\eta_{\cD}$ the polar subtraction defined as in \eqref{prepform}, and
similarly for the other terms.
By analogy to the usual setting of BPHZ renormalization (see \cite{CK}), we refer
to the expression $\eta_\Gamma +\sum_{\gamma\subset \Gamma} \phi_-(\gamma) \wedge \eta_{\Gamma/\gamma}$ as the {\em preparation} of the form $\eta_\Gamma=\phi(\Gamma)$,
to the $\phi_-(\Gamma)$ as the {\em counterterms}, and to the $\phi_+(\Gamma)$ as the
renormalized Feynman amplitude in configuration space.

\smallskip
\subsubsection{Renormalized Feynman integrals}

One then obtains, as renormalized Feynman integral in configuration space the
integral
\begin{equation}\label{renormint}
\int_{\tilde\sigma_{\Gamma,\C}} \eta_{\Gamma,\cD} + \sum_{\gamma\subset \Gamma} 
(\phi_-(\gamma) \wedge \eta_{\Gamma/\gamma} )_{\cD},
\end{equation}
where $\tilde\sigma_{\Gamma,\C}=\pi_n^{-1}(\sigma_{\Gamma,\C})=
\P^D_{\cH}[n]\times \{ y \}$ is the preimage in $\cX_\cH[n]$ of the chain of
integration $\sigma_{\Gamma,\C}$ of the unrenormalized Feynman
amplitude \eqref{WeightZ} in the complex case. The form 
$\eta_{\Gamma,\cD} + \sum_{\gamma\subset \Gamma}  
(\phi_-(\gamma) \wedge \eta_{\Gamma/\gamma} )_{\cD}$ now has vanishing
residues on all the divisors $\cD_J$, hence the integral \eqref{renormint}
is free of divergences.

\begin{lem}
If the motives $\m(\cX)$ and $\m(\cH)$ of the varieties $\cX$ and $\cH$ are mixed Tate,
then the motive $\m(\cX_\cH[n])$ of the configuration space is also mixed Tate, and so
are the motives $\m(\cD)$ of all the unions and intersections $\cD$ of components of the
boundary divisor $\cD_n$. In particular the mixed motive $\m(\cX_\cH[n],\cD_n)$
is also mixed Tate.
\end{lem}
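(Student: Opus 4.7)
The plan is to exploit the fact that the class of mixed Tate motives (in whatever category of mixed motives one works in, e.g.\ Voevodsky's $DM$) is closed under direct sums, tensor products, Tate twists, shifts, and extensions, and in particular that it behaves well under blowups along smooth mixed Tate centers and under projectivization of vector bundles with mixed Tate base. The two main ingredients I will use are: (i) the blowup formula $\m(\mathrm{Bl}_Z Y) \simeq \m(Y) \oplus \bigoplus_{i=1}^{c-1} \m(Z)(i)[2i]$ for a smooth closed subvariety $Z\subset Y$ of codimension $c$, and (ii) the localization/Gysin distinguished triangle $\m(U)\to \m(Y)\to \m(Z)(c)[2c]\to \m(U)[1]$ for a smooth pair $(Y,Z)$ with open complement $U$. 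Both preserve the mixed Tate property.

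First I would show that $\m(\cX_\cH[n])$ is mixed Tate. The base $\cX^n$ has motive $\m(\cX)^{\otimes n}$, which is mixed Tate by hypothesis on $\m(\cX)$. At the first stage of the construction (blowups along the arrangement $\cG_n$), each center $\Delta_{\{c\},S}\cong \cH_c^{S}\times \cX^{\underline n\setminus S}$ is a product of copies of $\cH_c$ and $\cX$, hence mixed Tate. By induction on the blowup sequence \eqref{Yks}, together with the blowup formula, each intermediate $\cY_k$ is mixed Tate, and so is $\cX_\cH^{[n]}=\cY_m$. For the second stage \eqref{Mks}, the centers are proper transforms $\tilde\Delta_I$ of diagonals $\Delta_I\cong \cX^{\underline n\setminus I+1}$; these centers themselves are iterated blowups (inside $\cX_\cH^{[n]}$) of a mixed Tate variety along mixed Tate intersections, so inductively they are mixed Tate, and another application of the blowup formula shows that each $\cM_k$, hence $\cX_\cH[n]=\cM_{n-1}$, is mixed Tate.

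Next I would treat the boundary divisors and their intersections. Each exceptional divisor $D_{c,S}$ or $D_I$ in $\cX_\cH[n]$ is a projective bundle (or an iterated blowup of a projective bundle) over the iterated dominant transform of the corresponding center $\Delta_{\{c\},S}$ or $\Delta_I$; since projective bundles preserve the mixed Tate property (Tate twists and sums), the motive of each component is mixed Tate. By Theorem \ref{Thm_Conf}(3), a nonempty intersection $D_{c_1,S_1}\cap\cdots\cap D_{c_a,S_a}\cap D_{I_1}\cap\cdots\cap D_{I_b}$ exists only when the collection is nested, and in that case the standard description of wonderful compactifications (as in \cite{Li1}, \cite{Li2}) realizes the intersection as an iterated projective bundle over a product of varieties of the same type $\cX_{\cH}[m]$ and $\cH_c$, at smaller $m$. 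Thus by induction on $n$ every such intersection is mixed Tate. To pass from intersections to unions, I would use a Mayer--Vietoris / inclusion-exclusion distinguished triangle (or spectral sequence), expressing $\m(\cD_n)$ in terms of the motives of the nested intersections; since all of these are mixed Tate and the category is triangulated and closed under extensions, $\m(\cD_n)$ is mixed Tate as well, as is the motive of any union of components.

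Finally, for the relative motive, I would invoke the distinguished triangle
\begin{equation*}
\m(\cD_n) \longrightarrow \m(\cX_\cH[n]) \longrightarrow \m(\cX_\cH[n],\cD_n) \longrightarrow \m(\cD_n)[1],
\end{equation*}
so that $\m(\cX_\cH[n],\cD_n)$ is a cone of a morphism between two mixed Tate motives and is therefore mixed Tate. The main obstacle, and the step that requires the most care, is the inductive description of the intersections of boundary divisors as iterated projective bundles over smaller wonderful compactifications of the same type; this is where one must verify that Theorem \ref{Thm_Conf}(3) together with the nested-set combinatorics really produces centers that stay inside the class we are inducting on, so that the induction on $n$ (and on the number of divisors intersected) goes through cleanly.
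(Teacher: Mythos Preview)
Your proposal is correct and follows essentially the same approach as the paper: both arguments rest on Voevodsky's blowup formula applied inductively along the sequences \eqref{Yks} and \eqref{Mks}, then use the explicit stratification of $\cX_\cH[n]$ (nested sets, intersections expressed via smaller configuration spaces of the same type) to handle the boundary strata, and finally pass from intersections to unions via distinguished triangles. Your write-up is somewhat more explicit than the paper's in checking that the centers at the second stage (the proper transforms $\tilde\Delta_I$) are themselves mixed Tate and in spelling out the projective-bundle description of the exceptional divisors, and you make the distinguished triangle for the relative motive $\m(\cX_\cH[n],\cD_n)$ explicit where the paper leaves it implicit, but these are elaborations of the same strategy rather than a different route.
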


\begin{proof}
Due to Voevodsky (Proposition 3.5.3 in \cite{Vo}), the motive $m(\tilde{\cY})$ of  the 
blow-up of a smooth scheme $\cY$  along a smooth closed subscheme $\cV \subset \cY$ 
is canonically isomorphic to
$$
m(\cY) \oplus \bigoplus _{k=1}^{\text{codim}_\cY(\cV)-1} m(\cV)(k)[2k]
$$
where $[-]$ denotes the degree shift in the triangulated category of mixed motives, 
while $(-)$ is the Tate twist. The result for $\m(\cX_\cH[n])$ simply follows from the 
repeated application of this blowup formula to the iterated blowup construction 
of $\cX_\cH[n]$. 
The case of the motives $\m(\cD)$ follows  
from the explicit stratification of $\cX_\cH[n]$, see \cite{KS}, where the intersections
of the components are described explicitly in terms of other configuration spaces
whose motive can be computed by the same method. Knowing that all the intersections
are still mixed Tate motives then implies inductively that the unions also are, by
repeatedly using the distinguished triangles in the triangulated category of mixed motives.
\end{proof}

Thus, we can formulate the renormalized Feynman integral as a period, in
the sense of \cite{KZ}, in the following way.

\begin{cor}\label{periodren}
The renormalized Feynman integral \eqref{renormint} is a period of a mixed Tate motive
over the field of definition of the algebraic form $\phi_+(\eta_\Gamma)$.
\end{cor}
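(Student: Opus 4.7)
\smallskip

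\noindent\emph{Proof proposal.} The strategy is to recognize the right-hand side of \eqref{renormint} as the canonical period pairing between an algebraic de Rham class and a Betti homology class on a variety whose motive is already known, by the preceding lemma, to be mixed Tate. I would therefore proceed in three steps: (i) verify that the integrand is an algebraic form with logarithmic poles and vanishing iterated Poincar\'e residues along $\cD_n$; (ii) verify that the chain $\tilde\sigma_{\Gamma,\C}$ is a semi-algebraic cycle, so that the integral converges and represents a Betti-de Rham pairing; (iii) invoke the mixed Tate property of $\m(\cX_\cH[n]\smallsetminus \cD_n)$ from the lemma and conclude via the Kontsevich--Zagier definition of periods that the value lies in the ring of periods of a mixed Tate motive over the field $K$ generated by the coefficients of $\phi_+(\eta_\Gamma)$.

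For step (i), by construction each $\eta_\Gamma$ is an algebraic representative of $[\pi_n^*\omega_{\Gamma,\C}]$ with logarithmic poles along $\cD_n$, and the Rota--Baxter algebra $\cR$ of Theorem \ref{RBalgalln} is built from forms with logarithmic poles, hence closed under both wedge product and the operator $T$. The Connes--Kreimer formula for $\phi_+$ yields
\[
\phi_+(\Gamma) \;=\; (1-T)\bigl(\eta_\Gamma + \sum_{\gamma\subset \Gamma} \phi_-(\gamma)\wedge \eta_{\Gamma/\gamma}\bigr),
\]
and an induction on the coradical degree, using the inductive formulas for $\phi_\pm$ together with Lemma \ref{residueless} and Corollary \ref{nores}, shows that every iterated Poincar\'e residue of $\phi_+(\Gamma)$ along any intersection $\cD_J$ vanishes; the inductive hypothesis ensures the counterterms $\phi_-(\gamma)$ are themselves algebraic log-pole forms, and the definition \eqref{Tprod} of $T$ on wedge products is exactly what is needed so that $(1-T)$ kills residues on every boundary stratum. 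By Deligne's theory \cite{Del}, such a form represents a class in $H^{2Dn}(\cX_\cH[n]\smallsetminus \cD_n)$ in the lowest-weight piece, and is defined over the number field $K$ generated by the coefficients appearing in $\phi_+(\eta_\Gamma)$ (including those produced by the residues, which remain defined over the same field).

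For steps (ii)--(iii), the chain $\tilde\sigma_{\Gamma,\C}=\pi_n^{-1}(\sigma_{\Gamma,\C})$ is the preimage of the real-analytic cycle $\P^D(\C)^n \times \{y\}$ under an iterated blow-up of smooth $K$-varieties, hence a semi-algebraic middle-dimensional cycle; convergence of \eqref{renormint} on this cycle follows from the vanishing of all iterated residues established in step (i). The pairing thus computes a period, in the sense of \cite{KZ}, of the algebraic de Rham class $[\phi_+(\Gamma)]$ against the Betti class $[\tilde\sigma_{\Gamma,\C}]$ on $\cX_\cH[n]\smallsetminus\cD_n$. Since the preceding lemma identifies $\m(\cX_\cH[n]\smallsetminus \cD_n)$ (and the relative motive $\m(\cX_\cH[n],\cD_n)$) as mixed Tate, the motivic period thus obtained lies in the ring of mixed Tate periods over $K$, as claimed. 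The main obstacle is the recursive bookkeeping in step (i): one has to check that the combined Rota--Baxter operator $T$ defined piecewise across different $\cX_\cH[m]$ via \eqref{Tprod} genuinely eliminates all residues of the mixed wedge $\phi_-(\gamma)\wedge \eta_{\Gamma/\gamma}$ on every stratum of the boundary divisor; once this is reduced to a repeated application of Lemma \ref{residueless}, the remaining assertions follow from the motivic input of the lemma.
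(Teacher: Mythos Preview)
Your proposal is correct and follows essentially the same line as the paper's own argument, which is itself very terse: the paper simply notes that $\cX_\cH[n]$ is mixed Tate by the preceding lemma (via the iterated blowup description and Voevodsky's blowup formula), that the varieties are defined over~$\Z$, and that the only remaining issue is the field of definition of the form $\phi_+(\eta_\Gamma)$. Your steps (i)--(iii) spell out the same content with more care, in particular the convergence of the integral via Corollary~\ref{nores} and the identification of \eqref{renormint} as a Betti--de Rham pairing in the sense of \cite{KZ}, which the paper leaves implicit.

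One small overreach to correct: you call $K$ a \emph{number field}, but the paper is explicit (see Remark~\ref{periodsnature}) that the cohomologous log-pole representative $\eta_\Gamma$ is produced only as a $\C$-form, so $K$ may a priori be transcendental or even all of~$\C$. The statement of the corollary is calibrated to this: it asserts a mixed Tate period \emph{over the field of definition of $\phi_+(\eta_\Gamma)$}, not over~$\Q$. Drop the word ``number'' and your argument matches the paper's exactly.
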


\proof The motive of the wonderful compactification $\cX_{\cH}[n]$ is mixed Tate
whenever the motive of $\cX$ is. This follows the same lines as the argument used in
\cite{CeyMar1} for the graph configuration spaces, using the description of
$\cX_{\cH}[n]$ as an iterated blowup of a mixed Tate variety along a mixed Tate locus
and the blowup formula for motives.
The varieties $\cX_{\cH}[n]$ are defined over $\Z$, so we only need
to worry about the algebraic differential form $\phi_+(\eta_\Gamma)$,
which may be defined over $\C$ or possibly a smaller field.
\endproof

\begin{rem}\label{periodsnature} {\rm 
This renormalization method does not provide a direct description
of the period integral \eqref{periodren} as a combination of multiple
zeta values with coefficients in $\Q[(2\pi i)^{-1}]$.  In fact, 
one knows by \cite{Brown} that periods of mixed Tate motives over $\Z$
would be of this form and that the varieties $\cX_{\cH}[n]$ are 
defined over $\Z$, but we have constructed the differential form $\eta_\Gamma$ as
a form cohomologous to $\pi_n^*(\omega_{\Gamma,\C})$ in
$H^*(\cX_{\cH}[n]\smallsetminus \cD_n,\C)$. Thus, a priori,
$\eta_\Gamma$ and its renormalization $\phi_+(\eta_\Gamma)$ 
are only $\C$-forms and not $\Q$-forms. A more detailed analysis of the
relation to multiple zeta values will be carried out elsewhere \cite{Cey}. }
\end{rem}

\smallskip 
\subsubsection{The massive case}

In this setting of regularization by algebraic forms with logarithmic poles
on $\cX_{\cH}[n]$ and renormalization by pole subtraction and Birkhoff
factorization of the regularized forms, one can treat the massive case
as a formal series, using an expansion such as \eqref{tayloromegaC}
$$ \omega_{\Gamma,\mathbf{m}, \C} = \sum_{\underline{\ell}}
\prod_{e\in E_\Gamma} \frac{B^\C_{\Gamma,\mathbf{m},D,\ell_e}}{(2\pi)^D}
 \,\,\, \beta^\C_{\Gamma,\mathbf{m},D,\ell_e} $$
or the asymptotic expansion as in \eqref{expandomega}
$$ 
\omega_{\Gamma,\mathbf{m}, \C} \sim \sum_{\underline{\ell}}
C^\C_{\Gamma,\mathbf{m},D,\underline{\ell}} \,\,\, \omega_{\Gamma,\mathbf{m},D,\underline{\ell},\C},
$$
One can then consider a regularization $\eta_{\Gamma,\mathbf{m},D,\ell_e}$
of the forms $\beta^\C_{\Gamma,\mathbf{m},D,\ell_e}$,
or, respectively, of the $\omega_{\Gamma,\mathbf{m},D,\underline{\ell},\C}$
and then proceed to the pole subtraction and renormalization of the forms  
$\eta_{\Gamma,\mathbf{m},D,\ell_e}$ according to the Rota--Baxter renormalization
procedure described in this section.

\smallskip
\subsection{The renormalization group}

In the Connes--Kreimer theory of renormalization based on Birkhoff factorization \cite{CK},
the beta function of the renormalization group equation is lifted to the Lie algebra of the
affine group scheme dual to the Hopf algebra of Feynman graphs. In the subsequent
work \cite{CoMa}, it is further lifted to the level of a universal
Hopf algebra, where the universal counterterms are obtained from a {\em universal singular
frame} depending on the beta function, seen as an element in a free Lie algebra. This algebraic
approach to the renormalization group flow was further developed in terms of Dynkin idempotents,
\cite{EGP}, see also \cite{Patras} for a survey of these recent results. 

\smallskip

Let $\cA$ be an associative algebra. 
the Dynkin operator $\cD_n \in \Q[S_n]$ with $S_n$ the symmetric group, is defined by
the property that
\begin{equation}\label{Dnop}
[\cdots [x_1,x_2],\cdots x_n] = x_1 x_2 \cdots x_n \, \cD_n,
\end{equation}
for any collection of elements $x_1,\ldots, x_n \in \cA$, with the symmetric group
acting on the right. One defines $\cD=\sum_{n\geq 0} \cD_n$.

\smallskip

Consider then the group of algebraic Feynman rules $G(\cR)={\rm Hom}_{{\rm Alg}}(\fH,\cR)$
and let $\fg(\cR)$ be the corresponding Lie algebra. It is shown in \cite{EGP} that right 
composition with the Dynkin operator $\cD$ gives a bijective map from $G(\cR)$ to
$\fg(\cR)$, whose inverse map is given by the ``universal frame for renormalization"
in the sense of \cite{CoMa},
\begin{equation}\label{unisingframe}
\fg(\cR) \ni \beta \mapsto \sum_{n\geq 0, k_i >0}
\frac{ \beta_{k_1} \star \cdots \star \beta_{k_n}}{k_1 (k_1+k_2)\cdots (k_1+\cdots+k_n)} \in G(\cR),
\end{equation}
where $\beta=\sum_n \beta_n$ with respect to the grading of the Hopf algebra.

\smallskip

In our setting we obtain an interpretation of the renormalization group by applying
this construction with $\cR$ the Rota--Baxter algebra of differential forms on the
wonderful configuration spaces with logarithmic poles along their boundary divisors.
Namely, one identifies the beta function of the renormalization group with that
element $\beta$ in the Lie algebra $\fg(\cR)$ that is the image under right
composition with the Dynkin operator $\cD$ of the negative piece of
the Birkhoff factorization 
$$ \phi_-(\Gamma)=-T (\eta_\Gamma +\sum_{\gamma\subset \Gamma} 
\phi_-(\gamma) \wedge \eta_{\Gamma/\gamma} ) $$
of the regularized Feynman amplitude $\eta_\Gamma$. Thus, $\beta \in \fg(\cR)$
is determined by the property that
\begin{equation}\label{Dphibeta}
\phi_- \cdot \cD  = \beta \ \ \text{ or equivalently } \ \ 
\phi_- = \sum_{n\geq 0, k_i >0}
\frac{ \beta_{k_1} \star \cdots \star \beta_{k_n}}{k_1 (k_1+k_2)\cdots (k_1+\cdots+k_n)}.
\end{equation}


\section{Beyond Scalar QFTs} \label{nonscalarSec}

The more realistic QFTs, such as gauge theories, involve particles  with spin or polarization and
the propagators for such fields are slightly different from the scalar case. We show that, 
nonetheless, these propagators have similar properties, which are sufficient to describe the
corresponding Feynman integrals, as in scalar case, as periods. 
 
\subsection{Spin 1/2 particles: the Dirac fermions} \label{diracSec}  
In Euclidean quantum electrodynamics, the propagator for a Dirac field representing the electron 
is the fundamental solution of  the Dirac equation
$$
(-i \slashed\partial + m) S(x) = \delta(x)
$$
where $\gamma^\mu$ are  the gamma matrices and $\slashed\partial =  \gamma^\mu \partial_\mu$.
The Green function $S(x)$ can be given as
\begin{eqnarray}\label{DiracG}
S_{m, D}(x ) =  \frac{1}{(2 \pi )^{D}} \int_{\R^{D} } dk \ 
\frac{e^{i k \cdot x }} {-  \slashed k + m} 
=  \frac{1}{(2 \pi )^{D}} \int_{\R^{D} } dk \ e^{i k \cdot x}
\frac{ \slashed k + m} {\| k \|^2 + m} .
\end{eqnarray}

\begin{lem} \label{diracLem}
The  Euclidean Dirac propagator on $\R^D$, with 
$D=2\lambda+2$,  satisfies
\begin{eqnarray*}
S_{m, D} (x) &=& \left(\frac{1}{2 \pi} \right)^{\lambda +1}  (i \gamma^\mu x_\mu)
\frac{m^{\frac{\lambda}{2}} }{\| x \|^{\lambda+1}}
\left(
 \frac{\lambda} {\| x \|} \cK_\lambda (\sqrt{m} \|x \|)
+  \frac{\sqrt{m}} {2 }  ( \cK_{\lambda-1} (\sqrt{m} \|x \|) +  \cK_{\lambda+1} (\sqrt{m} \|x \|))
\right) \\
&+&  \left(\frac{1}{2 \pi} \right)^{\lambda +1} \frac{m^{\frac{\lambda+2}{2}} }{\| x \|^{\lambda}}
\cK_\lambda (\sqrt{m} \|x \|)  .
\end{eqnarray*}
\end{lem}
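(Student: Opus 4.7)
The plan is to express the Dirac propagator as a Dirac operator applied to a scalar Green function and then invoke Lemma \ref{GmBessellem} of Section \ref{PotSec}. Observe that in the momentum representation
\begin{equation*}
S_{m,D}(x) = \frac{1}{(2\pi)^D}\int_{\R^D} dk\, e^{ik\cdot x} \frac{\slashed{k}+m}{\|k\|^2+m}
\end{equation*}
the numerator factors out as a differential operator, since $-i\slashed{\partial}_x\, e^{ik\cdot x} = \slashed{k}\, e^{ik\cdot x}$. Thus
\begin{equation*}
S_{m,D}(x) = (-i\slashed{\partial}_x + m)\, \tilde G_{m,D}(x),
\qquad
\tilde G_{m,D}(x) := \frac{1}{(2\pi)^D}\int_{\R^D} dk\, \frac{e^{ik\cdot x}}{\|k\|^2+m}.
\end{equation*}

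The scalar kernel $\tilde G_{m,D}$ coincides with $G_{\sqrt{m},\R,D}$ in the notation of Lemma \ref{GmBessellem}: its Fourier transform is $(\|k\|^2+m)^{-1}$, which is exactly the massive propagator of that lemma with mass parameter $\sqrt{m}$. Substituting $m \leadsto \sqrt{m}$ into \eqref{GmBessel} and using $D=2\lambda+2$, I obtain
\begin{equation*}
\tilde G_{m,D}(x) = (2\pi)^{-(\lambda+1)}\, m^{\lambda/2}\, \|x\|^{-\lambda}\, \cK_{\lambda}(\sqrt{m}\,\|x\|).
\end{equation*}

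The remaining step is a direct computation of $-i\gamma^\mu \partial_\mu \tilde G_{m,D}(x)$. Using $\partial_\mu\|x\| = x_\mu/\|x\|$ together with the standard Bessel identity $\cK'_\nu(z) = -\tfrac12(\cK_{\nu-1}(z)+\cK_{\nu+1}(z))$, one gets
\begin{equation*}
\partial_\mu \tilde G_{m,D}(x) = -\,(2\pi)^{-(\lambda+1)}\, m^{\lambda/2}\, \frac{x_\mu}{\|x\|^{\lambda+1}}\left(\frac{\lambda}{\|x\|}\cK_\lambda(\sqrt{m}\|x\|) + \frac{\sqrt{m}}{2}\bigl(\cK_{\lambda-1}(\sqrt{m}\|x\|)+\cK_{\lambda+1}(\sqrt{m}\|x\|)\bigr)\right).
\end{equation*}
Multiplying by $-i\gamma^\mu$ produces the first line of the claimed identity, while the term $m\,\tilde G_{m,D}(x)$ gives the second line with the stated exponent $m^{(\lambda+2)/2}$.

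Conceptually this is routine; the only point requiring care is the bookkeeping of the mass convention. The momentum-space denominator here is $\|k\|^2+m$ rather than $\|k\|^2+m^2$, so one must apply Lemma \ref{GmBessellem} with the substitution $m\mapsto\sqrt{m}$, which is responsible for the appearance of $\sqrt{m}$ inside all the Macdonald functions and of the exponents $m^{\lambda/2}$ and $m^{(\lambda+2)/2}$ outside. Aside from this substitution, the proof is essentially a single application of the chain rule combined with the Bessel derivative identity.
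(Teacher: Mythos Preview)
Your proof is correct and follows essentially the same approach as the paper: express $S_{m,D}=(-i\slashed\partial+m)G_{\sqrt m,D}$, substitute the Bessel form of the scalar propagator from Lemma~\ref{GmBessellem}, and differentiate using $\cK'_\nu=-\tfrac12(\cK_{\nu-1}+\cK_{\nu+1})$. You have simply written out in full the chain-rule computation and the $m\mapsto\sqrt m$ bookkeeping that the paper leaves implicit.
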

  
\begin{proof}
A simple computation using \eqref{DiracG} gives the Dirac propagator 
$S_{m, D}$ on $\R^D$ in terms of the scalar
field propagator as follows:
$$
S_{m, D} = ( - i \slashed \partial + m) G_{\sqrt{m},D}.
$$
Then, we substitute the scalar propagator $G_{m,D}$ given in Lemma 
\ref{GmBessellem} and we use the identity 
$$
\frac{\partial}{\partial z} K_\lambda(z) = - \frac{1}{2} \left( \cK_{\lambda+1}(z) + \cK_{\lambda-1}(z)  \right). 
$$
\end{proof}  
    
\begin{cor}
The Dirac propagator $S_{m, D}(x)$ is smooth on $\R^D \setminus \{0\}$ with  $D=2 \lambda +2$.
Moreover, it diverges at $x=0$, and $S_{m, D}(x) \to 0 $ as $\| x \| \to \infty$. 
\end{cor}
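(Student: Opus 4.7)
The statement follows directly from the explicit Bessel-function expression obtained in Lemma \ref{diracLem}, combined with the properties of the modified Bessel function $\cK_\nu(z)$ already exploited in Corollary \ref{CorInf}. My plan is to treat the three claims (smoothness off the origin, divergence at the origin, decay at infinity) in turn by reading off the behaviour of each factor in the Lemma \ref{diracLem} formula.

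\textbf{Smoothness on $\R^D\setminus\{0\}$.} The formula of Lemma \ref{diracLem} expresses $S_{m,D}(x)$ as a finite sum of terms, each of which is a product of (i) a polynomial in the coordinates $x_\mu$, (ii) a negative power of $\|x\|$, and (iii) a modified Bessel function $\cK_\nu(\sqrt m\,\|x\|)$ with $\nu\in\{\lambda-1,\lambda,\lambda+1\}$. On $\R^D\setminus\{0\}$ the function $\|x\|$ is smooth and strictly positive, its negative powers are smooth, and since $\cK_\nu$ is analytic on $\C^*$ the composition $\cK_\nu(\sqrt m\,\|x\|)$ is smooth. Thus $S_{m,D}$ is smooth on $\R^D\setminus\{0\}$. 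Equivalently, one could use the identity $S_{m,D}=(-i\slashed\partial+m)G_{\sqrt m,D}$ from the proof of Lemma \ref{diracLem}, together with the smoothness of $G_{m,D}$ established in Corollary \ref{CorInf}.

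\textbf{Divergence at the origin.} Near $z=0$ the standard expansion of the Macdonald function used in Lemma \ref{TaylorProp} gives $\cK_\nu(z)\sim \tfrac12\Gamma(\nu)(z/2)^{-\nu}$ for $\nu>0$ (with a logarithmic correction when $\nu=0$). Substituting $z=\sqrt m\,\|x\|$ in the expression of Lemma \ref{diracLem} and tracking the powers of $\|x\|$ in each term, one sees that the leading behaviour as $\|x\|\to 0$ is of the form $c_1\,x_\mu/\|x\|^{D}+c_2/\|x\|^{D-2}$ with nonzero coefficients $c_1,c_2$ (up to the logarithmic modifications when $\lambda-1=0$, which do not change the conclusion). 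In particular $S_{m,D}(x)\to\infty$ as $x\to 0$.

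\textbf{Decay at infinity.} For the behaviour at infinity I would use the asymptotic expansion of Lemma \ref{Gmdistancelem}, $\cK_\nu(z)\sim\sqrt{\pi/(2z)}\,e^{-z}$ as $z\to\infty$, applied to each term of the Lemma \ref{diracLem} expression with $z=\sqrt m\,\|x\|$. Every factor $\cK_\nu(\sqrt m\,\|x\|)$ contributes an exponential $e^{-\sqrt m\,\|x\|}$, which dominates the polynomial factors $x_\mu$ and negative powers of $\|x\|$, so that each summand tends to $0$ as $\|x\|\to\infty$, and hence so does $S_{m,D}(x)$.

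There is no real obstacle here: the content of the corollary is a direct reading of the Bessel-function formula of Lemma \ref{diracLem} against the small-argument and large-argument asymptotics of $\cK_\nu$ that have already been used earlier in the paper. The only mildly delicate point is bookkeeping when $\lambda$ is such that one of the indices $\lambda-1,\lambda,\lambda+1$ equals $0$, in which case the near-origin expansion of $\cK_0$ carries a $\log$ term; this still diverges, so the stated conclusions are unaffected.
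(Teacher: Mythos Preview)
Your proposal is correct and matches the paper's intent: the corollary is stated without proof, immediately after Lemma~\ref{diracLem}, and is meant to follow directly from that explicit Bessel-function formula together with the small- and large-argument behaviour of $\cK_\nu$ already used in Corollary~\ref{CorInf} and Lemmas~\ref{TaylorProp}, \ref{Gmdistancelem}. Your treatment of the three claims is exactly the argument the paper is implicitly invoking.
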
  

\subsection{Spin 1 particles: the gauge bosons} \label{bosonSec}    
The propagator of a spin 1 boson with mass $m$ is given by 
\begin{eqnarray}\label{bosonG}
\triangle^{\mu \nu}_{m, D}(x ) =     \frac{1}{(2 \pi )^{D}} 
\int_{\R^{D} } dk \ e^{i k \cdot x}
\left( 
\frac{g_{\mu\nu} + \frac{k_\mu k_{\nu}}{m^2}} {\| k \|^2 + m}
-  \frac{  \frac{k_\mu k_{\nu}}{m^2}} {\| k \|^2 + \frac{m}{\alpha}}
\right)
\end{eqnarray}
in the Stueckelberg gauge, depending on a gauge parameter $\alpha$.

\begin{lem} \label{bosonLem}
The propagator for a massive vector field can be derived from the scalar 
field propagator $G_{m,D}$:
\begin{eqnarray*}
\triangle^{\mu \nu}_{m, D}(x ) 
= g_{\mu\nu}  G_{\sqrt{m},D} (x) + \frac{1}{m^2} 
\left( 
\partial_\mu \partial_\nu G_{\sqrt{\frac{m}{\alpha}},D} (x) 
- \partial_\mu \partial_\nu G_{\sqrt{m},D} (x) 
\right).
\end{eqnarray*}
Therefore, the  propagator $\triangle^{\mu \nu}_{m, D}(x ) $ is smooth on $\R^D \setminus \{0\}$ with  
$D=2 \lambda +2$.
Moreover, it diverges at $x=0$, and $\triangle^{\mu \nu}_{m, D}(x )  \to 0 $ as $\| x \| \to \infty$. 
\end{lem}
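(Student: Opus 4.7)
\medskip

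My plan is to reduce the identity to the defining Fourier representation \eqref{bosonG} and then split the integrand into three pieces according to the terms $g_{\mu\nu}$, $k_\mu k_\nu/m^2$ in the numerator of the first summand, and $-k_\mu k_\nu/m^2$ in the second summand. The first piece is, by definition, $g_{\mu\nu}$ times the inverse Fourier transform of $(\|k\|^2+m)^{-1}$, which is exactly $g_{\mu\nu}\,G_{\sqrt{m},D}(x)$ by the Fourier-transform characterization \eqref{BesselGm} of the scalar propagator with mass parameter $\sqrt{m}$ (so that $m^2$ in \eqref{BesselGm} is replaced by $m$ here).

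For the remaining two pieces I would use the elementary identity $k_\mu k_\nu\,e^{ik\cdot x}=-\partial_\mu\partial_\nu\,e^{ik\cdot x}$ and pull the derivatives outside the $k$-integral. Because the integrands defining $G_{\sqrt{m},D}$ and $G_{\sqrt{m/\alpha},D}$ are tempered distributions and the Fourier transform commutes with differentiation in the Schwartz sense, this exchange is legitimate and yields
\begin{equation*}
\frac{1}{(2\pi)^D}\int_{\R^D}dk\,e^{ik\cdot x}\,\frac{k_\mu k_\nu/m^2}{\|k\|^2+m}
=-\frac{1}{m^2}\,\partial_\mu\partial_\nu G_{\sqrt{m},D}(x),
\end{equation*}
and analogously with $m$ replaced by $m/\alpha$ for the third piece. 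Summing the three contributions produces exactly the claimed identity. Here the only mildly delicate point is to keep track of the correct square root of the mass parameter, since in \eqref{BesselGm} the mass enters as $m^2$ while in \eqref{bosonG} it enters linearly.

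Given the closed-form expression, the remaining assertions about smoothness, divergence at $x=0$, and decay at infinity follow directly from the corresponding statements for $G_{m,D}$. The smoothness on $\R^D\setminus\{0\}$ is inherited from Corollary \ref{CorInf}, together with the observation that smooth partial derivatives of a smooth function remain smooth on the same open set. The decay as $\|x\|\to\infty$ is inherited from the same corollary: the exponential factor $e^{-\sqrt{m}\,\|x\|}$ in the Bessel-function expression \eqref{GmBessel} dominates any algebraic prefactor, and a direct differentiation under the Bessel-function identity $\partial_z\cK_\nu(z)=-\tfrac{1}{2}(\cK_{\nu-1}(z)+\cK_{\nu+1}(z))$ shows that $\partial_\mu\partial_\nu G_{\sqrt{m},D}(x)$ has the same exponential decay. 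For the divergence at the origin, one can use either the series expansion of Lemma \ref{TaylorProp} or Corollary \ref{CorDiag}: the scalar propagator blows up like $\|x\|^{-2\lambda}$ (up to a logarithmic correction when $D$ is even), so the second partial derivatives diverge like $\|x\|^{-2\lambda-2}$ and thus dominate the bounded-from-below contribution of $g_{\mu\nu}G_{\sqrt{m},D}$.

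I do not expect any substantial obstacle. The formula is essentially a bookkeeping of Fourier identities already used in Section \ref{PotSec}; the qualitative analytic properties are pulled back verbatim from the scalar case via Corollaries \ref{CorDiag} and \ref{CorInf}. The one point that merits care is the justification of differentiating the Fourier representation of $G_{\sqrt{m},D}$ and $G_{\sqrt{m/\alpha},D}$ under the integral sign; this is standard since these Green functions are smooth away from the origin and the derivatives can be evaluated pointwise there by the dominated convergence theorem applied to suitable cut-off versions of the momentum integrals, with the exponential factor from the Bessel representation providing uniform decay estimates on compact subsets of $\R^D\setminus\{0\}$.
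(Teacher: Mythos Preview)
Your approach is correct and is precisely the one implicit in the paper. In fact, the paper states Lemma~\ref{bosonLem} without proof; the intended argument is clearly the one you give, since it parallels the proof supplied for the analogous Dirac case in Lemma~\ref{diracLem}: rewrite the momentum-space integrand in \eqref{bosonG} so as to recognize copies of the scalar Fourier representation \eqref{BesselGm}, trade factors of $k_\mu k_\nu$ for $-\partial_\mu\partial_\nu$ acting on $e^{ik\cdot x}$, and then read off the qualitative analytic behavior from Corollaries~\ref{CorDiag} and~\ref{CorInf}. Your sign bookkeeping and the handling of the square-root mass parameters match the stated formula.
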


In other words, the Dirac propagators and the massive gauge bosons propagators  
can be related to propagators for the scalar case,
hence any Feynman integrand involving an amplitude constructed out of these
two types of propagators can also be interpreted in terms of periods of 
mixed Tate motives,  following the arguments we developed for the scalar case.

\smallskip

\subsection*{Acknowledgments} 
The first author is partially supported by PCIG11-GA-2012-322154. 
The second author acknowledges support from NSF grants DMS-0901221, DMS-1007207, 
DMS-1201512, and PHY-1205440 and the hospitality and support of the Mathematical
Sciences Research Institute in Berkeley and of the Kavli Institute for Theoretical Physics China
and the Morningside Center for Mathematics in Beijing. The second author also thanks Li Guo for
useful conversations.

\end{document}